\title{Universal Solvability for Robot Motion Planning on Graphs} %TODO Please add
\titlerunning{Universal Solvability for Robot Motion Planning on Graphs} %TODO optional, please use if title is longer than one line
\author{Anubhav Dhar}{Indian Institute of Technology Kharagpur, India}{anubhavldhar@gmail.com}{https://orcid.org/0009-0006-5922-8300}{}
\author{Ashlesha Hota}{Indian Institute of Technology Kharagpur, India}{ashleshahota@gmail.com}{https://orcid.org/0009-0009-8805-4583}{}
\author{Sudeshna Kolay}{Indian Institute of Technology Kharagpur, India}{skolay@cse.iitkgp.ac.in}{https://orcid.org/0000-0002-2975-4856}{}
\author{Pranav Nyati}{Indian Institute of Technology Kharagpur, India}{pranavnyati26@gmail.com}{https://orcid.org/0009-0007-6792-2420}{}
\author{Tanishq Prasad}{Indian Institute of Technology Kharagpur, India}{tanishqprasad1003@gmail.com}{https://orcid.org/0009-0007-5893-1317}{}
\authorrunning{A. Dhar, A. Hota, S. Kolay, P. Nyati, T. Prasad} %TODO mandatory. First: Use abbreviated first/middle names. Second (only in severe cases): Use first author plus 'et al.'
\keywords{Robot Motion Planning, Equivalence classes, Randomized Algorithms, Universal Solvability } %TODO mandatory; please add comma-separated list of keywords
\newcommand{\defproblem}[3]{
  \vspace{1mm}
\noindent\fbox{
  \begin{minipage}{0.96\textwidth}
  \begin{tabular*}{\textwidth}{@{\extracolsep{\fill}}lr} #1 \\ \end{tabular*}
  {\bf{Input:}} #2  \\
  {\bf{Question:}} #3
  \end{minipage}
  }
  \vspace{1mm}
}
\newcommand{\amap}[1]{{\mathcal{T}_{#1(R)}}}
\newcommand{\amapset}[1]{{\mathcal{T}_{#1}}}
\newcommand{\SOLR}{\textsc{Universal Solvability of Robot Motion Planning on Graphs (USolR)}}
\newcommand{\OO}{\mathcal{O}}
\newcommand{\USOLR}{\textsc{USolR}\xspace}
\newcommand{\FRMP}{\textsc{FRMP}\xspace}
\newcommand{\Ex}{\mathcal{E}}
 \definecolor{babyblue}{rgb}{0.54, 0.81, 0.94}
 \definecolor{b1}{rgb}{0.63, 0.79, 0.95}
 \definecolor{b2}{rgb}{0.74, 0.83, 0.9}
 \definecolor{b3}{rgb}{0.67, 0.9, 0.93}
 \definecolor{gentlegreen}{rgb}{0.00, 0.51, 0.00}
\begin{document}

\maketitle
% \nolinenumbers

%%%%%%%%%%%%%%%%%%%%%%%%%%%%%%%%%%%%%%%%%%%%%%%%%%%%%
\begin{abstract}
We study the $\SOLR$ problem: given an undirected graph $G = (V, E)$ and $p \leq |V|$ robots, determine whether any arbitrary configuration of the robots can be transformed into any other arbitrary configuration via a sequence of valid, collision-free moves.

We design a canonical accumulation procedure that maps arbitrary configurations to configurations that occupy a fixed subset of vertices, enabling us to analyse configuration reachability in terms of equivalence classes. We prove that in instances that are not universally solvable, at least half of all configurations are unreachable from a given one, and leverage this to design an efficient randomized algorithm with one-sided error, which can be derandomized with a blow-up in the running time by a factor of $p$. Further, we optimize our deterministic algorithm by using the structure of the input graph $G(V, E)$, achieving a running time of $\mathcal{O}(p\cdot (|V| + |E|))$ in sparse graphs and $\mathcal{O}(|V| + |E|)$ in dense graphs.

Finally, we consider the \textsc{Graph Edge Augmentation for Universal Solvability (EAUS)} problem where given a connected graph $G$ that is not universally solvable for $p$ robots, the question is to check if for a given budget $\beta$, at most $\beta$ edges can be added to $G$ to make it universally solvable for $p$ robots. We provide an upper bound of $p-2$ on $\beta$ for general graphs. On the other hand, we also provide examples of graphs that require  $\Theta (p)$ edges to be added in order to achieve universal solvability for $p$ robots. Next, we consider the augmentation problem \textsc{Graph Vertex and Edge Augmentation for Universal Solvability (VEAUS)} involving the addition of $\alpha$ vertices along with the addition of $\beta$ many edges to make a graph universally solvable for $p$ robots, and we provide further lower bounds for $\alpha$ and $\beta$.
\end{abstract}
\newpage
%----------------------------------------------
\section{Introduction}\label{sec:intro}
%\todo{Pranav: 'Solvability' might not be a good keyword. Some keyword suggestions: 'Reconfiguration problem', 'Robot Motion Planning', 'Randomized Algorithms'; Ma'am, can you check if the keywords are okay ?}
Coordinating the motion of multiple autonomous robots is one of the fundamental problems in robotics, artificial intelligence, and computational geometry. The broad aim is to coordinate multiple robots to navigate their environment without collisions. In this study, we model the environment, or workspace, as an undirected graph \( G(V, E) \), where the set \( V \) denotes the set of $n$ vertices representing discrete locations that the robots can occupy, and the edges \( E \) represent $m$ paths connecting two distinct locations. We consider \( p \leq |V| \) robots, where each robot occupies exactly one vertex, and no two robots occupy the same vertex at any time. At each step, a robot is allowed to either stay at its current vertex or move only to an adjacent vertex.
% At each step, only one robot is allowed to move, and its move must be to an adjacent vertex. 
Additionally, two robots occupying adjacent vertices cannot swap positions along the same edge in one move.
A configuration of the system is defined as a snapshot at time \( t \), specifying the positions of all robots on the graph at that time. A move is defined as a transition between two such configurations. Let \( S \) denote the initial configuration, and let \( T \) denote the final desired configuration. We require the robots to move from \( S \) to \( T \) without any collisions or deadlocks.

To ground this in a practical setting, consider a warehouse where autonomous mobile robots move items between storage shelves and packing stations. The warehouse can be modeled as a graph \( G(V, E) \), where each vertex represents a location (such as storage shelves and packing stations), and each edge denotes a traversable path between two locations (such as lanes where the robots move). Suppose there are \( p \) robots operating simultaneously, each occupying a unique location. Over time, the system may need to rearrange the robots from one configuration \( S \) (e.g., initial positions) to any other configuration \( T \) (e.g., task-specific positions). To perform the desired task or operation efficiently, the system requires that such transformations be possible for any pair of configurations. This motivates us to solve the $\SOLR$ problem. 

We study the problem $\USOLR$ in which we address the following decision question: Given a graph $G(V, E)$ and an integer $p$, is the system capable of transforming any configuration $S$ to any other configuration $T$ of $p$ robots such that there exists a sequence of valid moves from $S$ to $T$? We denote graphs satisfying such a property as \emph{universally solvable graphs}. However, not all graphs are universally solvable. There are graph structures, such as cut edges, %articulation points, 
that prevent certain configurations from being reachable from other configurations. In such %, leading to disconnected components in the configuration space. In such 
cases, the graph $G$ is not universally solvable, %does not admit a solution to the $\USOLR$ problem, 
as there exist configurations $S$ and $T$ between which no valid sequence of moves exists.

This raises a natural follow-up question: Can we augment a graph $G$ which is not universally solvable for $p$ robots by adding a small number of edges so that the resulting graph becomes universally solvable? More precisely, given a budget $\beta$, we seek to determine if it is possible to make $G$ universally solvable for $p$ robots by adding at most $\beta$ edges.
It is also interesting to study the following question: Consider a graph that is not universally solvable and cannot achieve it through the addition of a small number of edges. Is universal solvability for such a graph possible if we also allow addition of $\alpha$ vertices, along with addition of $\beta$ many edges. For example, for simple graphs like star graphs with a robot placed on each vertex, it is possible to show that while the graph is neither universal solvable nor does it become universally solvable under a small budget of edge additions, it becomes universally solvable upon addition of just two vertices as neighbours only to the central vertex.
% {\color{red}Designing such minimal augmentations that achieve full reconfigurability but also have minimal structural changes are an important consideration in practical settings like warehouses. A warehouse can be abstracted as a graph structure with vertices representing storage shelves and packing stations, and edges representing traversable paths between them. As physical modifications to a warehouse may be costly, but full reconfigurability is also important, this question is of importance in such settings.}
Designing such minimal augmentations that achieve full reconfigurability but also have minimal structural changes are an important consideration in practical settings like warehouses, as physical modifications may be costly.

The motivation for our work also stems from prior studies that investigate coordinated motion planning in restricted environments, such as trees~\cite{auletta1999linear,masehian2009solvability} and bi-connected graphs~\cite{kornhauser1984coordinating,wilson1974graph}. These works provide efficient algorithms for checking the feasibility of configuration transitions on a given pair of $S$ and $T$; however, they typically do not address whether \emph{all} configurations are mutually reachable. While group-theoretic approaches have been explored in~\cite{kornhauser1984coordinating} to examine graph classes for which any pair of configurations over the same vertex set are mutually reachable, the notion of \textit{universal solvability}, a property that guarantees the system can adapt to any dynamic task assignment (i.e., can reconfigure from any configuration to any other), remains underexplored, especially in general graphs. In this paper, we aim to bridge this gap by analysing the $\USOLR$ problem from both a structural and algorithmic perspective for general graphs. 
%\todo{SK: from her onwards, it seems like a gist of the contributions portion, and also a repetition of the abstract. Maybe remove this?
%\textbf{Pranav: done (marked red), will remove in the end.}} {\color{red}We introduce an accumulation process that maps any configuration to a canonical form, enabling structural analysis via permutations and equivalence classes. 
%Leveraging group-theoretic insights, w
%By analyzing the sizes of the equivalence classes of reachable configurations, we prove that in any graph that is not universally solvable, %unsolvable instance, 
%at least half of the configurations are unreachable from a given one. Building on this, we design a randomized algorithm with one-sided error and then derandomize it with a blow-up by a factor of $p-1$.
%deterministic algorithm that reduces the number of reachability checks from factorial to linear in $p$. 
%Finally, we present an optimized deterministic algorithm which takes into consideration the sparsity of the input graph, achieving linear time for dense graphs.} \todo{SK: add about augmenting problem.}
% \todo{Pranav: Need to make all references uniform; i.e., remove doi from all, etc.}

\subsection{Related Work}
\begin{comment}
Robot motion {\color{blue}planning} and path finding have received significant attention, both in theoretical \todo{Anubhav: cite ;\\ Ashlesha: citing few survey papers at this spot, Pranav: do we write significant attention over the past two/three decades ?} and applied contexts \todo{Anubhav: cite}~\cite{jahanshahi2018robot,qin2023review,ismail2018survey}. Closely related problems include token swapping~\cite{yamanaka2015swapping,bonnet2018complexity}, pebble motion on graphs~\cite{kornhauser1984coordinating,surynek2009application,goraly2010multi}, where labeled \todo{Anubhav: cite} or indistinguishable tokens \todo{Anubhav: cite} are moved under certain constraints, and reconfiguration problems~\cite{ITO20111054,kato2021pebble}, which seek paths in a state space defined by local moves. These problems arise in robotics, motion planing, puzzle design\todo{Anubhav: cite $n^2 - 1$ thingy} and distributed systems, and have motivated extensive research on their complexity, solvability, and algorithmic strategies. We discuss a few key contributions below.
\end{comment}
Robot motion planning and path finding have received significant attention, both in theoretical and applied contexts over the past few decades~\cite{ismail2018survey,jahanshahi2018robot,qin2023review}. Closely related problems are: token swapping~\cite{bonnet2018complexity,yamanaka2015swapping} and pebble motion on graphs~\cite{fujita2015pebble,goraly2010multi, kato2021pebble,kornhauser1984coordinating,surynek2009application}. In the token swapping problem, the tokens are swapped on the endpoints of an edge whereas in the pebble motion problem, the pebbles move to a neighbouring unoccupied vertex.
More recently, coordinated motion planning has been studied with respect to minimising various target objectives, such as the makespan of the schedule of the robots~\cite{7949109,eiben_et_al:LIPIcs.SoCG.2023.28,7342901} and the total distance travelled by all robots~\cite{deligkas_et_al:LIPIcs.ICALP.2024.53, eiben_et_al:LIPIcs.SoCG.2023.28,10.5555/3535850.3535905,7342901}. 
Problems related to robot motion planning arise in robotics~\cite{chung2018survey}, motion planing~\cite{demaine2019coordinated}, puzzle design~\cite{loyd1959mathematical}, and distributed systems~\cite{1612674}, and have motivated extensive research on their complexity, solvability, and algorithmic strategies. Now, we discuss a few key contributions related to the feasibility variants of robot motion planning.

%1984 Coordinating Pebble motion on graphs, the diameter of permutation groups and applications.
Kornhauser, Miller, and Spirakis~\cite{kornhauser1984coordinating} study the {\sc Pebble Motion} problem on graphs, where \( p < n \) labelled pebbles are placed on distinct vertices of an $n$-vertex graph \( G \), and the goal is to reach a target configuration via moves to adjacent unoccupied vertices. They reduce the problem to a \emph{permutation group} setting, where valid move sequences form a group \( R(P) \). For various graph classes (e.g., biconnected, separable, trees), \( R(P) \) is shown to be transitive or decomposable into transitive components. The authors prove that \( R(P) \) contains either the \emph{alternating group} \( \mathbf A_p \) or the \emph{symmetric group} \( \mathbf S_p \), depending on the structure of \( G \), and show that when a generator of prime length is available, the diameter of the group is sub-exponential, linking algebraic properties with motion planning. Auletta, Monti, Parente, and Persiano~\cite{auletta1999linear} address the {\sc Pebble Motion} problem on trees and present a linear-time algorithm that reduces it to the {\sc Pebble Permutation} problem, where the set of occupied vertices remains the same and the goal is to realise a given permutation of the pebbles in these vertices through a sequence of valid moves. Their approach yields an \( \OO(n) \)-time decision algorithm and an \( \OO(n + \ell) \)-time constructive algorithm, where \( \ell \) is the length of the move sequence and $n$ is the number of vertices.

%Pebble Motion on Graphs with Rotations: Efficient Feasibility Tests and Planning Algorithms,  Jingjin Yu, Daniela Rus
Yu and Rus~\cite{yu2014pebblemotiongraphsrotations} generalise the classical {\sc Pebble Motion} problem to \textit{pebble motion with rotations} ({\sc PMR}) for $p$ pebbles on an $n$-vertex graph, where in addition to simple moves into unoccupied vertices (\( p < n \)), synchronous cyclic rotations along disjoint cycles are also allowed. In the fully packed case (\( p = n \)), only rotations are feasible. They model reachable configurations as elements of a subgroup \( \mathbf G \le \mathbf S_n \), generated by these rotations. For 2-edge connected but not 2-connected graphs, they prove that \( \mathbf G \) contains either the alternating group \( \mathbf A_n \) or the symmetric group \( \mathbf S_n \), and establish an upper bound of \( \text{diam}(\mathbf G) = \OO(n^2) \), ensuring reachability in polynomial time. These results extend to general 2-edge-connected graphs and enable a unified framework where feasibility can be tested in linear time, and a complete plan can be computed in \( \OO(n^3) \) time for all values of \( p \leq n \).

% Colored pebble motion on graphs
Goraly and Hassin~\cite{goraly2010multi} study the {\sc Multi-colour Pebble Motion} problem, where pebbles of different colours must be rearranged to match a target colour configuration. They extend earlier work by analysing the coloured setting and provide a linear-time feasibility test. Their method reduces the problem to a tree with transshipment vertices (vertices that facilitate movement, however, cannot hold pebbles). Adapting the algorithm by Auletta et al.~\cite{auletta1999linear}, they use equivalence classes of vertices to efficiently handle colour constraints and movement rules, resolving an open problem from earlier literature.

Masehian and Nejad~\cite{masehian2009solvability} investigate {\sc Multi-Robot Motion Planning} on trees, aiming to determine whether any initial configuration can reach any target configuration without collisions. They introduce the notion of solvable trees, partially solvable trees, and minimal solvable trees, and analyse conditions under which global and local interchanges are feasible. Their approach uses the concept of influence zones and proposes a linear-time algorithm based on computing each robot's maximum reachability space.

While most existing works focus on checking the feasibility of transforming a specific start configuration \( S \) into a target configuration \( T \), relatively little attention has been paid to the question of \textit{universal solvability}: whether all configurations on a given graph are mutually reachable. This motivates our study, which aims to analyse the universal solvability of general graphs and bridge this gap in the literature.

\subsection{Our Contributions}

In this work, we address the $\USOLR$ problem: given an undirected graph $G(V, E)$ and an integer $p \leq |V|$, we ask whether all configurations of $p$ robots placed on distinct vertices can be transformed into each other using valid moves, without collisions or swaps. We model configurations as injective maps from robots to vertices. Valid moves are defined as synchronous motions along either simple paths or simple cycles. 

The group theoretic results in the works of Yu and Rus~\cite{yu2014pebblemotiongraphsrotations} motivate us to look into equivalence classes of configurations that are reachable from each other. 
%Lagrange's theorem in group theory~\cite{roth2001history} that relates the size of a group to its subgroups, suggests that the number of reachable configurations (from any specific configuration) would be a factor of the number of all possible configurations, for the case $p=n$. 
We prove that the sizes of all such equivalence classes are equal. This results in the fact that the number of reachable configurations is a factor of the number of all possible configurations when $p=|V|$.
For $p \le |V|$, we use a similar argument to arrive at the same conclusion. 
%However, we work with just equivalence classes of mutually reachable configurations instead of a group theoretic approach. 
This result is used to design a simple and efficient randomized algorithm solving $\USOLR$. 

To analyse this problem structurally, we introduce a deterministic accumulation procedure in~\cref{sec:accumulation} that transforms any configuration to a canonical one where robots occupy a fixed vertex set $V_p \subseteq V$, $|V_p| = p$. This accumulation procedure is not used as a subroutine in the algorithm design but used for analyzing the algorithms later on.
Recall that our definition of the equivalence relation over configurations is based on mutual reachability and we show that each equivalence class has the same size. Using this, we prove that in any instance that is not universally solvable, at least half of the configurations are unreachable from a fixed configuration. 

With this insight, we design a randomized algorithm that samples a configuration uniformly at random and uses a linear-time subroutine to test reachability from the identity configuration; this algorithm has a one-sided error and runs in $\mathcal{O}(|V| + |E|)$ time (\cref{sec:randomized}). While the randomized algorithm is not directly used in the final deterministic solution, it is independently interesting due to its linear-time performance on general graphs. This is in sharp contrast to the best known deterministic algorithm for universal solvability, which requires quadratic time even for planar graphs \cite{yu2014pebblemotiongraphsrotations}.

In~\cref{sec:deterministic-algo}, we derandomize the previous algorithm that reduces the number of required reachability checks from $p!$ (i.e., a trivial derandomization of the previous algorithm) to just $p - 1$ by observing that reachability of adjacent transpositions (swaps between robots $t$ and $t+1$ for some $t$) suffices to check universal solvability. 

We also propose an optimized algorithm that takes into account structural graph properties (such as connectivity and the existence of sufficiently large 2-connected components). This algorithm runs in $\mathcal{O}(p \cdot (|V| + |E|))$ time in the worst case and improves to $\mathcal{O}(|E|)$ when the graph is sufficiently dense with $|E| = \omega(p \cdot |V|)$ (\cref{sec:optimised_deterministic}). 

Further, we consider the \textsc{Graph Edge Augmentation for Universal Solvability (EAUS)} problem where given a connected graph $G$ that is not universally solvable for $p$ robots, the question is to check if for a given budget $\beta$ at most $\beta$ edges can be added to $G$ to make it universally solvable for $p$ robots (\cref{sec:graph-augment}). For a graph that is 2-edge-connected but not 2-connected, it may not necessarily be universally solvable for $p=|V|$ robots. However, we show that the addition of a single edge ensures universal solvability in such case. In case of simple cycles, we show that the addition of a chord suffices to render the given instance universally solvable for all $p\leq |V|$. In contrast, for 1-edge-connected graphs, the augmentation problem becomes more challenging. We prove that $\Theta (p)$ edges are required in the worst case for star graphs. On the positive side, we show that $\beta \leq p-2$ for all graphs.

Finally, we consider the problem {\sc Graph Vertex and Edge Augmentation for Universal Solvability (VEAUS)} where along with $\beta$ edge additions, we allow the addition of an $\alpha$ vertices to the input graph and ask if it is sufficient to make the graph universally solvable for $p$ robots (also in~\cref{sec:graph-augment}). We provide lower bounds for $\alpha$ and $\beta$ for this problem. An interesting consequence of these lower bounds is that there is an infinite class of graphs for which a budget of $\alpha = \Omega(\sqrt{|V|})$ is required, even when $\beta = \Theta(\sqrt{|V|})$, for universal solvability of a graph $G(V,E)$ with $p = |V|$ robots.

%Due to paucity of space, proofs of some results are moved to the Appendix. Such results are marked with $\star$.

\section{Preliminaries}\label{sec:prelims}

\subparagraph*{Basic Notations.} We denote the set of positive integers by $\mathbb{N}$ and the set of integers by $\mathbb{Z}$. For $k \in \mathbb{N}$, we denote the set $\{1, 2, \ldots, k\}$ by $[k]$. For a set $S$, we denote its power set (family of all subsets of S) by $2^S = \{S' \mid S' \subseteq S\}$. For $k \in \mathbb N$, we denote the family of subsets of $S$ of size $k$ by $\binom{S}{k}$. For a function $f:X \to Y$, and a subset $X' \subseteq X$, the set of images of $X'$ is denoted by $f(X') = \{f(x) \mid x \in X'\}$. Further, let $\mathfrak{S}_p$ be the set of all permutations from $[p]$ to $[p]$ ($| \mathfrak{S}_p| = p!)$. For functions $f: X_1 \to X_2$, and $g: X_2 \to X_3$, we denote their composition as $f \circ g : X_1 \to X_3$, i.e., $f \circ g (x) = f(g(x))$ for all $x \in X_1$.

\subparagraph*{Graph Theory.} We denote by $G(V,E)$, an undirected graph with vertex set $V$ and edge set $E$. Throughout this paper, we only consider undirected graphs. A cut vertex (resp. edge) of a graph is a vertex (resp. edge) whose deletion increases the number of connected components. The edge connectivity, $\lambda(G)$ of a graph $G(V, E)$ is the minimum number of edges required to be deleted to make the graph disconnected. We say that a graph $G(V, E)$ is $k$-edge connected if $\lambda(G) \ge k$. The vertex connectivity, $\kappa(G)$ of a graph $G(V, E)$, is the minimum number of vertices required to be deleted to make the graph disconnected. We say that a graph $G(V, E)$ is $k$- connected if $\kappa(G) \ge k$. 

\subparagraph*{Motion Planning.} We consider an underlying undirected graph $G(V,E)$ with $|V| = n$ and $|E| = m$. Consider a set $R$ of $p \in \mathbb N$ robots. For convenience, we assume the robots are named $1$, $2$, $\ldots$, $p$; hence $R = [p]$. We use both $R$ and $[p]$ to denote the set of robots, depending on the context. A \emph{configuration}, $S: R \to V$ is an injective map from the set of robots to the set of vertices of the underlying graph $G(V, E)$; for robot $i$, $S(i)$ denotes the vertex occupied by it. The configuration is an injective map as it models the constraint that no two robots can occupy the same vertex in a configuration, i.e., $i \neq j \implies S(i) \neq S(j)$. We denote by $\mathcal S_G = \{S : R \to V \mid S \text{ is a configuration}\}$ the set of all configurations of the robots in $R$ that are possible on the graph $G$. Note that $|\mathcal{S}_G| = \frac{n!}{(n-p)!}$.
% Further, let $S_I$ be the identity configuration, i.e., $S_I(i) = v_i, \forall i \in [p]$.

\subparagraph*{Valid Moves.} We now formally define the notion of valid/feasible moves in terms of pairs of configurations that can be reached from each other by one such move. We introduce three different types of valid moves as follows:
\begin{itemize}
    \item \textbf{Simple Path Move:} A simple path move is defined as a pair of configurations that can be achieved from one another by sliding robots along a path synchronously, as shown in \cref{fig:simple_path}. Formally, for configurations $S, S'$, the pair $(S,S')$ defines a simple path move if there is a path $P = (u_0, u_1, u_2, \ldots, u_k)$ in $G$ such that, 
    \begin{itemize}
        \item $u_k \notin S(R)$ (empty in $S$) and $u_0 \notin S'(R)$ (empty in $S'$).
        \item For all $i \in R$ such that $S(i) \notin P$, we have $S'(i) = S(i)$ (robots outside the path $P$ do not move).
        \item For all $i \in R$ such that $S(i) = u_j$, with $j \in \{0, 1, \ldots, k-1\}$, we have $S'(i) = u_{j+1}$ (robots in $P$ synchronously move one unit each).
    \end{itemize}
    We say that $S'$ is obtained by pushing $S$ along the path $P$. 
    % \cref{fig:simple_path} depicts a simple path move.
    % \todo{Pranav: We can add diagrams depicting a simple move and a simple rotation in the final paper.} 

    \begin{figure} [ht]
        \centering
        \includegraphics[scale=0.8]{./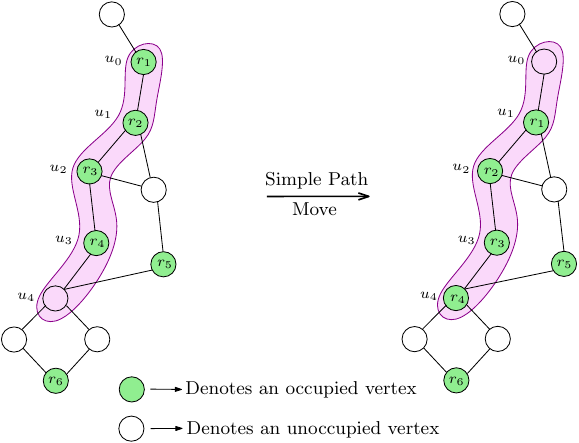}
        \caption{Simple path move, where $r_i$ denotes robot $i$.}\label{fig:simple_path}
    \end{figure}
    
    \item \textbf{Simple Rotation Move:}
    % A simple rotation move is a move in which robots move synchronously along a simple cycle
    A simple rotation move is a move along a simple cycle (all whose vertices are occupied by robots) such that all the vertices in that cycle synchronously move to their immediate neighbour in either clockwise or anticlockwise direction in one such move, as depicted in \cref{fig:simple_rot}. Formally, for configurations $S, S'$, the pair $(S,S')$ defines a simple rotation if there is a cycle $C = (u_0, u_1, u_2, \ldots, u_k = u_0)$ in $G$ such that, 
    \begin{itemize}
        \item $\{u_0, \ldots, u_{k-1}\} \subseteq S(R)$ (occupied vertices in $S$) and $\{u_0, \ldots, u_{k-1}\} \subseteq S'(R)$ (occupied vertices in $S'$).
        \item For all $i \in R$ such that $S(i) \notin C$, we have $S'(i) = S(i)$ (robots outside the cycle $C$ do not move).
        \item For all $i \in R$ such that $S(i) = u_j$, with $j \in \{0, 1, \ldots, k-1\}$, we have $S'(i) = u_{j+1}$ (robots in $C$ synchronously move one unit each).
    \end{itemize} 
    % \cref{fig:simple_rot} depicts a simple cycle move.

    \begin{figure} [ht] 
    \centering 
    \includegraphics[scale=0.81]{./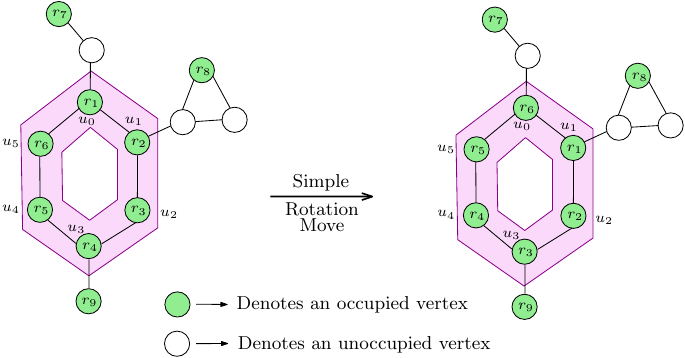}
    \caption{Simple rotation move, where $r_i$ denotes robot $i$.}\label{fig:simple_rot}
    \end{figure}

    \item \textbf{Dummy Move:}  A dummy move is the pair $(S,S)$ for any configuration $S$; this corresponds to not moving any robot from its occupied vertex in $S$.  
\end{itemize}

A \emph{valid move} is either a simple path move, a simple rotation move, or a dummy move. Note that we do not allow robots to swap along an edge as a move on its own. Trivially, $(S,S')$ is a valid move if and only if $(S',S)$ is a valid move as it is an undirected graph, and each of these valid moves are reversible.

Further, for a configuration $S$ and a permutation $\pi: [p] \rightarrow [p]$, their composition $S \circ \pi$, corresponds to another configuration where robots are renamed according to the permutation $\pi^{-1}$; if robot $i$ occupies vertex $S(i)$ in $S$, then robot $\pi^{-1}(i)$ also occupies vertex $S \circ \pi (\pi^{-1}(i)) = S(i)$ in the configuration $S \circ \pi$, as illustrated in \cref{fig:composition}. We state an observation regarding valid moves.

\begin{figure} [ht] 
    \centering
    \includegraphics[scale=0.8]{./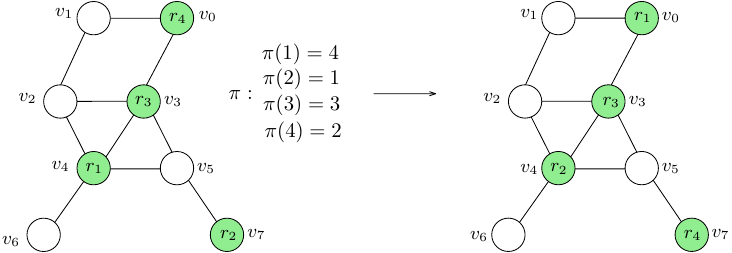}
    \caption{Composition of a configuration over a permutation, where $r_i$ denotes robot $i$.}\label{fig:composition}
\end{figure}

\begin{observation}\label{obs:valid_move_perm}
    For configurations $S$ and $S'$, $(S, S')$ is a valid move if and only if $(S \circ \pi, S' \circ \pi)$ is a valid move for any permutation $\pi: [p] \to [p]$.
\end{observation}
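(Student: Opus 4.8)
The plan is to observe that each of the three move types is defined purely in terms of the \emph{set} of occupied vertices together with the synchronous sliding structure along a fixed path or cycle, with no reference to which robot occupies which vertex; relabelling the robots by a permutation $\pi$ therefore cannot affect whether a pair of configurations forms a valid move. First I would record the elementary but crucial fact that $(S \circ \pi)(R) = S(\pi(R)) = S(R)$, since $\pi$ is a bijection of $R = [p]$ onto itself; hence $S$ and $S \circ \pi$ occupy exactly the same vertex set, and likewise $S'$ and $S' \circ \pi$. This already guarantees that the ``occupancy'' preconditions (the emptiness/occupancy requirements on the endpoints of the path $P$, or on the vertices of the cycle $C$) transfer verbatim.

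Next, since $\pi$ ranges over all permutations and $(S \circ \pi) \circ \pi^{-1} = S$, it suffices to prove only the forward implication; the converse then follows by applying that implication to the configurations $S \circ \pi$ and $S' \circ \pi$ with the permutation $\pi^{-1}$. I would then proceed by a case analysis on the type of move witnessing $(S, S')$. For a dummy move we have $S' = S$, so $S' \circ \pi = S \circ \pi$ and $(S \circ \pi, S' \circ \pi)$ is again a dummy move. For a simple path move along $P = (u_0, \ldots, u_k)$, I claim the \emph{same} path $P$ witnesses the move $(S \circ \pi, S' \circ \pi)$: for a robot $i \in R$, set $j = \pi(i)$, so that $(S \circ \pi)(i) = S(j)$ and $(S' \circ \pi)(i) = S'(j)$. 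The per-robot conditions ``$S(j) \notin P \Rightarrow S'(j) = S(j)$'' and ``$S(j) = u_\ell \Rightarrow S'(j) = u_{\ell+1}$'' then translate directly into the required conditions for $i$, and since $\pi$ is a bijection the universal quantifier over all robots in $R$ is preserved. The simple rotation case is handled identically, using the shared cycle $C$ in place of $P$.

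The argument involves no genuine obstacle; the only point requiring care is the bookkeeping of the substitution $j = \pi(i)$, ensuring that the movement conditions quantified over $R$ remain equivalent after relabelling---which is precisely where the bijectivity of $\pi$ is used. Because every clause in the definitions of the three move types refers only to the vertices $S(i)$ and $S'(i)$ (never to the index $i$ itself), together with the common combinatorial object $P$ or $C$, the relabelling leaves all of them intact, which completes the proof.
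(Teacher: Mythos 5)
Your proof is correct and follows essentially the same route as the paper's (one-line) argument: validity of a move depends only on the occupied vertex sets and the path/cycle structure, not on robot labels, so composing both configurations with $\pi$ preserves validity. You simply carry out in full the case analysis and bookkeeping that the paper leaves implicit.
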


\begin{proof} 
Note that $S \circ \pi$ and $S' \circ \pi$ follow the same renaming of robots (according to $\pi^{-1}$). Hence, if $(S, S')$ is a valid move, so is $(S \circ \pi, S' \circ \pi)$, and vice versa.
 % \todo{Pranav: add figure explaining this from Anubhav's notes}    
\end{proof}

\subsection{Equivalence Relation based on Reachable Configurations}\label{subsec: eq-relation-reach}
We now define what is meant by a configuration to be reachable from another configuration and define a binary relation based on reachability. We will use this to define equivalence classes which would further help us in analysing the structure of the problem instance.

\begin{definition}[Reachability of configurations] We say that a configuration $T$ is reachable from $S$, if there is a finite sequence of configurations $(S_0, S_1, \ldots, S_t)$ with $S_0 = S$, $S_t = T$ where $(S_{k-1}, S_k)$ is a valid move for all $k \in [t]$. 
\end{definition}

Informally, $T$ is reachable from $S$ if, for all $i \in R$, robot $i$ starts at vertex $S(i)$, and after a sequence of valid moves, it reaches vertex $T(i)$. Since moves are reversible, for an undirected graph, if $T$ is reachable from $S$, then $S$ is also reachable from $T$. Further, $S$ is reachable from itself as a dummy move is a valid move.

Further, based on the above definition of reachability, a \emph{homogeneous binary relation} (a homogeneous binary relation on a set $X$ is a binary relation between $X$ and itself) $\Tilde{R}$ can be defined as follows:

\begin{definition}\label{def:tilde-R}
    $\Tilde{R}$ is the homogeneous binary relation over the set of all configurations $\mathcal S_G$ defined by reachability as,
    $$\Tilde{R} = \{(S, T) \mid S, T \in \mathcal S_G \text{ and }T\text{ is reachable from } S\}$$ 
    We denote $(S, T) \in \Tilde{R}$ as $S \sim T$.
\end{definition} 

%Intuitively, a configuration $S$ is related to a configuration $T$, if $T$ is reachable from $S$. We denote $(S, T) \in \Tilde{R}$ as $S \sim T$ signifying that $T$ is reachable from $S$.

The above relation $\Tilde{R}$ is an \emph{equivalence relation}, as 
\begin{itemize}
    \item $S \sim S$ for all $S \in \mathcal S_G$ (reflexive)
    \item $S \sim T \implies T \sim S$ for all $S, T \in \mathcal S_G$ (symmetric) 
    \item ($S \sim T$ and $T \sim U$) $\implies$ $S \sim U$ for all $S, T, U \in \mathcal S_G$ (transitive)
\end{itemize}

Since an equivalence relation provides a partition of the underlying set into disjoint equivalence classes, the reachability relation $\Tilde{R}$ also partitions the set $\mathcal S_G$ into disjoint equivalence classes based on mutual reachability. Further, we can generalise \cref{obs:valid_move_perm} to the notion of reachability as follows, by the same argument of renaming of the robots. 

\begin{observation}\label{obs:reachability_perm_composition}
    For configurations $S$ and $S'$, $S \sim S'$ if and only if $S \circ \pi \sim S' \circ \pi$ for any permutation $\pi : [p] \to [p]$.
\end{observation}

\begin{proof}
Suppose $S \sim S'$. Then there exists a sequence of configurations $(S_0 = S, S_1, \ldots, S_t = S')$ such that $(S_{k-1}, S_{k})$ is a valid move for all $k \in [t]$. From \cref{obs:valid_move_perm}, it follows 
 that the sequence $(S_0 \circ \pi = S \circ \pi, S_1 \circ \pi, \ldots, S_t \circ \pi = S' \circ \pi)$ is such that $(S_{k-1} \circ \pi, S_{k} \circ \pi)$ is a valid move for all $k \in [t]$. Thus, $S \circ \pi \sim S' \circ \pi$.
\end{proof}

\subsection{Problem Definition}\label{subsec: problem_defn}
We address the problem {\sc Universal Solvability of Robot Motion Planning on Graphs (USolR)}, which takes as input a graph $G(V, E)$ and a positive integer $p$ and asks if for $p$ robots all configurations $S$ are reachable from all configurations $T$.

\defproblem{\textsc{Universal Solvability of Robot Motion Planning on Graphs (USolR)}}{A graph $G(V,E)$, an integer $p \in \mathbb N$ ($2\leq p \leq |V|$)}{Decide if for all configurations $S$, $T$ of $p$ robots on the underlying graph $G$, are $S$ and $T$ reachable from each other using valid moves}

To solve this problem, we propose polynomial-time randomized and deterministic algorithms that require solving another related problem {\sc Feasibility of Robot Motion Planning on Graphs (FRMP)} as a subroutine. In the FRMP problem, we are given a graph $G(V, E)$, a positive integer $p$ denoting the number of robots, and a pair of robot configurations $(S, T)$, and we need to determine if the configuration $T$ is reachable from $S$ using valid moves.

\defproblem{\textsc{Feasibility of Robot Motion Planning on Graphs (FRMP)}}{A graph $G(V,E)$, an integer $p \in \mathbb N$ ($2 \leq p \leq |V|$), a pair of robot configurations $(S, T)$}{Decide if the configuration $T$ is reachable from $S$ using only valid moves on the underlying graph $G$}

 Yu and Rus~\cite{yu2014pebblemotiongraphsrotations} proposed a subroutine that solves the $\FRMP$ problem for a given instance consisting $(G(V, E), p, (S, T))$ in time $\mathcal O(|V| + |E|)$, based on the Theorem 20 in their paper.

We denote this algorithm as $\mathcal A$ in our work, and for the sake of completeness, we state Theorem 20 from their paper~\cite{yu2014pebblemotiongraphsrotations} again, as per the notations used in our paper.

\begin{proposition}\label{prop:feas-check-algo} 
    Given an instance $(G(V, E), p, (S, T))$ of $\FRMP$, there exists a deterministic algorithm, denoted as Algorithm~$\mathcal A$, to decide if $T$ is reachable from $S$ using only valid moves on $G$, that runs in time $\mathcal O(|V|+|E|)$.
\end{proposition}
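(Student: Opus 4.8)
The statement I am asked to prove is Proposition~\ref{prop:feas-check-algo}, which asserts that the \FRMP problem admits a deterministic $\mathcal{O}(|V|+|E|)$ algorithm $\mathcal A$, and is explicitly attributed to Theorem 20 of Yu and Rus~\cite{yu2014pebblemotiongraphsrotations}. The honest plan is therefore to \emph{cite} the external result rather than reprove it from scratch, but I can still describe how such an algorithm is built, since the structural ingredients are all available in the reachability framework set up in~\cref{sec:prelims}.

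The plan is to reduce the feasibility question to a question about the connectivity and $2$-edge-connected block structure of $G$. First I would restrict attention to the subgraph induced by the vertices relevant to $S$ and $T$; robots trapped in tree-like (bridge-separated) portions of the graph can only move in a forced, essentially rigid manner, so their target positions are determined up to the combinatorics of the pebble-motion-on-trees analysis of Auletta et al.~\cite{auletta1999linear}. The key structural fact, established in~\cite{yu2014pebblemotiongraphsrotations}, is that within a $2$-edge-connected component containing at least one free vertex (an unoccupied vertex) together with a simple cycle, the set of achievable permutations of the robots is either the full symmetric group $\mathfrak{S}$ or the alternating group $\mathbf A$ on the robots in that component. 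This is exactly where simple rotation moves become powerful: a rotation around a cycle realises a cyclic permutation, and combined with path moves into a free vertex one can generate $3$-cycles, which generate $\mathbf A$. The algorithm then decides feasibility by (i) decomposing $G$ into its $2$-edge-connected blocks in linear time, (ii) determining, block by block, whether the required relative permutation of robots lies in the achievable group (full symmetric versus alternating, which reduces to a parity check), and (iii) verifying the forced movements across bridges are consistent between $S$ and $T$.

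The main steps, in order, would be: compute the block-cut tree of $G$ via a single depth-first search in $\mathcal O(|V|+|E|)$ time; classify each block as rigid (no free vertex or no cycle) or flexible (admits rotations plus a free vertex); propagate the forced bridge movements and check that $S$ and $T$ agree on the rigid parts; and finally, on each flexible block, compare the induced permutation against the reachable group, where membership in $\mathbf A$ versus $\mathfrak{S}$ is a linear-time parity computation on the permutation's cycle structure. Because each of these phases is a linear-time graph traversal or a linear scan over a permutation, the total running time stays $\mathcal O(|V|+|E|)$.

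The hard part — and the reason the paper defers to~\cite{yu2014pebblemotiongraphsrotations} rather than reproving it — is establishing the group-theoretic characterisation of the reachable permutations within a flexible $2$-edge-connected block, namely that the group generated by the available rotations and free-vertex moves is \emph{exactly} $\mathbf A$ or $\mathfrak{S}$ and never a smaller subgroup. This requires the delicate generator analysis originally due to Kornhauser, Miller, and Spirakis~\cite{kornhauser1984coordinating}, including the case distinction for exceptional small graphs (such as the $\theta_0$ graph) where the reachable group is strictly smaller. Since \cref{prop:feas-check-algo} is stated as a restatement of their Theorem 20, the appropriate course is to invoke it directly; I would not attempt to rederive the exceptional-case group theory, as that is precisely the technically heavy content that the cited work already supplies.
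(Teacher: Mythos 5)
Your proposal takes exactly the same route as the paper: the paper offers no proof of this proposition at all, but simply restates Theorem~20 of Yu and Rus~\cite{yu2014pebblemotiongraphsrotations} in its own notation and cites it, which is precisely what you conclude is the appropriate course. Your additional sketch of how the cited algorithm works (block decomposition, alternating-versus-symmetric group characterisation, linear-time parity and consistency checks) is a reasonable summary of the external result but is not required, and the paper does not attempt it either.
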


\section{Accumulating Robots to Specific Vertices}\label{sec:accumulation}

In this section, we look into how to accumulate the robots into specific vertices using valid moves. It is important to note that we do not actually intend to use the accumulation algorithm as a subroutine, rather we prove structural properties of the output of such an algorithm on various inputs, and use the existence of such an algorithm to better understand the mathematical structure of our problem.

Assume that the underlying graph $G(V,E)$ is connected, with the set of vertices $V$ having some intrinsic total ordering ${\sf Intr}$, and let $R = [p]$ be the set of robots. Let \textsf{BFS} be the deterministic algorithm for breadth first search on $G(V,E)$ that enumerates neighbours in the order of the total ordering ${\sf Intr}$ of $V$, starting with the `least' indexed vertex as the root. Let $v_1, v_2, \ldots, v_n$ be the order in which \textsf{BFS} visits the vertices of $G$. Define $V_p = \{v_1, v_2, \ldots, v_p\}$. Notice that $V_p$ is fixed given $G(V,E)$ and the total ordering ${\sf Intr}$ of $V$. We now aim to reach a configuration $S'$ from $S$ using valid moves such that the set of vertices occupied in $S'$ is exactly $S'(R) = V_p$.
If $S(R) = V_p$, then the algorithm terminates and returns the configuration $S(R)$; hence we assume $S(R) \ne V_p$. Let $\mathfrak{T}$ be the BFS tree obtained by the algorithm \textsf{BFS}. Let $a \in [n]$ be the least index such that $v_a$ is unoccupied i.e., $v_a \notin S(R)$. Clearly $a \in [p]$. Let $b \in [n] \setminus [p] = \{p+1, p+2, \ldots, n\}$ be the least element such that $v_b$ is occupied by a robot, i.e., $v_b \in S(R)$. Let $P = (u_0 = v_b, u_1, u_2, \ldots, u_k = v_a)$ be the unique path between $v_a$ and $v_b$ respectively in $\mathfrak T$. Let $S'$ be the configuration obtained from $S$ by one simple path move along $P$. We now repeat this entire procedure on $S'$ till the resulting configuration does not occupy the vertex set $V_p$. The pseudocode of this algorithm, named Algorithm~\ref{alg:acc-alg} can be found in the Appendix.

Please refer to \cref{fig:accumulation} for a visual representation of the execution of the algorithm. By construction, if $S'$ is the output of \cref{alg:acc-alg} when the configuration $S$ is part of the input, then $S'(R) = V_p$. We now show that \cref{alg:acc-alg} terminates in finite time. 

\begin{figure} [ht] 
    \centering
    \includegraphics[scale=0.6]{./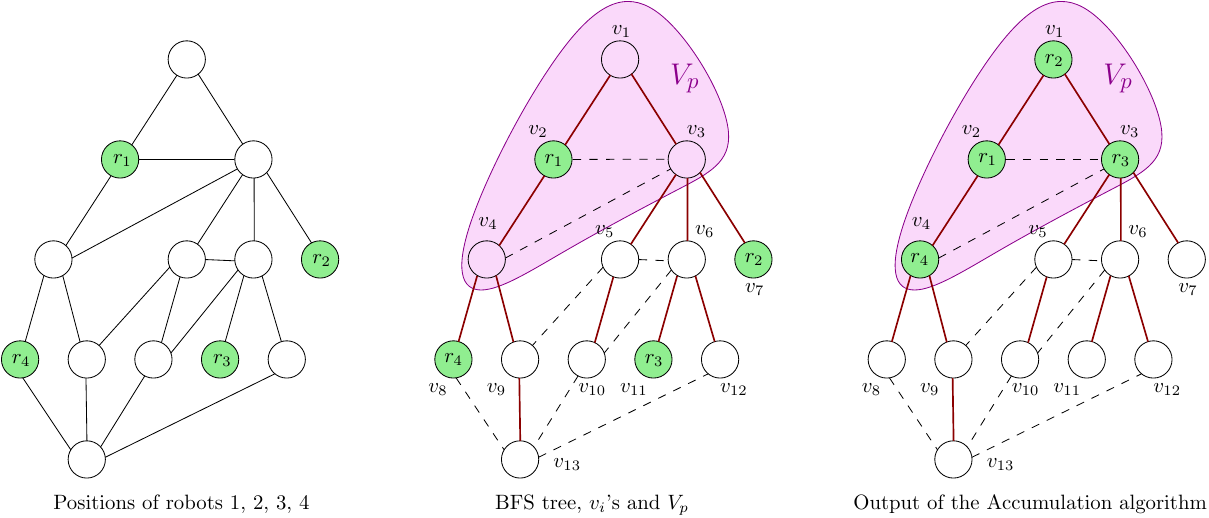}
    \caption{Illustration of working of \cref{alg:acc-alg}, where $r_i$ denotes robot $i$.}
    \label{fig:accumulation}
\end{figure}

\begin{lemma}
    \cref{alg:acc-alg} terminates in finite time.
\end{lemma}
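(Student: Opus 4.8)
The plan is to exhibit a strictly decreasing, non-negative integer-valued potential function on the configurations produced by successive iterations of the loop, so that the loop can run only finitely many times. The natural candidate is the total displacement of the occupied vertex set away from the target set $V_p$, measured by BFS-indices. Concretely, for a configuration $S$ let $\Phi(S) = \sum_{i \in R} \mathrm{idx}(S(i))$, where $\mathrm{idx}(v_j) = j$ is the position of vertex $v_j$ in the BFS visiting order $v_1, v_2, \ldots, v_n$. Since each robot occupies a distinct vertex and there are $p$ robots, $\Phi(S)$ is a sum of $p$ distinct indices from $\{1, \ldots, n\}$, hence $\Phi(S) \ge 1 + 2 + \cdots + p = \binom{p+1}{2}$. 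This gives a concrete non-negative (indeed strictly positive) lower bound, so it suffices to show $\Phi$ strictly decreases in each iteration.

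First I would analyse the effect of a single iteration on $\Phi$. In each iteration the algorithm picks $a$, the least index of an unoccupied vertex (with $a \le p$), and $b$, the least index in $\{p+1, \ldots, n\}$ of an occupied vertex, and performs one simple path move along the unique tree path $P = (u_0 = v_b, u_1, \ldots, u_k = v_a)$ from $v_b$ to $v_a$. The move slides every robot on $P$ one step toward $v_a$: the vertex $v_b$ becomes empty, $v_a$ becomes occupied, and each intermediate occupied vertex $u_j$ passes its robot to $u_{j-1}$ (the neighbour closer to $v_a$ along $P$). The key observation I would establish is that this move strictly decreases $\Phi$: the robot leaving $v_b$ effectively vacates the high index $b > p$, while $v_a$ with the low index $a \le p < b$ becomes occupied, and the net change in $\Phi$ equals $\bigl(\mathrm{idx}(v_a) - \mathrm{idx}(v_b)\bigr)$ plus the telescoping contribution of the intermediate robots. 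I must be careful here: the intermediate robots do not all move to strictly smaller indices, since $P$ is a path in the BFS tree and tree-path neighbours need not be index-adjacent. So the cleaner accounting is to compare the multiset of occupied indices before and after: exactly one index is removed (namely $b$, the index of $v_b$) and one is added (namely $a$, the index of $v_a$) to the occupied set, while all other occupied vertices are merely permuted among the robots. Because $a \le p < b$, we get $\Phi(S') - \Phi(S) = a - b < 0$, so $\Phi$ strictly decreases by at least $1$ per iteration.

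The main obstacle, and the step I would treat most carefully, is justifying that the occupied vertex \emph{set} changes in exactly this way — that the set of occupied vertices after the move is $\bigl(S(R) \setminus \{v_b\}\bigr) \cup \{v_a\}$. This follows from the definition of a simple path move: the endpoint $u_k = v_a$ was unoccupied in $S$ and becomes occupied in $S'$, the endpoint $u_0 = v_b$ was occupied in $S$ and becomes unoccupied in $S'$, and every internal vertex $u_j$ ($1 \le j \le k-1$) that was occupied simply transfers its robot to $u_{j-1}$. Crucially I must confirm that all internal vertices of $P$ are occupied in $S$, so that no internal vertex switches its occupancy status; this is where the choice of $a$ as the \emph{least} unoccupied index is used, guaranteeing $v_a$ is the first unoccupied vertex encountered and that the path drives robots into it without creating a new gap. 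Once the set-level change is pinned down, the index bookkeeping is immediate and $\Phi$ is a strictly decreasing sequence of positive integers, which cannot decrease forever.

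Therefore $\Phi(S) \ge \binom{p+1}{2}$ bounds the number of iterations by at most $\Phi(S_{\mathrm{init}}) - \binom{p+1}{2}$, a finite quantity, and since each iteration performs a single simple path move computable in finite time, \cref{alg:acc-alg} terminates in finite time.
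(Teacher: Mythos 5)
Your overall strategy (a strictly decreasing non-negative integer potential) matches the paper's, but your choice of potential is broken at exactly the step you flagged as delicate. The claim that the occupied set after one iteration is $\bigl(S(R)\setminus\{v_b\}\bigr)\cup\{v_a\}$ is false, because it is \emph{not} true that all internal vertices of the tree path $P$ from $v_b$ to $v_a$ are occupied. The minimality of $a$ only guarantees that every vertex of index less than $a$ is occupied; since ancestors of $v_a$ in the BFS tree have index less than $a$, this covers the ``descending'' part of $P$ (from the lowest common ancestor down to $v_a$), but the ``ascending'' part from $v_b$ up to the LCA may pass through unoccupied vertices (for instance, every vertex of index in $\{p+1,\dots,b-1\}$ is unoccupied by the minimality of $b$, and such vertices can lie on $P$). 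When that happens, more than one vertex changes occupancy status and your index-sum potential need not drop. Concretely, take the tree with edges $\{v_1,v_2\},\{v_1,v_3\},\{v_2,v_4\},\{v_3,v_5\},\{v_5,v_6\}$ (BFS order $v_1,\dots,v_6$), $p=4$, and robots on $\{v_1,v_2,v_3,v_6\}$. Then $a=4$, $b=6$, and $P=(v_6,v_5,v_3,v_1,v_2,v_4)$; pushing along $P$ sends the occupied set to $\{v_1,v_2,v_4,v_5\}$, so $\Phi$ goes from $1+2+3+6=12$ to $1+2+4+5=12$, i.e.\ $\Phi$ does not strictly decrease. A potential that is constant on some iteration cannot certify termination, so the proof as written fails.

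The paper avoids this by using a different potential: $\sum_{i\in R} d_{S(i)}$, where $d_v$ is the distance in the BFS tree $\mathfrak{T}$ from $v$ to the nearest vertex of $V_p$. Because $V_p$ is a prefix of the BFS order, it induces a connected subtree of $\mathfrak{T}$ containing the root, from which one checks that every robot's value $d_{S(i)}$ is non-increasing under a push along $P$, and the robot starting at $v_b\notin V_p$ strictly decreases it (in the example above this potential drops from $2$ to $1$). If you want to salvage your approach you would need to replace $\Phi$ by such a distance-based quantity, or else prove a much stronger structural claim about which vertices of $P$ are occupied than the one you asserted.
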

\begin{proof}
    Let $d_v$ be the distance of a node $v$ to its nearest node in $V_p$ in the graph $\mathfrak{T}$. At each iteration, $d_{S(i)}$ can only decrease, for any $i \in R$. Moreover, there exists $i \in R$ such that $d_{S(i)}$ strictly decreases, because either some robot becomes included in $V_p$, or robots move along a path and are placed nearer to $V_p$. Therefore, the non-negative integer value $\sum \limits_{i \in R} d_{S(i)}$ strictly decreases at each iteration. Hence, \cref{alg:acc-alg} must terminate in finite time.
\end{proof}

We now argue that if the positions of the robots are permuted in $S$, the outputs of \cref{alg:acc-alg} are permuted correspondingly. 

\begin{lemma}[$\star$]\label{lem:acc-resp-perm}
    Consider a graph $G(V,E)$, a set of robots $R = [p]$ and an intrinsic ordering ${\sf Intr}$ of $V$. Let $S_1$ and $S_2$ be configurations such that they occupy the same set of vertices, i.e., $S_1(R) = S_2(R)$. Therefore, there exists a permutation $\pi: R \to R$ such that $S_1(i) = S_2(\pi(i))$ for all $i \in R$. Let $S'_1$ be the output configuration when $\cref{alg:acc-alg}$ is run on $G$ and $S_1$; and let $S'_2$ be the output configuration when \cref{alg:acc-alg} is run on $G$ and $S_2$. Then for all $i \in R$, $S'_1(i) = S'_2(\pi(i))$
\end{lemma}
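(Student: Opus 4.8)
The plan is to prove this by induction on the number of iterations of \cref{alg:acc-alg}, showing that the bijective correspondence $\pi$ between robots is preserved at every step. The key structural observation is that the accumulation algorithm is \emph{deterministic} and its choices depend only on the \emph{set} of occupied vertices $S(R)$ and the fixed graph structure (the BFS tree $\mathfrak{T}$, the intrinsic ordering ${\sf Intr}$, and the target set $V_p$), never on \emph{which} robot occupies which vertex. Since $S_1(R) = S_2(R)$ by hypothesis, both runs of the algorithm see the identical occupancy pattern at the first iteration, and hence select identical indices $a$, $b$ and the identical path $P = (u_0, \ldots, u_k)$.

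First I would set up the induction carefully. Let $S_1^{(0)} = S_1, S_1^{(1)}, S_1^{(2)}, \ldots$ and $S_2^{(0)} = S_2, S_2^{(1)}, \ldots$ be the sequences of configurations produced by the two runs. The inductive hypothesis at step $\ell$ is the conjunction of two statements: (i) the occupied vertex sets agree, $S_1^{(\ell)}(R) = S_2^{(\ell)}(R)$; and (ii) the same permutation relation holds, $S_1^{(\ell)}(i) = S_2^{(\ell)}(\pi(i))$ for all $i \in R$. The base case $\ell = 0$ is exactly the hypothesis of the lemma. For the inductive step, assuming both hold at step $\ell$, I observe that since the occupied sets coincide, the algorithm computes the same unoccupied index $a$, the same occupied index $b$, and therefore pushes along the identical path $P$ in both runs. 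This immediately re-establishes (i) for step $\ell+1$, since applying the same path move to the same occupied set yields the same occupied set. For (ii), I would verify it vertex by vertex: a robot $i$ in run 1 lying off the path stays put, and the corresponding robot $\pi(i)$ in run 2 also lies off the path (by (ii) at step $\ell$, since $S_2^{(\ell)}(\pi(i)) = S_1^{(\ell)}(i) \notin P$) and also stays put; a robot $i$ at $u_j \in P$ moves to $u_{j+1}$, and robot $\pi(i)$, which occupies the same vertex $u_j = S_1^{(\ell)}(i) = S_2^{(\ell)}(\pi(i))$, likewise moves to $u_{j+1}$. Hence $S_1^{(\ell+1)}(i) = u_{j+1} = S_2^{(\ell+1)}(\pi(i))$, completing the step.

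The final ingredient is termination and the \emph{stopping condition}. Both runs terminate by the preceding lemma, and I must argue they terminate at the \emph{same} iteration. This follows because the termination test is ``does the current configuration occupy $V_p$,'' i.e. ``is $S^{(\ell)}(R) = V_p$,'' which depends only on the occupied set; by invariant (i) the two runs have identical occupied sets at every step, so run 1 halts precisely when run 2 does. Denoting that common stopping index by $L$, we have $S_1' = S_1^{(L)}$ and $S_2' = S_2^{(L)}$, and invariant (ii) at $\ell = L$ gives exactly $S_1'(i) = S_2'(\pi(i))$ for all $i \in R$, as required.

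The main obstacle I anticipate is not any single hard computation but rather the bookkeeping needed to make the ``depends only on the occupied set'' claim airtight. One must confirm that \emph{every} data-dependent choice the algorithm makes — the selection of $a$ and $b$, the path $P$, and the stopping test — is a function of $S(R)$ alone and is insensitive to the robot labels; this is where invariant (i) carries the argument and must be threaded through the induction in lockstep with (ii). The labelling subtlety in (ii), namely that robot $i$ in one run is matched with robot $\pi(i)$ in the other precisely because they occupy a common vertex, is the conceptual heart of the proof and deserves a clean statement, but once (i) is in hand the verification is routine.
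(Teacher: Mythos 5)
Your proposal is correct and follows exactly the paper's approach: the paper's proof is a one-paragraph statement of the same observation, namely that \cref{alg:acc-alg} is deterministic and its choices depend only on the set of occupied vertices, so the relation $S_1^{(\ell)}(i) = S_2^{(\ell)}(\pi(i))$ is an invariant across iterations. Your version merely spells out the induction (invariants (i) and (ii), and the common stopping index) in more detail than the paper does.
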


Therefore, if robot $r$ occupies vertex $v$ in $S$ i.e., $S(r)$, then the final position of $r$ in $S'$ i.e., $S'(r)$ is only dependent on $v$ and the set $S(R)$; where $S'$ is the output of \cref{alg:acc-alg} on $G$ and $S$. This motivates us to define the accumulation map.

\begin{definition}[Accumulation map given a configuration $S$]
    Consider a graph $G(V,E)$, a set of robots $R = [p]$ and an intrinsic ordering ${\sf Intr}$ of $V$. For a configuration $S$, define the accumulation map given $S$, $\amap{S} : S(R) \to V_p$ as 
    $$\amap{S}(S(i)) = S'(i)$$
    where $S'$ is the output of \cref{alg:acc-alg} on $G$ and $S$.
\end{definition}

Thus, $\amap{S}$ maps a set of occupied vertices $S(R)$ to $V_p$. From \cref{lem:acc-resp-perm}, we have $\amap{S_1}$ and $\amap{S_2}$ are identical whenever $S_1(R) = S_2(R)$, i.e., the configurations $S_1$ and $S_2$ occupy the same set of vertices. Therefore, for a $p$-sized subset $\alpha \in \binom{V}{p}$, this allows us to uniquely define $\amapset{\alpha}: \alpha \to V_p$ to be equal to $\amap{S}$ for any configuration $S$ with $S(R) = \alpha$. We can hence define the output $Y_S$ of \cref{alg:acc-alg} when run on $G$ and $S$ as $Y_S = \amap{S} \circ S$, where $\circ$ is the composition function. Here, $S$ is a map from $R$ to $S(R) \subseteq V$, and $\amap{S}$ is a map from $S(R)$ to $V_p$, hence $Y_S = \amap{S} \circ S$ is a map from $R$ to $V_p$. 

\begin{definition}[Accumulation of a configuration]\label{def:Y_S}
    Consider a graph $G(V,E)$, a set of robots $R = [p]$ and an intrinsic ordering ${\sf Intr}$ of $V$. For any configuration $S$, we say the configuration $Y_S = \amap{S} \circ S$, the output of \cref{alg:acc-alg} on $S$, is the accumulation of $S$.
\end{definition}

Since \cref{alg:acc-alg} performs valid moves, we have the following observation.

\begin{observation}\label{obs:Y_s-reachable-from-S}
    Consider a graph $G(V,E)$, a set of robots $R = [p]$ and an intrinsic ordering ${\sf Intr}$ of $V$. For every configuration $S$, let $Y_S$ be the accumulation of $S$. Then, we have $Y_S \sim S$.    
\end{observation}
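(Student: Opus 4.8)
The plan is to observe that the accumulation $Y_S$ is produced from $S$ by a finite sequence of simple path moves, each of which is a valid move by definition, so that the intermediate configurations of \cref{alg:acc-alg} directly witness the reachability relation. Concretely, reachability only requires exhibiting a finite chain $(S_0, S_1, \ldots, S_t)$ with $S_0 = S$, $S_t = Y_S$, and each consecutive pair a valid move; the algorithm itself supplies exactly such a chain.

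First I would argue that every iteration of \cref{alg:acc-alg} corresponds to exactly one simple path move. In each iteration the algorithm selects the least-indexed unoccupied vertex $v_a$ (with $a \in [p]$) and the least-indexed occupied vertex $v_b$ lying outside $V_p$, and then slides the robots along the unique tree path $P = (u_0 = v_b, u_1, \ldots, u_k = v_a)$ in $\mathfrak{T}$. Since $P$ is a path in the BFS tree $\mathfrak{T}$, it is a simple path; its terminal vertex $u_k = v_a$ is empty in the current configuration while its initial vertex $u_0 = v_b$ is occupied, and the move pushes each robot on $P$ one step towards $v_a$ while leaving all other robots fixed. This is precisely the definition of a simple path move, hence a valid move.

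Next I would invoke the termination lemma proved just above, which guarantees that \cref{alg:acc-alg} halts after finitely many iterations, say $t$. Writing $S_j$ for the configuration after $j$ iterations, with $S_0 = S$ and $S_t = Y_S$, the previous paragraph shows that each pair $(S_{j-1}, S_j)$ is a valid move. By \cref{def:Y_S} and the definition of reachability, the finite sequence $(S_0, S_1, \ldots, S_t)$ then witnesses that $Y_S$ is reachable from $S$, i.e., $Y_S \sim S$.

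I do not expect a substantial obstacle here: the statement follows essentially by construction, and the argument is really a bookkeeping check rather than a genuine difficulty. The only mild care needed is to confirm, at the start of each iteration, that the three defining conditions of a simple path move hold for the chosen tree path $P$ — simplicity of $P$, emptiness of its terminal vertex $v_a$, and the synchronous unit shift of the robots on $P$. All three are immediate from the selection rule for $v_a$ and $v_b$ together with the fact that $\mathfrak{T}$ is a tree, so the bulk of the work is simply matching the algorithm's step to the formal move definition and then appealing to termination.
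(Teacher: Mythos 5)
Your proposal is correct and matches the paper's justification, which simply notes that \cref{alg:acc-alg} performs only valid (simple path) moves and terminates, so its intermediate configurations witness $Y_S \sim S$. Your write-up is just a more explicit bookkeeping of that same argument.
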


As $Y_S(R) = V_p \subseteq V$, we can view $Y_S$ as either a function from $R$ to $V_p$, or even $R$ to $V$, hence $Y_S(R)$ is also a configuration.

If $S$ is a configuration such that $S(R) = V_p$, the \cref{alg:acc-alg} stops without entering the loop at Line~\ref{line:acc-loop}. 
 Therefore, we get $Y_S = S$.

\begin{observation}\label{obs:invariant}
    Consider a graph $G(V,E)$, a set of robots $R = [p]$ and an intrinsic ordering ${\sf Intr}$ of $V$. For a configuration $S$ with $S(R) = V_p$, we have the accumulation $Y_s = S$, and the accumulation map $\amapset{V_p} : V_p \to V_p$ is the identity function.
\end{observation}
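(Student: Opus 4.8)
The plan is to unpack both assertions directly from the construction of \cref{alg:acc-alg} and from the definition of the accumulation map; no heavy machinery is needed. First I would observe that when $S(R) = V_p$, \cref{alg:acc-alg} performs no move at all. Its main loop is guarded by the condition that the current configuration fails to occupy $V_p$, and since $S$ already occupies exactly $V_p$ this guard is false upon entry. Hence the loop body never executes, the algorithm returns $S$ unchanged, and therefore $Y_S = S$, which is the first claim (and is exactly the remark made in the sentence preceding the observation).

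For the second claim I would invoke the definition $Y_S = \amap{S} \circ S$ together with the equality $Y_S = S$ just established. Evaluating at an arbitrary robot $i \in R$ gives $\amap{S}(S(i)) = Y_S(i) = S(i)$, so $\amap{S}$ fixes every vertex of the form $S(i)$. Since $S(R) = V_p$, as $i$ ranges over $R$ the values $S(i)$ range over all of $V_p$; consequently $\amap{S}(v) = v$ for every $v \in V_p$, i.e. $\amap{S}$ is the identity map on $V_p$. Because \cref{lem:acc-resp-perm} guarantees that $\amap{S}$ depends only on the occupied vertex set $S(R) = V_p$ and not on the particular configuration $S$, this map is precisely $\amapset{V_p}$, and we conclude that $\amapset{V_p} : V_p \to V_p$ is the identity function.

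I do not anticipate any genuine obstacle: the whole argument amounts to reading off the loop-entry condition of \cref{alg:acc-alg} and substituting $Y_S = S$ into the definition of the accumulation map. The only point demanding mild care is ensuring that the object $\amapset{V_p}$ is legitimately well defined independent of the witness configuration $S$ used to compute it; but this independence is exactly what \cref{lem:acc-resp-perm} supplies, so the identity derived for one such $S$ transfers to the canonical map $\amapset{V_p}$.
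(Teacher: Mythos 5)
Your proposal is correct and matches the paper's own (very brief) justification: the paper likewise notes that when $S(R) = V_p$ the loop guard of \cref{alg:acc-alg} fails immediately, so $Y_S = S$, and the identity of $\amapset{V_p}$ then follows from the definition $Y_S = \amap{S} \circ S$. Your added remark that well-definedness of $\amapset{V_p}$ rests on \cref{lem:acc-resp-perm} is a correct and slightly more careful spelling-out of the same argument.
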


Recall that $(S,S')$ is a valid move if and only if $(S',S)$ is a valid move. 
% Since the configuration $Y_S$ is obtained from $S$ using valid moves (simple path moves, \cref{alg:acc-alg})\todo{Pranav: this text inside bracket might need refinement?}, we have the following result.
Moreover, if $S$ and $T$ are configurations, then $S \sim Y_S$ and $T \sim Y_T$ (\cref{obs:Y_s-reachable-from-S}). This implies the following result.

\begin{observation}\label{obs:eqv-acc}
    Consider a graph $G(V,E)$, a set of robots $R = [p]$ and an intrinsic ordering ${\sf Intr}$ of $V$. Configurations $S$ and $T$ are reachable from each other if and only if their respective accumulations $Y_S$ and $Y_T$ are reachable from each other, i.e., $S \sim T \iff Y_S \sim Y_T$. 
\end{observation}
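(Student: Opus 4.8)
The plan is to observe that this statement is an immediate corollary of two facts already established: first, that the reachability relation $\sim$ is an equivalence relation (in particular, symmetric and transitive), as shown in \cref{subsec: eq-relation-reach}; and second, that every configuration is reachable from its own accumulation, i.e.\ $Y_S \sim S$ for all $S$, which is exactly \cref{obs:Y_s-reachable-from-S}. No new combinatorial or geometric argument about the moves themselves is needed, since all of that work is packaged into \cref{obs:Y_s-reachable-from-S} via the fact that \cref{alg:acc-alg} uses only valid moves.

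For the forward direction, I would assume $S \sim T$ and chain the equivalences: by \cref{obs:Y_s-reachable-from-S} we have $Y_S \sim S$ and $Y_T \sim T$, and using symmetry we may write $T \sim Y_T$. Transitivity then gives the chain $Y_S \sim S \sim T \sim Y_T$, so $Y_S \sim Y_T$. For the backward direction, I would assume $Y_S \sim Y_T$ and run the same chain in the other order, namely $S \sim Y_S \sim Y_T \sim T$, again invoking symmetry (to pass from $Y_S \sim S$ to $S \sim Y_S$, and from $Y_T \sim T$ to $T$ as the endpoint) and transitivity to conclude $S \sim T$.

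Since both directions are symmetric in structure, I would likely present them together as a single bidirectional chain rather than two separate arguments, emphasizing only that the middle link ($S \sim T$ or $Y_S \sim Y_T$) is the hypothesis while the two outer links ($S \sim Y_S$ and $T \sim Y_T$) are supplied by \cref{obs:Y_s-reachable-from-S}. I do not anticipate any genuine obstacle here; the only point requiring the slightest care is to make explicit the use of symmetry when reorienting the reachability relations $Y_S \sim S$ and $Y_T \sim T$ so that they compose into a valid transitive chain, which is legitimate precisely because valid moves are reversible and hence $\sim$ is symmetric.
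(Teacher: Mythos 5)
Your proposal is correct and matches the paper's own (implicit) argument exactly: the paper derives this observation from $S \sim Y_S$ and $T \sim Y_T$ (\cref{obs:Y_s-reachable-from-S}) together with the symmetry and transitivity of $\sim$, which is precisely the bidirectional transitive chain you describe.
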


We now prove the following results regarding accumulation to any set of vertices. 

\begin{lemma}\label{lem:arbitrary-acc}
    Consider a graph $G(V,E)$, a set of robots $R = [p]$ and an intrinsic ordering ${\sf Intr}$ of $V$. Let $S: R \to V$ be any arbitrary configuration. Let $X \subseteq V$ be any arbitrary subset of size $|X| = p$. There exists a configuration $T: R \to V$ such that $T(R) = X$ and $S \sim T$.
\end{lemma}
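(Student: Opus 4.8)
The plan is to reduce the statement to a purely \emph{set-level} reachability fact: since the existence of $T$ with $T(R)=X$ places no constraint on the labels of $T$, it suffices to exhibit a sequence of valid moves starting from $S$ after which the set of occupied vertices is exactly $X$. The labelled configuration reached at the end is then the desired $T$, and $S \sim T$ holds by construction. If $p=|V|$ then $X=V=S(R)$ and there is nothing to prove, so assume $p<|V|$, which guarantees that at least one vertex is always unoccupied (a \emph{hole}).

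First I would set up two elementary movement primitives on a fixed spanning tree $\mathfrak{T}$ of $G$ (which exists since $G$ is connected). \textbf{Token walk:} to occupy a currently empty vertex $z$, pick the vertex $z'$ nearest to $z$ in $\mathfrak{T}$ that is occupied; the $\mathfrak{T}$-path from $z$ to $z'$ is empty except at $z'$, so repeatedly applying single-edge simple path moves along this corridor walks one token from $z'$ to $z$, leaving every vertex off the corridor untouched. \textbf{Hole walk:} symmetrically, a hole can be walked from its current location to any prescribed vertex along a $\mathfrak{T}$-path by successive single-edge simple path moves. Each primitive is a finite composition of valid moves, hence preserves reachability.

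With these primitives I would prove the set-level statement by induction on $n=|V|$, stripping a leaf $\ell$ of $\mathfrak{T}$. The first step is to correct the occupancy of $\ell$ so that $\ell$ is occupied if and only if $\ell \in X$: if $\ell$ must become occupied but is empty, a Token walk brings a token onto $\ell$; if $\ell$ must become empty but is occupied, a Hole walk brings a hole onto $\ell$. Crucially, because $\ell$ is a leaf, every corridor used lies in $G-\ell$ and never revisits $\ell$, so once $\ell$ is correct it is never disturbed again. I then freeze $\ell$ and apply the induction hypothesis to the connected graph $G-\ell$ (connected since $\mathfrak{T}-\ell$ is a spanning tree of it) with the remaining tokens and target $X \setminus \{\ell\}$; all recursive moves stay inside $G-\ell$ and leave $\ell$ fixed. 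Composing the leaf correction with the recursive moves yields the occupied set $X$.

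The main obstacle is the bookkeeping at the boundary of the induction, namely verifying that $G-\ell$ always has enough room. After fixing $\ell$, the number of holes in $G-\ell$ is $|V|-p$ when $\ell$ ends up occupied and $|V|-1-p$ when $\ell$ ends up empty; the former is always positive, while the latter can vanish only when $p=|V|-1$ and $\ell\notin X$. In that degenerate case $G-\ell$ is fully packed, its configuration is forced to be $V\setminus\{\ell\}=X$, and the target is already met, so the induction still closes. Checking that the needed token or hole always exists (a token exists whenever $\ell\in X$ since then $p\ge 1$, and a hole exists whenever $p<|V|$) completes the case analysis. Finally, I would remark that one may instead first standardize $S$ onto $V_p$ via \cref{obs:Y_s-reachable-from-S} and argue from there, but this is optional since the corridor argument applies verbatim to $S$.
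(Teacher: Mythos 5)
Your proof is correct, but it takes a genuinely different route from the paper's. The paper leans entirely on the accumulation machinery of \cref{sec:accumulation}: since \cref{alg:acc-alg} herds any configuration onto the canonical set $V_p$ and induces, for every $p$-set $\alpha$, a well-defined bijection $\mathcal{T}_{\alpha}:\alpha\to V_p$, the paper simply \emph{names} the witness $T=\mathcal{T}_{X}^{-1}\circ Y_S$, checks $T(R)=X$, computes $Y_T=\mathcal{T}_{X}\circ\mathcal{T}_{X}^{-1}\circ Y_S=Y_S$, and concludes $S\sim Y_S=Y_T\sim T$ — a two-line argument whose only work is exploiting the invertibility of the accumulation map. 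You instead prove the statement from scratch: your opening reduction (the labels of $T$ are unconstrained, so set-level reachability of the occupied set suffices) is exactly the right observation, and the leaf-stripping induction on a spanning tree with token-walk and hole-walk corridors, including the boundary case $p=|V|-1$ with $\ell\notin X$ where $G-\ell$ becomes fully packed, is sound. In effect you re-derive the correctness of an accumulation procedure targeting an arbitrary set $X$ rather than the fixed $V_p$. What the paper's route buys is brevity and reuse of already-established infrastructure; what yours buys is self-containedness and an explicit, elementary move sequence that would stand even without \cref{sec:accumulation}. Your closing remark about first standardizing $S$ via \cref{obs:Y_s-reachable-from-S} points toward the paper's actual argument, the remaining ingredient being that one should also apply the inverse accumulation map on the target side rather than run a fresh corridor argument.
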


\begin{proof}
    Consider the configuration $T: R \to V$ defined as $T = \mathcal{T}_{X}^{-1} \circ Y_S$. Since $\mathcal{T}_{X}^{-1} : V_p \to X$ is a bijection, and $Y_S: R \to V_p$ is a bijection, $T$ must also be a injective map with $T(R) = X$.

    Now, $Y_T = \mathcal{T}_{T(R)} \circ T = \mathcal{T}_{X} \circ (\mathcal{T}_{X}^{-1} \circ Y_S) = Y_S$. Since $S \sim Y_S$ and $T \sim Y_T$, this implies $S \sim T$. This completes the proof.
\end{proof}

\section{Randomized Algorithm for {\sc USolR}}\label{sec:randomized}

In this section, we design a polynomial-time randomized algorithm for the $\USOLR$ problem based on the sizes of the equivalence classes of the relation $\Tilde{R}$ (\cref{def:tilde-R}). The notions of the reachability relation and equivalence classes have resemblance to groups and subgroups. These similarities, as well as the group-theoretic techniques used in the works of Yu and Rus~\cite{yu2014pebblemotiongraphsrotations} motivated us to deeply examine the structure of these equivalence classes of the reachability relation $\Tilde{R}$, and we provide a characterisation of the equivalence classes of $\Tilde{R}$.

\subsection{Equivalence Classes of the Reachability Relation}\label{subsec: eq-class-reachability} 
% Structuring of the proof: 
% Identity Permutation\\
% For YES-instance: all permutations reachable\\
% Consider NO-instance, so atleast one permutation not reachable, define it's equivalence class. Prove that the size of the two equivalence classes are equal.\\

Let $(G(V,E), p)$ be the instance of \USOLR, and let $V_p = \{v_1, v_2, \ldots, v_p\}$ be as defined in Section~\ref{sec:accumulation}. Consider the identity configuration $S_I$ with $S_I(i) = v_i$ for all $i \in [p]$. For configuration $S_I$, we have $S_I(R) = V_p$, so $Y_{S_I} = S_I$ (\cref{obs:invariant}). 

% Now, for a given problem instance $(G(V, E),p)$ to be universally solvable (or a YES-instance)
For a graph $G(V,E)$ to be universally solvable for $p$ robots, it must hold true that for any pair of configurations $S, T \in \mathcal S_G$, $T$ is reachable from $S$, i.e., $(S, T) \in \Tilde{R}$. Thus, for any configuration $S \in \mathcal S_G$, $S_I \sim S$ must also hold true. Further, from the discussion in \cref{sec:prelims}, we have that $\Tilde{R}$ is an equivalence relation, and since all configurations in $\mathcal S_G$ belong to the same equivalence class as $S_I$, there is essentially only one equivalence class comprising of all configurations $S \in \mathcal S_G$. Let us denote by $\Ex_{S}$, the set of configurations reachable from the configuration $S$. Then, for a YES-instance, we have $\Ex_{S_I} = \mathcal S_G$.

On the contrary, for a NO-instance, we have at least one configuration that is not reachable from the configuration $S_I$. Let $S_I$ be denoted by $S_0^*$.
%Note that there could be multiple configurations 
Let $S_0^*, S_1^*, S_2^*, .. , S_k^*$ be configurations that are not reachable from each other, and the equivalence classes $\Ex_{S_0^*}, \Ex_{S_1^*}, \Ex_{S_2^*}, .., \Ex_{S_k^*}$ exhaust the entire set $\mathcal S_G$, i.e., $\mathcal S_G = \bigcup\limits_{i=0}^{k} \Ex_{S_i^*}$. Consider one of such configurations $S_j^*$, for $j \ge 1$ that is not reachable from $S_I (= S_0^*)$. 
% We denote the equivalence class of $S_1^*$ as $\Ex_{S_1^*}$. 
We now show that the equivalence classes $\Ex_{S_I}$ and $\Ex_{S_j^*}$ have equal number of configurations. This essentially means $|\Ex_{S}| = |\Ex_{T}|$ for all configurations $S, T \in \mathcal S_G$. We will be crucially using this result to design an efficient randomized algorithm.

\begin{lemma}\label{lemma:eq-class-sizes} Consider a graph $G(V,E)$, a set of robots $R = [p]$ and an intrinsic ordering ${\sf Intr}$ of $V$. Let $\Tilde{R}$ be the equivalence relation on configurations as per~\cref{def:tilde-R}. Then the size of the equivalence class $\Ex_{S_I}$ equals the size of the equivalence class $\Ex_{S_j^*}$ for all $j \in \{1, 2, \ldots, k\}$.
\end{lemma}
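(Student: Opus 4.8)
The plan is to exhibit an explicit bijection $\phi : \Ex_{S_I} \to \Ex_{S_j^*}$ given by right-composition with a single fixed permutation determined by $S_j^*$. The whole argument rests on two facts already available: the accumulation $Y_{S_j^*}$ is supported on the fixed vertex set $V_p$ and is reachable from $S_j^*$ (\cref{obs:Y_s-reachable-from-S}), and reachability is preserved under right-composition by a permutation (\cref{obs:reachability_perm_composition}).

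First I would accumulate $S_j^*$. By \cref{obs:Y_s-reachable-from-S} we have $Y_{S_j^*} \sim S_j^*$, and by construction $Y_{S_j^*}(R) = V_p = S_I(R)$. Since $S_I$ and $Y_{S_j^*}$ are both bijections from $R$ onto $V_p$, the map $\rho := S_I^{-1} \circ Y_{S_j^*}$ is a genuine permutation of $R = [p]$ satisfying $S_I \circ \rho = Y_{S_j^*}$, and hence $S_I \circ \rho \sim S_j^*$ by transitivity.

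The bijection is then $\phi : \mathcal S_G \to \mathcal S_G$, $\phi(S) = S \circ \rho$; it is a bijection of $\mathcal S_G$ onto itself with inverse $T \mapsto T \circ \rho^{-1}$, since $\rho$ is a permutation and composing an injection with a permutation yields a configuration. For the forward inclusion $\phi(\Ex_{S_I}) \subseteq \Ex_{S_j^*}$: if $S \sim S_I$, then \cref{obs:reachability_perm_composition} gives $S \circ \rho \sim S_I \circ \rho = Y_{S_j^*} \sim S_j^*$, so $\phi(S) \in \Ex_{S_j^*}$. For the reverse inclusion $\phi^{-1}(\Ex_{S_j^*}) \subseteq \Ex_{S_I}$: if $T \sim S_j^*$, then $T \circ \rho^{-1} \sim S_j^* \circ \rho^{-1} \sim (S_I \circ \rho)\circ \rho^{-1} = S_I$, again using \cref{obs:reachability_perm_composition} and transitivity. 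As $\phi$ and $\phi^{-1}$ are mutually inverse bijections on $\mathcal S_G$, these two inclusions force $\phi(\Ex_{S_I}) = \Ex_{S_j^*}$, and therefore $|\Ex_{S_I}| = |\Ex_{S_j^*}|$ for every $j \in \{1,\ldots,k\}$.

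The only real content — and the step I would be most careful about — is the identification $Y_{S_j^*} = S_I \circ \rho$ for an actual permutation $\rho$; this is exactly where accumulation is essential, as it guarantees that $S_j^*$ is reachable from a configuration supported on the same vertex set $V_p$ as $S_I$, so that the two differ only by a renaming of robots. Everything else is a formal check that right-composition respects the reachability relation, which is immediate from \cref{obs:reachability_perm_composition}. I would also note that the framework presupposes $G$ connected, as in \cref{sec:accumulation}, so that $Y_{S_j^*}$ is well defined. An alternative, slightly longer route would use \cref{lem:arbitrary-acc} together with the bijection $S \mapsto \amap{S}\circ S$ between configurations on a fixed vertex set and configurations on $V_p$ to count, in each class, exactly $\binom{|V|}{p}$ times the number of class members supported on $V_p$; but the direct bijection above avoids this counting entirely.
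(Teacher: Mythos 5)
Your proposal is correct and follows essentially the same route as the paper: accumulate $S_j^*$ onto $V_p$, extract the permutation relating $Y_{S_j^*}$ to $S_I$, and use right-composition by that permutation as a bijection between the two equivalence classes. The only cosmetic difference is that you invoke \cref{obs:reachability_perm_composition} directly and package injectivity/surjectivity via the global bijection $S \mapsto S \circ \rho$ on $\mathcal S_G$, whereas the paper unfolds the move-sequence argument via \cref{obs:valid_move_perm} and checks injectivity and surjectivity separately.
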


\begin{proof}
    We have $S_I \not\sim S_j^*$. Let $Y_{S_j^*}$ denote the configuration obtained by running \cref{alg:acc-alg} on $S_1^*$ (\cref{def:Y_S}). By \cref{obs:Y_s-reachable-from-S}, $S_j^* \sim Y_{S_j^*}$. Therefore, $S_I \not\sim Y_{S_j^*}$ and $Y_{S_j^*} \in \Ex_{S_j^*}$. Since $Y_{S_j^*}$ is a bijection from $R$ to $V_p$, there exists a permutation $\pi^*$ such that $Y_{S_j^*} = S_I \circ \pi^*$ (i.e., $Y_{S_j^*}(i) = S_I(\pi^*(i)) = v_{\pi^*(i)}$ for all $i \in [p]$). Next, we define a 
    % function $f_{\pi^*}: \Ex_{S_I} \to \mathcal S_G$ as $f_{\pi^*}(S) = S \circ \pi^* , \forall S \in \Ex_{S_I}$. Note
    function $f_{\pi^*}: \Ex_{S_I} \to \Ex_{S_j^*}$ as $f_{\pi^*}(S) = S \circ \pi^*$, for all $S \in \Ex_{S_I}$. Note that the function $f_{\pi^*}$ is defined according to the permutation $\pi^*$ that relates $Y_{S_j^*}$ to $S_I$. We first establish that $f_{\pi^*}$ maps all configurations in $\Ex_{S_I}$ only to configurations in $\Ex_{S_j^*}$. With this, for $S_I$, we know that $f_{\pi^*}(S_I) = Y_{S_j^*} \in \Ex_{S_j^*}$. For any other configuration $\Tilde{S} \in \Ex_{S_I}$, there exists a sequence of configurations $S_1(=S_I), S_2, ...,  S_h(= \Tilde{S})$, where $(S_t, S_{t+1})$ is a valid move for every $t \in [h-1]$. Now, consider the sequence of configurations $f_{\pi^*}(S_1)= S_1 \circ \pi^* = Y_{S_j^*}, f_{\pi^*}(S_2) = S_2 \circ \pi^*, ...,  f_{\pi^*}(S_h) = \Tilde{S} \circ \pi^*$. In this sequence, $(f_{\pi^*}(S_t), f_{\pi^*}(S_{t+1}))$ is a valid move for every $t \in [h-1]$ (\cref{obs:valid_move_perm}). This corresponds to a sequence of valid moves from $Y_{S_j^*}$ to $f_{\pi^*}(\Tilde{S})$. Since the argument is independent of the configuration $\Tilde{S} \in \Ex_{S_I}$, we have that $f_{\pi^*}(\Tilde{S}) \in \Ex_{S_j^*}$,  for all $\Tilde{S} \in \Ex_{S_I}$. Thus, we have established that $f_{\pi^*}$ maps $\Tilde{S} \in \Ex_{S_I}$ to configurations only in $\Ex_{S_j^*}$. 

    Next, $f_{\pi^*}$ is an injective function as for any $S_1, S_2 \in \Ex_{S_I}$, $ (f_{\pi^*}(S_1) = f_{\pi^*}(S_2))$ implies $S_1 \circ \pi^* = S_2 \circ \pi^*$, and hence $S_1 = S_2$. Now, we argue that $f_{\pi^*}$ is also a surjective function. Consider a configuration $\Tilde{T}$ in $\Ex_{S_j^*}$, then the configuration $\Tilde{S} = \Tilde{T} \circ \pi^{*-1}$ is a valid configuration. Further, we observe that $\Tilde{S} \in \Ex_{S_I}$. Thus, $f_{\pi^*}(\Tilde{S}) = f_{\pi^*}(\Tilde{T} \circ \pi^{*-1}) = (\Tilde{T} \circ \pi^{*-1}) \circ \pi^* = \Tilde{T} \circ (\pi^{*-1} \circ \pi^*) = \Tilde{T}$ and hence $f_{\pi^*}(\Tilde{S}) \in \Ex_{S_j^*}$. Therefore, for every $\Tilde{T} \in \Ex_{S_j^*}$, there exists $\Tilde{S} = \Tilde{T} \circ \pi^{*-1} \in \Ex_{S_I}$ such that $f_{\pi^*}(\Tilde{S}) = \Tilde{T}$. Putting these together we obtain that $f_{\pi^*}$ is a bijective function, and hence $|\Ex_{S_I}| = |\Ex_{S_j^*}|$. This completes the proof. 
\end{proof}

This allows us to prove a crucial result, which helps us in designing an efficient randomized algorithm. 

\begin{lemma}\label{lem:solvability-eq-reln}
    Let $G(V, E)$ with $p$ robots be a NO-instance of $\USOLR$, and let $\mathcal S_G$ denote the set of all configurations over $G$, then there exist at least $|\mathcal S_G|/2$ many distinct configurations $S: R \to V$ in $\mathcal S_G$ such that $S$ is not reachable from $S_I$. 
    % \todo{Anubhav: Maybe we can get rid of the YES-instance case; is too trivial to state in a theorem; Pranav: okay removed that part}
    % If $G(V,E)$ with $p$ robots is a YES-instance of \USOLR, then $\Gn$ is identical to $\Sn{p}$. Otherwise $|\Gn| \le p!/2$.
\end{lemma}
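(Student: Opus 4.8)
The plan is to reduce this entirely to a counting argument over the equivalence classes of $\Tilde{R}$, leveraging the fact just established in \cref{lemma:eq-class-sizes} that every equivalence class has the same cardinality. Since $\Tilde{R}$ is an equivalence relation, its classes partition $\mathcal S_G$ into disjoint blocks $\Ex_{S_0^*}, \Ex_{S_1^*}, \ldots, \Ex_{S_k^*}$, and by \cref{lemma:eq-class-sizes} they all share a common size, say $c = |\Ex_{S_I}|$. Hence $|\mathcal S_G| = (k+1)\,c$, where $k+1$ is the number of distinct classes.

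The key observation I would invoke first is that the set of configurations reachable from $S_I$ is \emph{exactly} its own equivalence class $\Ex_{S_I}$ (this is just the definition of $\Ex_{S}$ together with symmetry/transitivity of $\sim$). Consequently, the number of configurations reachable from $S_I$ is precisely $c$, and the number of configurations \emph{not} reachable from $S_I$ is $|\mathcal S_G| - c = k\,c$.

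Next I would use the hypothesis that $(G,p)$ is a NO-instance of $\USOLR$. By definition this means not all configurations are mutually reachable, so $\Ex_{S_I} \ne \mathcal S_G$; equivalently, there exists at least one further class $\Ex_{S_j^*}$ with $j \ge 1$, i.e.\ $k \ge 1$. Combined with the equal-size property, this yields $|\mathcal S_G| = (k+1)c \ge 2c$, so $c \le |\mathcal S_G|/2$. Therefore the count of unreachable configurations satisfies
\[
|\mathcal S_G| - c \;\ge\; |\mathcal S_G| - \frac{|\mathcal S_G|}{2} \;=\; \frac{|\mathcal S_G|}{2},
\]
which is exactly the claimed bound.

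I do not expect a genuine obstacle here, since the substantive work was already carried out in \cref{lemma:eq-class-sizes}: once all classes are known to be equinumerous, the statement is a one-line pigeonhole-style estimate. The only point requiring mild care is the justification that ``reachable from $S_I$'' coincides with the single class $\Ex_{S_I}$ rather than some larger set, which follows immediately from $\Tilde{R}$ being an equivalence relation, and the observation that a NO-instance forces at least two classes (so $k\ge 1$), which is precisely what guarantees the factor $1/2$ rather than a weaker bound.
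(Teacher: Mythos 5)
Your proposal is correct and follows essentially the same argument as the paper's proof: both use \cref{lemma:eq-class-sizes} to conclude that the $k+1$ equivalence classes partition $\mathcal S_G$ into equal-size blocks of size $|\mathcal S_G|/(k+1)$, identify the reachable set with $\Ex_{S_I}$, and use $k \ge 1$ for a NO-instance to obtain the bound $k\,|\mathcal S_G|/(k+1) \ge |\mathcal S_G|/2$. The only difference is cosmetic: you make the identification of ``reachable from $S_I$'' with the class $\Ex_{S_I}$ explicit, which the paper treats as implicit in its definition of $\Ex_S$.
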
 

\begin{proof}
    From \cref{lemma:eq-class-sizes}, we have $| \Ex_{S_I} | = | \Ex_{S_i^*} |$ for all $i \in [k]$, and also $| \mathcal S_G| = \bigcup\limits_{i=0}^{k} | \Ex_{S_i^*} |$. Further, since the equivalence classes induce a partition over $\mathcal S_G$, we have $| \mathcal S_G| = \sum\limits_{i=0}^{k} | \Ex_{S_i^*} |$. Therefore, the size of each equivalence class is $| \Ex_{S_i^*} | = |\mathcal S_G|/(k+1)$ for all $i \in \{0, 1, .., k\}$. From this we can conclude that if for a given NO-instance $(G(V, E),p)$, the relation $\Tilde{R}$ is partitioned into $k+1$ equivalence classes $\Ex_{S_0^*}, \Ex_{S_1^*}, .., \Ex_{S_k^*}$ of equal sizes, then the configuration $S_I$ is not reachable to $(k \cdot |\mathcal S_G| )/(k+1)$ many configurations in $\mathcal S_G$. Since $k \geq 1$ for a NO-instance, $S_I$ is not reachable to $(k \cdot |\mathcal S_G| )/(k+1) \ge |\mathcal S_G|/2$ many configurations in $\mathcal S_G$.
    % Thus, we have the following theorem.  \todo{Pranav: Anubhav, shd I add a line that this is similar to how Lagrange's Thm is derived ?}
\end{proof}

\subsection{Designing a randomized algorithm}\label{subsec:random-algo}

We now propose a randomized algorithm for $\USOLR$ based on the properties of equivalence classes discussed in \cref{subsec: eq-class-reachability}. Given a $\USOLR$ instance $(G(V, E),p)$, the randomized algorithm correctly outputs YES for a YES-instance (universally solvable instance) with probability $1$ and outputs YES for a NO-instance (not universally solvable instance) with probability at most $1/2$. %The randomized algorithm $\mathcal A$
% This solves an $\FRMP$ instance as a subroutine using the Algorithm $\mathcal A$ (\cref{prop:feas-check-algo}).
% which, given an undirected graph $G = (V, E)$ and a pair of configurations $(S, T)$ of robots $p$, checks in time $\mathcal{O}(|V| + |E|)$ whether the configurations $S$ and $T$ are mutually reachable. \\
We state the following theorem.

\begin{theorem}\label{thm:random-algo-analysis}
    Given a graph $G(V,E)$ and $p$ robots, there exists a randomized algorithm that correctly outputs YES for a universally solvable instance with probability 1, and outputs YES for a not universally solvable instance with probability at most $1/2$. The running time of this algorithm is $\mathcal O(|V| + |E|)$.
\end{theorem}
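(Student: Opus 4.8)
The plan is to sample a single configuration uniformly at random and test its reachability from the fixed identity configuration $S_I$ using the linear-time subroutine of \cref{prop:feas-check-algo}. Concretely, the algorithm first runs the breadth-first search from the least-indexed vertex (as in \cref{sec:accumulation}) to fix the ordering $v_1, \ldots, v_n$, and hence the set $V_p$ and the identity configuration $S_I$ with $S_I(i) = v_i$; this same pass also detects whether $G$ is connected. If $G$ is disconnected then it is immediately a NO-instance (any two configurations distributing robots differently across components are mutually unreachable for $p \ge 2$), so the algorithm outputs NO. Otherwise it samples a configuration $S \in \mathcal S_G$ uniformly at random --- equivalently, an injective map $R \to V$ drawn uniformly --- which can be done in $\mathcal O(|V|)$ time by a partial Fisher--Yates shuffle that selects $p$ distinct vertices and assigns robots $1, \ldots, p$ to them. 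Finally it invokes Algorithm~$\mathcal A$ on the $\FRMP$ instance $(G, p, (S_I, S))$ and outputs YES precisely when $\mathcal A$ reports that $S$ is reachable from $S_I$.

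For correctness I would argue the two error bounds separately. If $(G,p)$ is a YES-instance, then by definition every configuration is reachable from $S_I$, so Algorithm~$\mathcal A$ always certifies reachability and the algorithm outputs YES with probability $1$. If $(G,p)$ is a NO-instance with $G$ connected, then \cref{lem:solvability-eq-reln} guarantees that at least $|\mathcal S_G|/2$ configurations are not reachable from $S_I$. Since $S$ is drawn uniformly from $\mathcal S_G$, the probability that the sampled $S$ is reachable from $S_I$ --- the only event in which the algorithm outputs YES --- is at most $1/2$, which is exactly the required one-sided error. The disconnected NO-instances are dealt with deterministically by the first step.

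The running time is a straightforward accounting: the BFS and the connectivity check cost $\mathcal O(|V| + |E|)$, the uniform sampling of $S$ costs $\mathcal O(|V|)$, and the single call to Algorithm~$\mathcal A$ costs $\mathcal O(|V| + |E|)$ by \cref{prop:feas-check-algo}; summing gives $\mathcal O(|V| + |E|)$.

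Most of the conceptual work has been front-loaded into \cref{lem:solvability-eq-reln}, so the theorem itself is short, and I expect no deep obstacle. The points requiring care are minor: first, verifying that the partial Fisher--Yates shuffle yields a genuinely uniform draw from $\mathcal S_G$ in linear time; and second, confirming that a disconnected input is correctly classified as a NO-instance, since \cref{lem:solvability-eq-reln} relies (via the accumulation argument of \cref{sec:accumulation}) on the connectivity of $G$ to establish that every equivalence class has the same size. Handling the disconnected case explicitly in the first step closes this gap cleanly, and the soundness of reducing universal solvability to a single reachability query rests on the fact that universal solvability is equivalent to all of $\mathcal S_G$ forming the single equivalence class $\Ex_{S_I}$.
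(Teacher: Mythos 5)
Your proposal is correct and follows essentially the same route as the paper: detect disconnectedness deterministically, sample one configuration uniformly from $\mathcal S_G$, test reachability from $S_I$ with Algorithm~$\mathcal A$, and invoke \cref{lem:solvability-eq-reln} to bound the one-sided error by $1/2$. The only differences are implementation-level (BFS vs.\ DFS for the connectivity check, and your explicit Fisher--Yates sampling routine), and your attention to the disconnected case is if anything slightly more careful than the paper's own write-up.
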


\begin{proof}

Given an instance $(G(V, E),p)$ for $\USOLR$, we propose a randomized algorithm, denoted by \cref{alg:random-alg}.
\cref{alg:random-alg} first finds the connected components of the given graph $G$ using a Depth-First Search algorithm. If there are $2$ or more connected components in $G$, then it outputs NO and terminates, as a disconnected graph is never universally solvable.
%This is because an instance $(G(V, E),p)$ in which $G$ has 2 or more connected components would be a NO-instance as there exists at least one pair of mutually unreachable configurations which place at least one robot in two different connected components of $G$. 
Next, \cref{alg:random-alg} samples a configuration $S_R$ uniformly at random from the set of all configurations $\mathcal S_G = \{S : R \to V \mid S \text{ is a configuration}\}$. Then, it runs Algorithm $\mathcal A$ (\cref{prop:feas-check-algo}) as a subroutine, with $(G(V, E),p,(S_I, S_R))$ as input to check if $S_R$ is reachable from $S_I$. If Algorithm $\mathcal A$ outputs YES for reachability of $S_R$ from $S_I$, then \cref{alg:random-alg} outputs that $G$ is universally solvable for $p$ robots (i.e., $(G(V, E),p)$ is a YES-instance), otherwise it outputs that $G$ is not universally solvable for $p$ robots (i.e., $(G(V, E),p)$ is a NO-instance). The pseudocode of~\cref{alg:random-alg} can be found in the Appendix.

% \noindent \textbf{Time Complexity:}

% \begin{lemma}\label{lem:random-algo-analysis}
%     Given a graph $G=(V,E)$ and $p$ robots, \cref{alg:random-alg} correctly outputs YES for a universally solvable instance with probability 1, and outputs YES for an unsolvable instance with probability $\leq 1/2$. The running time of \cref{alg:random-alg} is $\mathcal O(|V| + |E|)$.
% \end{lemma}

% \begin{proof}
    For a YES-instance $(G(V, E),p)$, all configurations $S \in \mathcal S_G$ are reachable from $S_I$. Therefore, when \cref{alg:random-alg} runs on a YES-instance, for any uniformly sampled configuration $S_R$, the feasibility subroutine Algorithm $\mathcal A$ will always return that $S_R$ is reachable from $S_I$, and hence \cref{alg:random-alg} would return YES with probability $1$.

    For a NO-instance, the number of distinct configurations that are reachable from $S_I$ is at most $(|\mathcal S_G|/2)$ (\cref{lem:solvability-eq-reln}). \cref{alg:random-alg} outputs YES for a NO-instance if Algorithm~$\mathcal A$ outputs that $S_R$ (the uniformly sampled configuration) is reachable from $S_I$. Therefore, the probability that a uniformly sampled configuration $S_R$ is reachable from $S_I$ is equal to the ratio of the number of configurations in $\mathcal S_G$ reachable from $S_I$ to the total number of configurations in $\mathcal S_G$.
    \begin{equation*}
        \Pr(\text{\cref{alg:random-alg} outputs NO for a YES instance}) \leq \frac{|\mathcal S_G|/2}{|\mathcal S_G|} \leq 1/2
    \end{equation*}

We now analyse the time complexity of \cref{alg:random-alg}. Sampling a random configuration $S_R$ from the set of all possible configurations can be done in linear time, $\OO(|V| + |E|)$. % by a simple sample-and-throw algorithm. 
Algorithm $\mathcal A$ runs in linear time, $\OO(|V| + |E|)$ (\cref{prop:2-conn-solvability}). Moreover, all other steps (taking input, running DFS, etc.) are at most linear time, $\OO(|V| + |E|)$ operations. Hence, \cref{alg:random-alg} runs in time $\OO(|V| + |E|)$. This completes the proof of the theorem.
% We now analyse the time complexity of \cref{alg:random-alg}.
% \cref{alg:random-alg} takes as input a graph $G = (V, E)$ and the number of robots $p$, and first performs a Depth-First Search to find the number of connected components of $G$, which takes $\mathcal O(|V| + |E|)$ time. If $G$ is connected, then it further samples uniformly a configuration from the set of all permutations, \todo{Pranav: constant time for sampling ? Also, does \cref{alg:random-alg} sample a uniformly random config or any arbitrary config} and then runs a feasibility check by invoking algorithm $\mathcal A$ which also runs in time $\mathcal O(|V| + |E|)$ (\cref{prop:feas-check-algo}). Thus, the time complexity of the randomized algorithm \cref{alg:random-alg} is $\mathcal O(|V| + |E|)$. 
\end{proof}

\section{Deterministic Algorithm for {\sc USolR}}\label{sec:deterministic-algo}
In this section, we present a polynomial-time deterministic algorithm for the $\USOLR$ problem, which is a derandomization of \cref{alg:random-alg} presented in \cref{subsec:random-algo}. The algorithm uses as a subroutine, Algorithm $\mathcal A$ (discussed in \cref{prop:feas-check-algo}) to solve intermediate instances of $\FRMP$. 

% \todo{Pranav: need to explicitly write the base case of $p=1$!}
For a given $\USOLR$ instance $(G(V, E)\text{, } p)$ to be universally solvable (or be a YES-instance), it must hold true that for any pair of configurations $(S, T), S, T \in S_G$, $T$ is reachable from $S$, i.e., $(S, T) \in \Tilde{R}$. Thus, by \cref{obs:eqv-acc}, $Y_S$ and $Y_T$ (as defined in \cref{def:Y_S}) must also be mutually reachable. Hence, it is sufficient to examine the mutual reachability of configurations of the form $Y_S$ (for some configuration $S$) mapping the set $R$ of $p$ robots to $V_p$.  There are $p!$ such configurations of the $p$ robots in $R$, each corresponding to a permutation $\pi \in \mathfrak{S}_p$ (the set of all permutation from $[p]$ to $[p]$), 
such that $S_\pi = S_I \circ \pi$ (i.e., $S_\pi(i) = v_{\pi(i)}$ for all $i \in [p]$). Thus, for a given instance to be universally solvable, all these configurations $\{S_\pi: R \to V_p \mid \pi \in \mathfrak{S}_p\}$ must be reachable from one another, or equivalently, all these configurations must be reachable from the identity configuration $S_I$, which corresponds to the identity permutation $\pi_I: [p] \to [p]$ defined as $ \pi_I(i) = i$, for all $i \in [p]$.
%$\pi_I:  \pi_I(i) = i, \forall i \in [p]$. 

\begin{observation}\label{obs:uni-solvable}
    An instance $(G(V, E),p)$ of $\USOLR$ is universally solvable if and only if $S_I \sim S_\pi$ (i.e., $S_\pi$ is reachable from $S_I$) for all permutations $\pi : [p] \to [p]$.  
\end{observation}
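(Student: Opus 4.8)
The plan is to prove both directions of the biconditional, with the forward direction being immediate from the definition of universal solvability and the backward direction relying on the accumulation machinery developed in \cref{sec:accumulation}.

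For the forward direction ($\Rightarrow$), suppose $(G(V,E),p)$ is universally solvable. Then by definition every pair of configurations in $\mathcal S_G$ is mutually reachable. Since each $S_\pi = S_I \circ \pi$ is itself a configuration in $\mathcal S_G$, it follows immediately that $S_I \sim S_\pi$ for every permutation $\pi : [p] \to [p]$. This requires no further work.

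The substance lies in the backward direction ($\Leftarrow$). Assume $S_I \sim S_\pi$ for all permutations $\pi$. I would fix two arbitrary configurations $S, T \in \mathcal S_G$ and aim to show $S \sim T$. First I would invoke \cref{obs:Y_s-reachable-from-S} to obtain $S \sim Y_S$ and $T \sim Y_T$, so that by symmetry and transitivity of the equivalence relation $\sim$ it suffices to establish $Y_S \sim Y_T$. The key observation is that both $Y_S$ and $Y_T$ are bijections from $R$ onto the fixed vertex set $V_p$; since $S_I$ is also such a bijection, there exist permutations $\pi_1, \pi_2 : [p] \to [p]$ with $Y_S = S_I \circ \pi_1 = S_{\pi_1}$ and $Y_T = S_I \circ \pi_2 = S_{\pi_2}$. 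By hypothesis $S_I \sim S_{\pi_1} = Y_S$ and $S_I \sim S_{\pi_2} = Y_T$, whence $Y_S \sim Y_T$ by symmetry and transitivity. Chaining $S \sim Y_S \sim Y_T \sim T$ yields $S \sim T$.

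Since $S$ and $T$ were arbitrary, every pair of configurations is mutually reachable, so the instance is universally solvable. There is no genuinely hard step here: the proof is essentially bookkeeping, and its only real content is the reduction, via accumulation, of arbitrary configurations to configurations supported on $V_p$, which are in bijection with $\mathfrak{S}_p$. The one point to state carefully is why every accumulated configuration $Y_S$ is of the form $S_\pi$ — namely that $Y_S(R) = V_p$ forces $Y_S$ to coincide with $S_I$ up to a relabelling of the robots by some permutation $\pi$ — after which the hypothesis and the transitivity of $\sim$ do all the remaining work.
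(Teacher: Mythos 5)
Your proposal is correct and follows essentially the same route as the paper: the paper justifies this observation by the preceding discussion, which likewise accumulates arbitrary $S,T$ to $Y_S, Y_T$ supported on $V_p$ (via \cref{obs:eqv-acc}), identifies these with configurations $S_\pi$, and reduces mutual reachability of all $S_\pi$ to reachability from $S_I$ by transitivity and symmetry of $\sim$. No gaps; the one point you flag as needing care (that $Y_S(R)=V_p$ forces $Y_S=S_I\circ\pi$ for some permutation $\pi$) is handled exactly as the paper does.
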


A naive approach to decide universal solvability could be invoking Algorithm $\mathcal A$ for the $\FRMP$ instance $(G(V, E) \text{, } p \text{, } (S_I, S_\pi))$ corresponding to each of these $p!$ configurations (checking the reachability of each $S_\pi$ from the configuration $S_I$). However, this approach would have time complexity $\mathcal O(p! \cdot (|V| + |E|))$, which is super-polynomial in the size of the input parameter $p$. However, in the following lemma, we now argue that it is sufficient to check the reachability from $S_I$ of $p-1$ specific configurations from the set $\{S_\pi: R \to V_p \mid \pi \in \mathfrak{S}_p\}$. Consider the set of configurations $\{S_{\pi^*_t} \mid t \in [p-1] \}$, where the permutation $\pi^*_t$ is defined as follows.
\begin{equation*}
    \pi^*_t(i) = 
    \begin{cases}
      i & \text{if $i \neq t$ and $i \neq t+1$ } \\
      t + 1 & \text{if $i = t$}\\
      t & \text{if $i = t+1$}
    \end{cases}
\end{equation*}
Intuitively, the permutation $\pi^*_t$ is same as the identity permutation except that positions $t$ and $t+1$ are swapped.
% (i.e., $\pi^*_t(t) = t+1 \text{, } \pi^*_t(t+1) = t \text{ and } \pi^*_t(i) = i \text{, } \forall i \in [p] \setminus \{t, t+1\})$ 

\begin{lemma}\label{lem:naive-det-condition}
    Let $(G(V, E)\text{, } p)$ be an instance of $\USOLR$, then it is universally solvable if and only if $S_I \sim S_{\pi^*_t}$, for all $t \in [p-1]$ (i.e., each of the configurations $S_{\pi^*_t}$ is reachable from $S_I$) for the set of permutations $\{\pi^*_t \mid t \in [p-1]\}$ defined above.
\end{lemma}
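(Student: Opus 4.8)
The plan is to prove the two directions separately, with the forward implication being immediate and the reverse implication carrying the real content through a short group-theoretic argument.

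For the forward (``only if'') direction I would simply invoke \cref{obs:uni-solvable}: if $(G(V,E),p)$ is universally solvable, then $S_I \sim S_\pi$ for \emph{every} permutation $\pi:[p]\to[p]$, and in particular for each adjacent transposition $\pi^*_t$ with $t \in [p-1]$. No further work is needed here.

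For the reverse (``if'') direction, the key idea is to study the set of permutations whose associated configurations are reachable from $S_I$, namely $H = \{\pi \in \mathfrak{S}_p \mid S_I \sim S_\pi\}$, where $S_\pi = S_I \circ \pi$. I would first show that $H$ is a subgroup of $\mathfrak{S}_p$. Reflexivity of $\sim$ gives $\pi_I \in H$, so $H$ is nonempty. For closure, suppose $\pi, \sigma \in H$; from $S_I \sim S_I \circ \pi$ and \cref{obs:reachability_perm_composition} (composing on the right by $\sigma$) I obtain $S_I \circ \sigma \sim (S_I \circ \pi)\circ \sigma = S_I \circ (\pi \circ \sigma)$, and chaining this with $S_I \sim S_I \circ \sigma$ by transitivity yields $S_I \sim S_I \circ (\pi \circ \sigma)$, so $\pi \circ \sigma \in H$. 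Since $\mathfrak{S}_p$ is finite, a nonempty closed subset is already a subgroup; alternatively I can verify inverses directly, as composing $S_I \sim S_I \circ \pi$ on the right by $\pi^{-1}$ gives $S_I \circ \pi^{-1} \sim S_I$, and symmetry of $\sim$ then places $\pi^{-1}$ in $H$.

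The final step uses the standard fact that the adjacent transpositions $\pi^*_1, \ldots, \pi^*_{p-1}$ generate the full symmetric group $\mathfrak{S}_p$ (any permutation can be reduced to the identity by a sequence of adjacent swaps, as in bubble sort). By hypothesis each $\pi^*_t \in H$, and since $H$ is a subgroup containing this generating set, I conclude $H = \mathfrak{S}_p$. Hence $S_I \sim S_\pi$ for all $\pi \in \mathfrak{S}_p$, and \cref{obs:uni-solvable} immediately gives that the instance is universally solvable. The main obstacle I anticipate is purely bookkeeping: because $S_\pi = S_I \circ \pi$ and \cref{obs:reachability_perm_composition} transports reachability under \emph{right} composition, I must be careful that $\pi \circ \sigma$ (and not $\sigma \circ \pi$) appears in the closure step and that the relation is moved in the correct direction. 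Once the subgroup structure is correctly set up, the rest is routine.
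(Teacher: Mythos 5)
Your proof is correct, and both directions go through: the forward direction is immediate from \cref{obs:uni-solvable}, and your closure and inverse computations for $H = \{\pi \in \mathfrak{S}_p \mid S_I \sim S_\pi\}$ correctly use right-composition as licensed by \cref{obs:reachability_perm_composition} (with the paper's convention $f \circ g(x) = f(g(x))$, the identity $(S_I \circ \pi)\circ\sigma = S_I\circ(\pi\circ\sigma)$ holds, so $\pi\circ\sigma$ is indeed what lands in $H$). The paper's proof rests on exactly the same two ingredients --- transporting the adjacent-swap reachability $S_I \sim S_{\pi^*_t}$ to $S_\pi \sim S_{\pi^*_t}\circ\pi$ via \cref{obs:reachability_perm_composition}, and the fact that adjacent transpositions generate $\mathfrak{S}_p$ --- but it packages them differently: rather than observing that $H$ is a subgroup, the paper explicitly constructs a sequence of configurations $S_I = S_0, S_1, \ldots, S_p = S_\pi$ by a selection-sort-style routing that moves the robot destined for $v_p$ into place via successive adjacent swaps, then the robot destined for $v_{p-1}$, and so on, thereby re-deriving the generation fact constructively. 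Your subgroup formulation is shorter and cleaner, at the cost of invoking the generation of $\mathfrak{S}_p$ by adjacent transpositions as a black box; the paper's version is longer but self-contained and makes the actual move sequence (and hence an explicit witness of reachability) visible, which is in the spirit of the algorithmic results elsewhere in the paper. Either argument fully establishes the lemma.
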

\begin{proof}
    The forward direction of the proof is straightforward, because if $(G(V, E)\text{, } p)$ is a universally solvable instance, then by definition of the problem, all configurations $S \in \mathcal S_G$ are reachable from each other, therefore, the configurations $\{S_{\pi^*_t} \mid t \in [p-1]\}$ are also reachable from $S_I$.
    
    For the backward direction, we are given that $S_I \sim S_{\pi^*_t}$ for all $t \in [p-1]$. This implies that $S_\pi(= S_I \circ \pi) \sim S_{\pi^*_t} \circ \pi$, for all $t \in [p-1]$ for any permutation $\pi$ (\cref{obs:reachability_perm_composition}). This signifies that for any configuration $S_\pi$, using a sequence of valid moves, we can swap the robots occupying vertices $v_t$ and $v_{t+1}$ (while keeping other robots fixed), for every $t \in [p-1]$.
    To prove that an instance is universally solvable, we need to show that $S_I \sim S_\pi$ for all $\pi \in \mathfrak{S}_p$ (\cref{obs:uni-solvable}). Now, we show that $S_I \sim S_\pi$ for any $\pi \in \mathfrak{S}_p$ given $S_I \sim S_{\pi^*_t}$ for all $t \in [p-1]$. Consider any arbitrary permutation $\pi \in \mathfrak{S}_p$ and the corresponding configuration $S_\pi$. We incrementally reach the configuration $S_\pi$ from $S_I$ by routing the robots in the following manner: we first move the robot in $S_I$ that occupies the vertex $v_p$ in $S_\pi$. Let this robot be denoted as $j = \pi^{-1}(v_p)$. Then, in $S_I$, robot $j$ occupies $v_j$. Therefore, we first move robot $j$ from $v_j$ to $v_p$. 
    %We are given that any configuration with a pair of adjacent robots \todo{Anubhav: "adjacent" robots is not defined} swapped is reachable from $S_I$.
    We argued that it is possible to swap robots in $v_t$ and $v_{t+1}$ for all $t \in [p-1]$ using a sequence of valid moves.
    We use this to establish that we can route the robot $j$ from $v_j$ to $v_p$. For that, initially, we have the configuration $S_0 = S_I$. The vertices $v_j, v_{j+1}, ... , v_p$ are occupied by robots $j, j+1, ..., p$ respectively. We can reach the configuration $S_{\pi^*_j}$ from $S_I$ (i.e., only the robots at vertices $v_j$ and $v_{j+1}$ are exchanged). 
    Now, we have robots arranged according to the configuration $S_{\pi^*_j}$. Further, we can exchange the robots at vertices $j+1$ and $j+2$ in this configuration as $S_{\pi^*_j}(= S_I \circ \pi^*_j) \sim S_{\pi^*_{j+1}} \circ \pi^*_j$ as discussed above. This can be repeated for $j+2, j+3, .., p-1$, so that now robot $j$ is at vertex $v_p$. In the current configuration $S_1$ (obtained by placing $1$ robot at its correct position in $S_\pi$), the robots $1, 2, .., j-1, j, j+1, .., p$ occupy vertices $v_1, v_2, .., v_{j-1}, v_p, v_j, .., v_{p-1}$ in the specific order. \\
    Now, we repeat the same procedure for the robot that occupies the vertex $v_{p-1}$ in $S_\pi$. Let this robot be denoted as $j' = \pi^{-1}(v_{p-1})$.  Then, in the current configuration $S_1$, robot $j'$ occupies the vertex $S_1(j')$. Now, we can again move the robot from $S_1(j')$ to $v_{p-1}$ following the same procedure as above. We can keep repeating this process, and reach configurations $S_2, S_3, .., S_p(=S_\pi)$ sequentially using valid moves. Hence, we can reach the configuration $S_\pi$ from $S_I$ for any arbitrary $\pi$. This shows that $S_I \sim S_\pi$, for all permutations $\pi :[p] \to [p]$. From \cref{obs:uni-solvable}, this implies $G$ is universally solvable for $p$ robots. This completes the proof of the backward direction, thereby proving the lemma.
\end{proof}

With this background, we introduce our deterministic algorithm, which we refer to as \cref{alg:deter-alg} and which derandomizes \cref{alg:random-alg}. We state the following theorem.

\begin{theorem}[$\star$]\label{thm:deter-alg}
    Given a graph $G=(V,E)$ and $p$ robots, there exists a deterministic algorithm that correctly outputs YES for a universally solvable instance and NO for an instance that is not universally solvable.
    The running time of this algorithm is $\mathcal O(p\cdot(|V| + |E|))$.
\end{theorem}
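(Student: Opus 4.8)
The plan is to derandomize \cref{alg:random-alg} by replacing its single uniformly-sampled reachability test with a small, fixed collection of reachability tests whose joint outcome certifies universal solvability. The theoretical backbone is \cref{lem:naive-det-condition}, which reduces universal solvability to checking the reachability of exactly $p-1$ configurations from the identity configuration $S_I$, namely the adjacent transpositions $S_{\pi^*_t}$ for $t \in [p-1]$. Since each such query is an instance of $\FRMP$, it can be answered by Algorithm $\mathcal A$ (\cref{prop:feas-check-algo}) in $\mathcal O(|V|+|E|)$ time, so the whole decision procedure reduces to $p-1$ linear-time feasibility checks.

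Concretely, \cref{alg:deter-alg} would proceed as follows. First I would run a linear-time graph search to test connectivity; if $G$ has at least two connected components, output NO, since a disconnected graph always admits mutually unreachable configurations and is never universally solvable. Otherwise I would compute the BFS order of $V$ with respect to the intrinsic ordering ${\sf Intr}$, thereby fixing $V_p = \{v_1, \ldots, v_p\}$ and the identity configuration $S_I$ with $S_I(i) = v_i$. Then, for each $t \in [p-1]$, I would construct $S_{\pi^*_t}$ (which agrees with $S_I$ except that the robots on $v_t$ and $v_{t+1}$ are interchanged) and invoke Algorithm $\mathcal A$ on the $\FRMP$ instance $(G(V,E), p, (S_I, S_{\pi^*_t}))$. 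If every one of these $p-1$ calls reports that $S_{\pi^*_t}$ is reachable from $S_I$, output YES; if any call reports unreachability, output NO.

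Correctness is then immediate from \cref{lem:naive-det-condition}: the algorithm outputs YES precisely when $S_I \sim S_{\pi^*_t}$ for all $t \in [p-1]$, which the lemma establishes to be equivalent to $(G,p)$ being a YES-instance of $\USOLR$. For the running time, the connectivity test and the BFS each cost $\mathcal O(|V|+|E|)$; constructing each $S_{\pi^*_t}$ takes $\mathcal O(p)$ time (indeed each can be obtained from $S_I$ by a constant-time local edit), contributing $\mathcal O(p \cdot |V|)$ in total, which is absorbed; and the dominant cost is the $p-1$ invocations of Algorithm $\mathcal A$, each $\mathcal O(|V|+|E|)$, yielding $\mathcal O(p\cdot(|V|+|E|))$ overall.

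Unlike the preceding lemmas, I do not expect a substantial conceptual obstacle here, because \cref{lem:naive-det-condition} already performs the essential combinatorial reduction from the $p!$ permutations in $\mathfrak{S}_p$ down to the $p-1$ adjacent transpositions. The only points that require care are (i) handling the disconnected case explicitly, since \cref{lem:naive-det-condition} and Algorithm $\mathcal A$ implicitly presume a connected host graph, and (ii) confirming that the configurations $S_{\pi^*_t}$ can be generated and passed to Algorithm $\mathcal A$ without exceeding the $\mathcal O(p\cdot(|V|+|E|))$ budget. Both are routine bookkeeping rather than genuine difficulties, so the proof is essentially the assembly and analysis of this deterministic procedure.
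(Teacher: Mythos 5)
Your proposal matches the paper's own proof essentially verbatim: both first reject disconnected graphs via a linear-time search, then invoke Algorithm $\mathcal A$ on the $p-1$ instances $(G, p, (S_I, S_{\pi^*_t}))$ and decide according to whether all checks succeed, with correctness delegated to \cref{lem:naive-det-condition} and the $\mathcal O(p\cdot(|V|+|E|))$ bound following from $p-1$ linear-time calls. There is no gap and no meaningful divergence from the paper's argument.
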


\section{Optimized Deterministic Algorithm for {\sc USolR}}\label{sec:optimised_deterministic}

In the previous section, we designed and analysed a polynomial-time deterministic algorithm solving \USOLR for $p$ robots and an input graph $G(V, E)$, with a running time guarantee of $\OO(p \cdot (|V| + |E|))$. Compared to the randomized algorithm, the running time blows up by a factor of $p$. In this section, we provide an optimised version of \cref{alg:deter-alg}, that exploits the structural properties of the input graph.

We start of by recalling Theorem 5 in the paper due to Yu and Rus~\cite{yu2014pebblemotiongraphsrotations}. We state this as the following proposition, as per the notation we introduced in this paper.

\begin{proposition}\label{prop:2-conn-solvability}
    Let $G$ be a 2-connected graph that is not a cycle and let there be $p$ robots. Then for all $p \le |V|$, $G$ is universally solvable for $p$ robots.
\end{proposition}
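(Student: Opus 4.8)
The plan is to prove universal solvability by showing that the set of permutations of the $p$ robots realizable by valid moves (starting and ending on the canonical vertex set $V_p$) is the entire symmetric group $\mathfrak{S}_p$. By \cref{obs:uni-solvable} together with \cref{lem:naive-det-condition}, it suffices to realize every adjacent transposition $\pi^*_t$; equivalently, writing $\mathfrak{G} \le \mathfrak{S}_p$ for the set of permutations $\pi$ with $S_I \sim S_\pi$, I would show $\mathfrak{G} = \mathfrak{S}_p$. Note first that $\mathfrak{G}$ is genuinely a subgroup: it is closed under composition and inverses because moves are reversible and because \cref{obs:reachability_perm_composition} lets me ``shift'' a realizing move-sequence by any permutation. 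So the whole proof reduces to a single group-generation statement for the $2$-connected, non-cycle graph $G$.

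To generate $\mathfrak{S}_p$ I would first establish the weaker containment $\mathbf{A}_p \le \mathfrak{G}$ and then exhibit a single \emph{odd} permutation in $\mathfrak{G}$; since $[\mathfrak{S}_p : \mathbf{A}_p] = 2$ and $\mathbf{A}_p$ is maximal, this forces $\mathfrak{G} = \mathfrak{S}_p$. The containment $\mathbf{A}_p \le \mathfrak{G}$ I would obtain in two steps. First, show $\mathfrak{G}$ acts transitively on the robots: this is where $2$-connectivity is used, since it lets me route any prescribed robot to any prescribed vertex of $V_p$ (for $p<|V|$ by repeatedly sliding the unique empty vertex along internally disjoint paths, and for $p=|V|$ by cycle rotations). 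Second, exhibit a single $3$-cycle on robots; conjugating this $3$-cycle by the elements supplied by transitivity places $3$-cycles on arbitrary triples, and since the $3$-cycles generate $\mathbf{A}_p$ (primitivity, or the explicit conjugation scheme already used in the proof of \cref{lem:naive-det-condition}), I get $\mathbf{A}_p \le \mathfrak{G}$.

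The structural input for both the $3$-cycle and the odd permutation is that a $2$-connected graph which is not a cycle contains a \emph{theta subgraph}: by an open ear decomposition, the first nontrivial ear produces two hub vertices joined by three internally disjoint paths, i.e.\ two cycles $C_1, C_2$ sharing a common subpath. Following \cite{kornhauser1984coordinating,yu2014pebblemotiongraphsrotations}, rotating once around $C_1$ and backward around $C_2$ cancels on the shared subpath and, after conjugating the blank into place (when $p<|V|$), composes to a short cyclic permutation supported on the two private arcs, which I can reduce to a $3$-cycle. For the odd element I would use parity of the theta gadget directly: if the three paths have internal lengths $a,b,c$, the three constituent cycles have lengths $a+b,\ b+c,\ c+a$, whose sum $2(a+b+c)$ is even, so they cannot all be odd. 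Hence at least one cycle in the theta subgraph has even length, and a rotation along an even cycle of length $k$ is a $k$-cycle of sign $(-1)^{k-1}=-1$, i.e.\ an odd permutation lying in $\mathfrak{G}$.

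The main obstacle is the fully packed regime $p=|V|$, where there is no empty vertex and the \emph{only} admissible moves are rotations along fully occupied cycles; here the easy ``slide the blank'' maneuvers are unavailable and transitivity together with the generation of $\mathbf{A}_p$ must be argued purely from rotations on the theta gadget, which is the delicate core of Wilson's theorem and its pebble-motion-with-rotations extension. A secondary subtlety worth flagging is that permitting rotations is precisely what rescues the bipartite case: the classical blank-puzzle group on a bipartite $2$-connected graph is only $\mathbf{A}_p$, and it is exactly the odd permutation coming from an even-cycle rotation that upgrades $\mathbf{A}_p$ to $\mathfrak{S}_p$. I would therefore track parity carefully throughout rather than treat the odd generator as automatic.
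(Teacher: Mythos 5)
You should first be aware that the paper does not prove this proposition at all: it is imported verbatim as Theorem~5 of Yu and Rus~\cite{yu2014pebblemotiongraphsrotations} (see the sentence immediately preceding \cref{prop:2-conn-solvability}), so there is no in-paper argument to compare against. Your proposal is therefore an attempt to reprove a cited black-box result, and it correctly reconstructs the standard Wilson/Kornhauser--Miller--Spirakis outline: reduce to showing the realizable-permutation subgroup $\mathfrak{G}\le\mathfrak{S}_p$ is everything, extract a theta subgraph from an open ear decomposition, observe that the three cycle lengths $a{+}b$, $b{+}c$, $c{+}a$ cannot all be odd, and use an even-cycle rotation as the odd generator that upgrades $\mathbf{A}_p$ to $\mathbf{S}_p$ (this is genuinely the mechanism by which rotations eliminate the bipartite exception of Wilson's theorem~\cite{wilson1974graph}).

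As a proof, however, the sketch has concrete gaps beyond the one you flag. First, transitivity of $\mathfrak{G}$ on the robots does \emph{not} let you conjugate a fixed $3$-cycle onto an arbitrary triple; for that you need (at least) $2$-transitivity, or primitivity plus Jordan's theorem, and establishing that stronger property is a substantial portion of the real proof, not a corollary of ``route any robot to any vertex.'' The conjugation scheme in \cref{lem:naive-det-condition} does not help here, since it \emph{assumes} all adjacent transpositions are realizable, which is what you are trying to prove. Second, the step ``compose two counter-rotations on the theta gadget and reduce to a $3$-cycle'' is precisely where Wilson's exceptional graph $\theta_0$ lives: $\theta_0$ is $2$-connected, not a cycle, not bipartite, yet its classical puzzle group is a transitive subgroup of $\mathbf{S}_6$ containing no $3$-cycle whatsoever. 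So the reduction cannot be waved through; any correct proof must either show why the rotation moves rescue $\theta_0$ specifically or carry it as an exceptional case, as~\cite{wilson1974graph,kornhauser1984coordinating,yu2014pebblemotiongraphsrotations} all do. Third, as you admit, the fully packed case $p=|V|$ --- where the only moves are rotations along fully occupied cycles and the whole blank-sliding machinery disappears --- is left unargued, and it is the technically hardest regime. Given that the paper's intended ``proof'' is the citation, the honest courses are either to cite Yu and Rus as the paper does, or to commit to filling these three holes, which amounts to reproving a nontrivial theorem from the literature.
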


This allows us to prove the following result

\begin{lemma}\label{lem:2-conn-larger-than-p}
    For $p$ robots, a connected graph $G(V, E)$ with a 2-connected component of size at least $p$ that is not a cycle, is universally solvable.
\end{lemma}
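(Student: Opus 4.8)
The plan is to leverage \cref{prop:2-conn-solvability} on the large block together with the accumulation results of \cref{sec:accumulation}: first herd all $p$ robots onto the $2$-connected component, and then argue that inside that component every permutation of the robots is reachable, so that any pair of global configurations is connected through it.

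First I would name the $2$-connected component $H$ with $|V(H)| \ge p$ that is not a cycle, and invoke \cref{prop:2-conn-solvability} to conclude that $H$, viewed as a standalone graph, is universally solvable for $p$ robots: any configuration of $p$ robots on $V(H)$ can be transformed into any other using valid moves whose underlying paths and cycles lie entirely in $H$. Next I would fix an arbitrary $p$-subset $X \subseteq V(H)$ (possible since $|V(H)| \ge p$) and take two arbitrary configurations $S_1, S_2 \in \mathcal S_G$. Because $G$ is connected, \cref{lem:arbitrary-acc} guarantees configurations $T_1, T_2$ with $T_i(R) = X$ and $S_i \sim T_i$ for $i \in \{1,2\}$. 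Thus both $T_1$ and $T_2$ place all robots on $X \subseteq V(H)$, and they differ only by a permutation of the robots among these $p$ vertices.

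Then I would transfer reachability from $H$ to $G$: since $T_1$ and $T_2$ are configurations of $p$ robots on $V(H)$, the universal solvability of $H$ yields a sequence of valid moves in $H$ taking $T_1$ to $T_2$. Every such move, being a simple path move or a simple rotation move along a path or cycle of $H$, is also a valid move in $G$, and it keeps every robot inside $V(H)$. Hence $T_1 \sim T_2$ in $G$, and by transitivity $S_1 \sim T_1 \sim T_2 \sim S_2$. Since $S_1$ and $S_2$ were arbitrary, $G$ is universally solvable for $p$ robots, which is what we wanted.

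The main obstacle is this last transfer step: one must verify that universal solvability of $H$ as an isolated graph genuinely certifies reachability inside the ambient graph $G$. This rests on the observation that paths and cycles of the subgraph $H$ are paths and cycles of $G$, and that once all robots lie on $V(H)$, the relevant moves never require or touch a vertex outside $H$, so the empty-vertex and occupancy conditions in the definitions of the valid moves hold identically in $G$. The accumulation step is routine given \cref{lem:arbitrary-acc}; the conceptual crux is simply ensuring that the interface between $H$ and the rest of $G$ plays no role once the robots have been gathered onto $X$.
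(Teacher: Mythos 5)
Your proposal is correct and follows essentially the same route as the paper: accumulate both configurations onto a fixed $p$-subset of the large $2$-connected block via \cref{lem:arbitrary-acc}, apply \cref{prop:2-conn-solvability} inside the block, and conclude by transitivity of $\sim$. Your extra care in verifying that moves within the block remain valid moves in the ambient graph $G$ is a point the paper states only implicitly, but it is the same argument.
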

\begin{proof}
    Let $C \subseteq V$ be a 2-connected component of size at least $p$ that is not a cycle. Let $C' \subset C$ be an arbitrary subset of $C$ having size $|C'| = p$.

    Let $S:R \to V$ and $T:R \to V$ be arbitrary configurations. Due to \cref{lem:arbitrary-acc}, there exists a configuration $S':R \to C'$ such that $S \sim S'$. Similarly, there exists a configuration $T':R \to C'$ such that $T \sim T'$. Further, due to \cref{prop:2-conn-solvability}, $S'$ and $T'$ are reachable from each other, just by using the vertices and edges in the graph induced by $C$; this gives $S' \sim T'$. Putting everything together, we have $S \sim S' \sim T' \sim T$.  This holds true for any pair of arbitrary configurations $S$ and $T$, hence $G$ is universally solvable for $p$ robots.
\end{proof}

With this, we are ready to state and prove the following lemma which is crucial to optimizing the deterministic algorithm.

\begin{lemma}[$\star$]\label{lem:bounded-edges}
    For $p \ge 2$, let $G(V,E)$ be a connected graph without any 2-connected component of size at least $p$ which is not a cycle. Then $|E| < p \cdot |V|$. 
\end{lemma}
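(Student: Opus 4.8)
The plan is to decompose the connected graph $G$ into its \emph{blocks} (maximal subgraphs containing no cut vertex), which are exactly the bridges together with the maximal $2$-connected subgraphs, and to bound the edge count block by block. Since every edge lies in exactly one block, if $B_1, \dots, B_r$ are the blocks and $n_i = |V(B_i)|$, then $|E| = \sum_{i=1}^r |E(B_i)|$. The first ingredient I would establish is the standard vertex identity $\sum_{i=1}^r (n_i - 1) = |V| - 1$. I would prove this by building $G$ up along its block-cut tree: the first block contributes $n_1$ vertices, and each successive block is attached at a single cut vertex and therefore contributes $n_i - 1$ new vertices, giving $|V| = 1 + \sum_i (n_i - 1)$.

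The second ingredient is a uniform per-block bound $|E(B_i)| \le p\,(n_i - 1)$. I would verify this by classifying each block under the hypothesis that no block is a non-cycle $2$-connected graph of size at least $p$: (i) a bridge has $n_i = 2$ and a single edge, and $1 \le p = p(n_i-1)$; (ii) a cycle block has $n_i$ edges with $n_i \ge 3$, and $n_i \le 2(n_i-1) \le p(n_i - 1)$ since $p \ge 2$; (iii) a $2$-connected non-cycle block must, by hypothesis, satisfy $n_i < p$, so using that $G$ is simple, $|E(B_i)| \le \binom{n_i}{2} = \tfrac{n_i}{2}(n_i-1) \le p(n_i-1)$ because $n_i < p < 2p$.

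Combining the two ingredients gives $|E| = \sum_i |E(B_i)| \le p \sum_i (n_i-1) = p(|V|-1) = p|V| - p < p|V|$, where the final strict inequality uses $p \ge 2 > 0$; note the overall bound is strict even though each per-block estimate is only non-strict. The main obstacle I anticipate is not any single inequality but getting the bookkeeping right: pinning down the exact convention for ``$2$-connected component'' (in particular whether single-edge bridges are treated as blocks), confirming that the blocks partition the edge set, and handling the boundary cases of the smallest blocks (e.g.\ a cycle on exactly $3$ vertices, or the smallest non-cycle block on $4$ vertices) so that case (iii) genuinely applies only when $n_i \ge 3$. Once the block identity and the three-way case analysis are in place, the summation is immediate.
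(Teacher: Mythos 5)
Your proposal is correct and follows essentially the same route as the paper: decompose $G$ into blocks, use the hypothesis to bound the edges of each non-cycle block by a quadratic in its (necessarily $<p$) size, and sum. The only difference is bookkeeping — you use the standard identity $\sum_i (n_i-1) = |V|-1$ with a per-block bound of $p(n_i-1)$, whereas the paper bounds $\sum_i s_i \le 2|V|-2$ via the count of blocks and works with $\tfrac{p}{2}s_i$ per block; your accounting is marginally cleaner (it yields $|E| \le p(|V|-1)$) and handles bridge blocks explicitly, but the substance is identical.
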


This coupled with \cref{lem:2-conn-larger-than-p} gives us the following observation.

\begin{observation}\label{obs:no-instances-sparse}
    Every graph $G(V,E)$ that is not universally solvable for $p$ robots, must satisfy $|E| < p \cdot |V|$.
\end{observation}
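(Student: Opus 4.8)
The plan is to prove \cref{obs:no-instances-sparse} as a direct corollary of the two preceding results, arguing by contraposition: I would assume $G(V,E)$ is not universally solvable for $p$ robots and deduce the bound $|E| < p\cdot|V|$. Throughout I work with a \emph{connected} $G$, which is the regime in which \cref{lem:2-conn-larger-than-p} and \cref{lem:bounded-edges} are stated. The disconnected case is handled separately, since the algorithms first perform a connectivity check and reject disconnected inputs outright (a disconnected graph is never universally solvable); so the edge bound only ever needs to be invoked on connected instances, and indeed a disconnected graph can be both dense and unsolvable, so the connected scope is genuinely necessary.

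First I would invoke \cref{lem:2-conn-larger-than-p} in contrapositive form. That lemma guarantees that a connected graph possessing a $2$-connected component of size at least $p$ that is not a cycle is automatically universally solvable. Since our $G$ is \emph{not} universally solvable, it can contain no such component: every $2$-connected component of $G$ is either a cycle or has fewer than $p$ vertices. This is precisely the structural hypothesis required by \cref{lem:bounded-edges}.

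The second step is then immediate. Applying \cref{lem:bounded-edges} to $G$ (connected, with $p\ge 2$, and satisfying the ``no large non-cycle $2$-connected component'' condition just established) yields $|E| < p\cdot|V|$, which is the claim. Thus the observation is nothing more than the chain ``not universally solvable $\Rightarrow$ no large non-cycle block $\Rightarrow$ sparse.'' The step I expect to require the most care is not a calculation but the scoping issue above: one must read ``every graph'' within the connected regime, since the genuine combinatorial content lives entirely in \cref{lem:bounded-edges}, which I am entitled to assume here.

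For completeness, the way I would establish that assumed lemma is via the block-cut decomposition. Writing the blocks as $B_1,\dots,B_k$ with $n_i = |V(B_i)|$ and $m_i = |E(B_i)|$, a connected graph satisfies the telescoping identity $\sum_i (n_i - 1) = |V| - 1$ (each block past the first attaches at a single already-present cut vertex). I would then prove a uniform per-block bound $m_i \le p\,(n_i-1)$ by cases: a bridge gives $m_i = n_i - 1$; a cycle gives $m_i = n_i \le 2(n_i-1) \le p(n_i-1)$ since $p \ge 2$; and a non-cycle $2$-connected block has $n_i < p$, so $m_i \le \binom{n_i}{2} = \frac{n_i}{2}(n_i-1) < p\,(n_i-1)$. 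Summing over blocks gives $|E| = \sum_i m_i \le p\sum_i (n_i-1) = p(|V|-1) < p\cdot|V|$.
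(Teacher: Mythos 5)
Your proof is correct and matches the paper's: the observation is obtained exactly as the contrapositive chain through \cref{lem:2-conn-larger-than-p} followed by \cref{lem:bounded-edges}. Your explicit handling of the connectivity scope (a disconnected graph can be dense yet unsolvable, so the bound must be read on connected instances, which is how the paper invokes it in \cref{thm:usolr-complexity}) is a point the paper glosses over, and your per-block bound $m_i \le p\,(n_i-1)$ summed via $\sum_i (n_i-1)=|V|-1$ is a slightly cleaner route to \cref{lem:bounded-edges} than the paper's appendix computation, though that lemma is assumed here in any case.
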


 % Also, note that the above observation also holds for $p=1$, as if a graph $G$ is not universally solvable for $p=1$, it means $G$ is not connected, so we have $|E| < |V|$, therefore, $|E| < p \cdot |V|$ as $p = 1$. \todo{Pranav: This shd be removed if we are assuming $p \ge 2$. Also, check anywhere else if $p=1$ is explicitly discussed.}

With this, we are ready to state our final optimized algorithm.

\begin{theorem}\label{thm:usolr-complexity}
    Let $G(V, E)$ be a graph and let $p$ denote the number of robots. There exists a deterministic polynomial-time algorithm for the $\USOLR$ problem whose running time is bounded as follows:
    \[
    \begin{cases}
    \mathcal{O}(p \cdot (|V| + |E|)) & \text{if } |E| < p \cdot |V|, \\
    \mathcal{O}(|V| + |E|)  & \text{if } |E| \geq p \cdot |V|.
    \end{cases}
    \]
\end{theorem}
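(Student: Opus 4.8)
The plan is to combine the deterministic algorithm of \cref{thm:deter-alg} with the structural dichotomy captured by \cref{lem:2-conn-larger-than-p}, \cref{lem:bounded-edges}, and \cref{obs:no-instances-sparse}, using the edge count as a cheap threshold that decides which regime we are in. First I would read the input, count $|V|$ and $|E|$, and run a single DFS/BFS to test connectivity; this already costs $\mathcal O(|V|+|E|)$. If $G$ is disconnected I output NO, since a disconnected graph can never be universally solvable.

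For a connected $G$, I branch on whether $|E| \ge p\cdot|V|$. In the dense case I claim the correct answer is always YES, so I output YES with no further computation. The justification is purely structural: by the contrapositive of \cref{lem:bounded-edges}, a connected graph with $p \ge 2$ and $|E| \ge p\cdot|V|$ must contain a $2$-connected component of size at least $p$ that is not a cycle, and by \cref{lem:2-conn-larger-than-p} any such connected graph is universally solvable. The crucial point is that I never have to exhibit or compute this component — its mere existence is forced by the edge bound — so the dense branch costs only the $\mathcal O(|V|+|E|)$ already spent reading the input and checking connectivity. Since $p\ge 2$ forces $|E|\ge 2|V|$ in this branch, we also have $|V|+|E| = \mathcal O(|E|)$, matching the $\mathcal O(|E|)$ bound advertised for dense graphs.

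In the sparse case $|E| < p\cdot|V|$, I simply invoke the algorithm of \cref{thm:deter-alg}, which decides \USOLR\ correctly in $\mathcal O\bigl(p\cdot(|V|+|E|)\bigr)$ time; taking the two branches together yields exactly the stated piecewise bound. Correctness of the sparse branch is inherited verbatim from \cref{thm:deter-alg}, while correctness of the dense branch is the content of \cref{obs:no-instances-sparse}, read as: connectivity together with $|E|\ge p\cdot|V|$ forces a YES-instance.

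The step I expect to require the most care is not any calculation but the bookkeeping around connectivity. The structural lemmas \cref{lem:2-conn-larger-than-p} and \cref{lem:bounded-edges} both assume $G$ is connected, whereas a disconnected graph can be dense yet still be a NO-instance (for example a large clique together with several isolated vertices can satisfy $|E|\ge p\cdot|V|$ while being disconnected). I would therefore make the connectivity test the very first action of the algorithm and apply the density shortcut only after connectivity is confirmed, so that the dense branch can never misclassify a disconnected dense graph as universally solvable.
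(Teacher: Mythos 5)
Your proposal is correct and follows essentially the same route as the paper: a linear-time connectivity check, then the density shortcut $|E|\ge p\cdot|V|$ justified via \cref{obs:no-instances-sparse} (itself a consequence of \cref{lem:bounded-edges} and \cref{lem:2-conn-larger-than-p}), and otherwise a call to the algorithm of \cref{thm:deter-alg}. Your explicit remark that the connectivity test must precede the density shortcut is a sound reading of the same ordering used in the paper's \cref{alg:opt-alg}.
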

\begin{proof}
    Let us consider the instance $G(V,E)$, $p$ of \USOLR. If $G$ is disconnected, it is not universally solvable. Otherwise, by \cref{obs:no-instances-sparse}, whenever $|E| \ge p \cdot |V|$, the input instance must be a universally solvable. Therefore, we can check if $|E| \ge p \cdot |V|$ and output YES if the check passes. Otherwise we run \cref{alg:deter-alg} on the input instance to get the answer. 

    The correctness is of this algorithm is due to \cref{obs:no-instances-sparse} and the correctness of \cref{alg:deter-alg}. All that remains is to analyse the running time of this algorithm.

    If $G$ is disconnected, we detect in linear time, $\OO(|V| + |E|)$, and output NO correctly. If $|E| \ge p \cdot |V|$, we take input the graph, perform the check, and output YES correctly. This takes linear time, $\OO(|V| + |E|)$, in total. Otherwise, if $|E| < p \cdot |V|$, then we run \cref{alg:deter-alg}, which takes time $\OO(p \cdot (|V| + |E|))$ (\cref{thm:deter-alg}). This completes the proof.
\end{proof}

The pseudocode of this algorithm, called \cref{alg:opt-alg} can be found in the Appendix. \cref{alg:opt-alg} performs at least as good as \cref{alg:deter-alg} in all graphs. Moreover, when $p$ is small, and the graph is dense enough, \cref{alg:opt-alg} utilises \cref{obs:no-instances-sparse} to output YES correctly. Therefore, the speed up is significant for dense graphs with small number of robots.

\section{Augmenting Graphs for Universal Solvability}\label{sec:graph-augment} Having established the existence of polynomial-time deterministic and randomized algorithms for the $\USOLR$ problem, we now ask a natural follow-up question: Given an instance $(G (V, E), p)$ that is not universally solvable, can we augment the graph by adding a limited number of %vertices and 
edges to make it universally solvable for $p$ robots? 
% , such that the robots move across these paths from storage shelves to packing stations in a coordinated manner. 
% It could be that such a graph is not universally solvable for a certain number of robots $p$ and above. Then, in such a case, during the design process, we would like to know if we can add a limited number of additional edges (i.e., traversable paths in the form of cross-over bridges or extra lanes for the robots) to make the graph universally solvable for any number of robots $p$, so that the possibility of routing between two unreachable configurations is eliminated in the design phase itself. 
In this Section, we give combinatorial bounds on the minimum number of additional edges that need to be added to make the graph universally solvable for any number of robots $p$, so that the graph can reconfigure across any configurations of robots. 
% Moreover, the approach can also be used to augment an existing warehouse design to make it universally solvable. In such cases, too, the budget constraint on the number of vertex and edge additions models the scenario of allowing minimal structural changes to the existing graph—an important consideration in practical settings like warehouses, where physical modifications may be costly.
We now formally state the problem below.

% \defproblem{\textsc{Graph Augmentation for Universal Solvability}}{A connected graph $G(V,E)$, the number of robots $p$ (such that $(G, p)$ is not universally solvable), budgets $\alpha$ and $\beta$ for vertex and edge additions resp.}{Decide if the graph $G$ can be made universally solvable for the given $p$ by adding at most $\alpha$ new vertices and $\beta$ new edges to $G$}.
\defproblem{\textsc{Graph Edge Augmentation for Universal Solvability (EAUS)}}{A connected graph $G(V,E)$, the number of robots $p$ (such that $(G, p)$ is not universally solvable), budget $\beta$ for edge addition}{Decide if the graph $G$ can be made universally solvable for the given $p$ by adding at most $\beta$ new edges to $G$}

For the problem to be well-defined, we specify the number of robots $p$ for which we want $G$ to be universally solvable as part of the input. For the warehouse setting, where we would ideally want universal solvability for any number of robots $p \le |V|$, simply solving the above problem for $p=|V|$ guarantees universal solvability for all $p \le |V|$. This is because, for any configuration $S$ with the number of robots $p < |V|$, we can consider a configuration $S'$ of $|V|$ robots, with $p$ robots occupying the same vertices as in $S$ and $(|V|-p)$ dummy robots occupying the remaining $(|V|-p)$ empty vertices of $S$. Thus, if we can route between any two configurations of $|V|$ robots, we can route between any two configurations of a subset of size $p$ of $|V|$ robots, by transforming them into configurations of size $|V|$ by adding $(|V|-p)$ dummy robots.

% First of all, we argue that it suffices to achieve universal solvability for the case when number of robots $p = |V|$. This is because, for any configuration $S$ with the number of robots $p < |V|$, we can consider a configuration $S'$ of $|V|$ robots, with $p$ robots occupying the same vertices as in $S$ and $|V|-p$ dummy robots occupying the remaining $V-p$ empty vertices of $S$. Thus, if we can route between any two configurations of $|V|$ robots, we can route between any two configurations of a subset of size $p$ of $|V|$ robots, by transforming them into configurations of size $|V|$ by adding $|V|-p$ dummy robots.

We study the {\sc EAUS} problem only for connected graphs and further assume that the graph $G$ has at least four vertices. For the case where the number of vertices $|V|$  is less than or equal to three, a simplistic brute force approach can be used to solve {\sc EAUS}. On general graphs we do a case analysis for different classes of graphs based on vertex and edge connectivity. We rely upon certain structural results related to universal solvability discussed by Yu and Rus~\cite{yu2014pebblemotiongraphsrotations} for our analysis. We study combinatorial bounds on the number of edge additions $\beta$ for the problem for the different graph classes below. 

\subsection{2-connected graphs}
Here, we show that any $2$-connected graph is a YES-instance of {\sc EAUS} when $\beta = 1$. We analyse the case of 2-connected graphs in two subcases, as follows:
% \todo{Pranav: better to write as subsection or subparagraph (like in Prelims) ?}

\subparagraph*{2-connected graphs that are not simple cycles:}
Previously, we discussed that the class of 2-connected graphs, that are not simple cycles, is universally solvable for all $p$ (\cref{prop:2-conn-solvability}). Therefore, these graphs do not require any edge additions for achieving universal solvability. The remaining case is that of simple cycles.

\subparagraph*{Simple cycles:}
For the case, when the graph $G(V, E)$ is a simple cycle, it is possible to make $G$ universally solvable for all $p \le |V|$ with just one extra edge addition. We can add the extra edge as any chord of the cycle $G$, thus forming a 2-connected component of size $|V|$, which is not a cycle. By \cref{prop:2-conn-solvability}, this becomes universally solvable. Therefore, an edge budget $\beta =1$ suffices for achieving universal solvability of simple cycles. 
% \todo{Pranav: a diagram might be very helpful}

% \subparagraph*{2-edge connected but not 2-connected graphs.}
\subsection{2-edge connected but not 2-connected graphs}

% \todo{Anubhav: Pranav please add lemma that says for a 2-EC but not 2-CC graphs, it is not universally solvable iff all 2 connected components are odd cycles. label it with label{lem:odd-cactus}}

Any graph $G$ that is 2-edge connected but not 2-connected, is universally solvable for all $p < |V|$, as proved in Lemma 12 in Yu and Rus~\cite{yu2014pebblemotiongraphsrotations}. Thus, if such a graph is not universally solvable, it must be that $p =|V|$. We therefore discuss the {\sc EAUS} problem for this class of graphs for $p=|V|$ only. Since $G$ is 2-edge connected but not 2-connected, there are no cut edges in $G$, and there is at least one cut vertex in $G$. Now, before discussing this graph class, we first state a lemma that we need for our analysis.

\begin{lemma}\label{lem:odd-cactus}
    Given a graph $G(V,E)$ that is 2-edge connected but not 2-connected, it is not universally solvable for $p = |V|$ if and only if all the 2-connected components of $G$ are odd-length simple cycles. 
\end{lemma}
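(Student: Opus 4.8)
The plan is to translate the problem into the language of permutation groups in the fully packed case and argue by parity. Since $p = |V|$, every configuration is a bijection $R \to V$, and because a simple path move requires an empty endpoint (the condition $u_k \notin S(R)$ in its definition), the only non-trivial valid moves are simple rotations. Fixing the identity configuration $S_I$, the set of configurations reachable from $S_I$ corresponds to a subgroup $\mathbf{G} \le \mathbf{S}_n$ generated by the rotations, and $G$ is universally solvable for $p = |V|$ precisely when $\mathbf{G} = \mathbf{S}_n$. I will repeatedly use the standard fact that every simple cycle of $G$ lies inside a single $2$-connected component (block), so the simple cycles of $G$ are exactly those lying inside its blocks; since $G$ is $2$-edge connected it has no bridges, so every block has at least three vertices and is either a cycle or a $2$-connected non-cycle.

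For the ``if'' direction (all blocks are odd cycles $\Rightarrow$ not universally solvable), I observe that when every block is an odd cycle, the only simple cycle inside a block is the block itself, so every simple cycle of $G$ has odd length. A rotation along a cycle of length $k$ acts as a single $k$-cycle on the occupied vertices, whose sign is $(-1)^{k-1}$; for odd $k$ this is an even permutation. Hence every generator of $\mathbf{G}$ lies in $\mathbf{A}_n$, so $\mathbf{G} \le \mathbf{A}_n \subsetneq \mathbf{S}_n$. Consequently the odd permutations of $S_I$ are unreachable, and $G$ is not universally solvable.

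For the ``only if'' direction I prove the contrapositive: if some block $B$ is not an odd cycle, then $G$ is universally solvable. The core step is to exhibit an even-length simple cycle in $G$. If $B$ is an even cycle we are done. Otherwise $B$ is a $2$-connected non-cycle, which contains a theta subgraph, i.e. two vertices joined by three internally disjoint paths, of lengths $a$, $b$, $c$. The three resulting cycles have lengths $a+b$, $b+c$, $c+a$, whose sum $2(a+b+c)$ is even, so they cannot all be odd; at least one is an even cycle. Either way, the rotation along this even cycle is an odd permutation, so $\mathbf{G} \not\le \mathbf{A}_n$. Now I invoke the result of Yu and Rus~\cite{yu2014pebblemotiongraphsrotations} that, for a graph which is $2$-edge connected but not $2$-connected, the rotation group satisfies $\mathbf{G} \supseteq \mathbf{A}_n$ (noting that performing disjoint rotations sequentially generates the same group as their simultaneous version, so their model and ours yield the same $\mathbf{G}$). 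Since $\mathbf{A}_n$ has index two in $\mathbf{S}_n$, any subgroup strictly containing it equals $\mathbf{S}_n$; hence $\mathbf{G} = \mathbf{S}_n$ and $G$ is universally solvable.

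The main obstacle is the even-cycle existence claim for a $2$-connected non-cycle block; the theta-subgraph argument resolves it cleanly, but one must justify that the theta subgraph exists (a $2$-connected graph that is not a cycle has a vertex of degree at least three, and combining a cycle through it with an ear returning to that cycle yields three internally disjoint paths between two branch vertices). The second delicate point is importing the $\mathbf{A}_n$-containment of Yu and Rus and confirming it applies verbatim to the present single-rotation model; I expect this to be routine given the sequential-versus-simultaneous remark, but it is where the argument leans most heavily on prior work.
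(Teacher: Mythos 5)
Your proof is correct, and it takes a genuinely different route from the paper's. The paper argues by cases on the block structure: when every block is a cycle it delegates both subcases (all odd, versus at least one even) to Theorem~2 of Yu and Rus, and when some block is a $2$-connected non-cycle it invokes \cref{prop:2-conn-solvability} to realize an adjacent transposition inside that block and then asserts that the two reachability classes ``collapse'' into one. You instead make the parity mechanism explicit. Your unsolvability direction is a self-contained sign computation --- every simple cycle of $G$ lies inside a single block, so when all blocks are odd cycles every rotation is an even permutation and the reachable group sits inside $\mathbf{A}_n$ --- which is more elementary than citing Yu--Rus's two-class theorem. Your solvability direction combines Yu--Rus's containment $\mathbf{G} \supseteq \mathbf{A}_n$ for $2$-edge-connected, non-$2$-connected graphs with an explicitly exhibited odd generator: a rotation along an even cycle, obtained either directly from an even-cycle block or from a theta subgraph inside a non-cycle block (the first ear of an ear decomposition justifies the theta's existence, and the sum $2(a+b+c)$ parity trick is sound). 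The index-two step then makes precise what the paper leaves as an informal ``collapse.'' What each approach buys: the paper's use of \cref{prop:2-connectivity} is wrong --- I mean \cref{prop:2-conn-solvability} --- lets it avoid the theta-subgraph construction entirely, while your version isolates exactly which part of Yu--Rus is really needed (only the $\mathbf{A}_n$-containment) and correctly flags the one model-translation issue, namely that sequential single-cycle rotations generate the same group as Yu--Rus's simultaneous disjoint rotations.

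(Erratum to the previous sentence: the intended reference is \cref{prop:2-conn-solvability}; there is no other proposition involved.)
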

\begin{proof}
If $G(V,E)$ is a 2-edge connected but not 2-connected graph, then $G$ can be decomposed into 2-connected components sharing cut vertices between them. Now, each of these 2-connected components may or may not be a simple cycle. 

Firstly, we discuss the case when all the 2-connected components are simple cycles. Then, if all the 2-connected components are odd-length cycles, then the graph $G$ is not universally solvable for $p = |V|$, while even if one of the cycles is an even-length cycle, then the graph $G$ is universally solvable for $p = |V|$. This is because in the case of 2-edge connected but not 2-connected graphs, if all 2-connected components are odd-length cycles, all configurations are partitioned into two equivalence classes which are unreachable for each other, such that two configurations which are identical except any two adjacent vertices swapped, lie in different equivalence classes. However, if at least one of the cycles is of even length, it allows two configurations which are identical except two vertices swapped, to become reachable from each other, thus collapsing the two equivalence classes into one class. The detailed proof of this has been discussed in Theorem 2 of~\cite{yu2014pebblemotiongraphsrotations}.

Now, we consider the general case when at least one 2-connected component is not a simple cycle. Then, such a 2-connected component must contain at least 4 vertices. Now, consider one such 2-connected component $g$ of size at least 4 which is not a cycle, and denote its size as $s_{g}$. Then, we can route between any two configurations of $s_{g}$ robots in $g$ (\cref{prop:2-conn-solvability}). In particular, we can route between two configurations of $s_{g}$ robots, which are identical except that the robots on two adjacent vertices are swapped. Now, consider the configurations of $|V|$ robots over the entire graph $G$. Then, any two configurations of $|V|$ robots, which are identical except robots on two adjacent vertices of $g$ swapped, are also reachable from each other. Thus, the two equivalence classes of reachability (as there are in the case when all 2-connected components are odd-length cycles) again collapse into one class, implying universal solvability of $G$ for $p = |V|$.

Thus, a 2-edge connected but not 2-connected graph is universally solvable if and only if at least one of its 2-connected component is an even-length cycle or is a 2-connected component of size at least 4 which is not a cycle.
\end{proof}

Now, we first consider the special case where all the 2-connected components in $G$ are just triangles or $3$-cycles. For $p=|V|$, this graph class is not universally solvable (due to \cref{lem:odd-cactus}). To make it universally solvable for $p=|V|$, we consider two adjacent triangles $T_1$ and $T_2$ that share a common vertex, and add an edge joining one vertex of $T_1$ (one of the two vertices that are not shared with $T_2$) with one vertex of $T_2$ (again, one of the two vertices that not shared with $T_1$). This creates a 2-connected component of size 5 which is not a cycle. By \cref{lem:odd-cactus} this makes the graph universally solvable. %, the robots in this 5 sized component can be permuted among themselves in all possible permutations. In particular, we can realise a cyclic permutation involving any 4 robots. This is equivalent to having a 4-cycle in the \todo{Anubhav: size 5}, which can simulate \todo{Anubhav: simulate} a 4-length cycle (an even-length cycle) \todo{Pranav: This also creates a 4-cycle too, so should I write that instead of this 4-sized 2-cc simulating an even cycle argument; Anubhav: you need to define what `simulate' means}. Now, it was argued in Theorem 2 in Yu and Rus~\cite{yu2014pebblemotiongraphsrotations} that \todo{Pranav: They don't write this as a Lemma or a separate Proposition, but briefly discuss it in a paragraph? So how do I reference this result of their paper? Tanishq: Just state something "they talked about this in their paper after Theorem \dots closest theorem"; Anubhav: Maybe state this as a proposition in our paper in the beginning of this subsection and then refer to that. This result is being used multiple times.}a 2-edge connected (but not 2-connected) graph with an even-length cycle is universally solvable for $p=|V|$. 
Hence, adding just 1 extra edge (i.e., $\beta = 1$) suffices to make this graph class solvable.

 Thus, the other case that remains is when at least one of the 2-connected components of $G$ has size 4 or more. Now, for $G$ to be not universally solvable for $p = |V|$, all such 2-connected components of size 4 or more must be only simple odd-length cycles (due to \cref{lem:odd-cactus}). %This is because if even one of them is an even-length cycle, then $G$ is universally solvable for $p=|V|$ (due to \cref{lem:odd-cactus}). Similarly, if at least any one of them is a 2-connected component that is not a simple cycle (with at least 4 vertices), then it has as a subgraph a 2-connected component of size 4, that can effectively simulate a 4-cycle (an even cycle), again implying that $G$ is universally solvable for $p=|V|$ (Theorem 2 in Yu and Rus~\cite{yu2014pebblemotiongraphsrotations}). Thus, we have that such a graph $G$ has all 2-connected components of size at least 4 to be odd-length cycles. 
 We can add one extra edge to any one of these cycles to create a 2-connected component of size at least 4 that is not a simple cycle, making $G$ universally solvable for $p$ robots (again due to \cref{lem:odd-cactus}). %which can now simulate \todo{Anubhav: define simulat} a 4-cycle, hence making $G$ universally solvable. \todo{Pranav: Anubhav, plz check this arg. once. By this argument, any 2-cc of size >= 4 must guarantee univ. solvability, seems wrong to me? Then we can optimize our deter algo further ? Anubhav: we assume $p = |V|$ and not 2-edge connectedness; can't assume them in the algorithm}

 Thus, an edge budget $\beta = 1$ suffices to augment the class of 2-edge connected but not 2-connected graphs to achieve universal solvability for $p=|V|$.

% \subparagraph*{1-edge connected graphs.} 
\subsection{1-edge connected graphs}
For the class of 1-edge connected graphs, we provide both a lower bound and an upper bound (in terms of the number of robots $p$) for the number of edge additions required to make a graph $G$ universally solvable for $p$ robots. Thus, unlike the previous two classes of graphs, where a single edge addition suffices to achieve universal solvability, there are 1-edge connected graphs that do not become universal solvable by a constant number of edge additions.

\subparagraph*{Upper-bound.} 
We first prove an upper bound on the edge addition budget $\beta$ to achieve universal solvability for $p$ robots. We show that for a given instance $(G(V, E), p)$ which is not universally solvable, where $G$ is a 1-edge connected graph, at most $p-2$ edge additions suffice to make $G$ universally solvable for $p$ robots (i.e., $\beta \le p-2$). Clearly, since $G$ is not universally solvable for given $p$, $G$ must not have any 2-connected component which is not a cycle of size at least $p$ (\cref{lem:2-conn-larger-than-p}).
% This can be argued by contradiction because if $G$ has a 2-connected component of size at least $p$, then all the $p$ robots in any configuration can be routed to that component using an approach similar to the Accumulation procedure described in \cref{alg:acc-alg}. Now, within this 2-connected component, any pair of configurations of $p$ robots are mutually reachable (\cref{prop:2-conn-solvability}). Thus, all 2-connected components of $G$ are of size less than $p$. 
Now, consider any vertex $w$ in $G$ with degree at least 2. Let $x$ and $y$ be two of its neighbours. Consider $p-3$ other vertices and fix an arbitrary order of these $p-3$ vertices as $v_4, v_5, ..., v_p$. Now consider the vertex sequence $(x(=v_1), w(=v_2), y(=v_3), v_4, ..., v_p)$. $G$ already has $\{w,x\}$ and $\{w,y\}$ as edges.
% Clearly, there is an edge between $v_0$ and $v_1$ and between $v_1$ and $v_2$ as $v_0$ and $v_2$ are neighbours of $v_1$. 
Now, we add all edges connecting adjacent vertices $\{v_i, v_{i+1}\}$ (for convenience, define $v_{p+1} := v_1$) for all $i \in \{3, 4..., p \}$ to form a cycle. Some of these edges may already be present, so we add a maximum of $p-2$ edges this way. Now, we analyse the case of $p =|V|$ and $p < |V|$ separately.
\begin{itemize}
    \item For the case $p =|V|$, consider the structure of $G$ before we augmented $G$ to form the $p$-sized cycle. Since $|V| \ge 4$ and $G$ is connected, therefore, $G$ must have at least 3 edges (before addition of the extra edges). Two of these edges are $\{w,x\}$ and $\{w,y\}$. Therefore, there must have been at least one more edge in $G$ before the addition of extra edges. Now, if that edge connected two adjacent vertices $\{v_i, v_{i+1}\}$ for some $i \in \{3, 4, ..., p\}$ (again, define $v_{p+1} := v_1$), then we would have added at most $p-3$ edges to form the cycle. Then, we could add an extra edge between any two non-adjacent vertices of the cycle we created to form a chord, thus resulting in a 2-connected component of size $p$, which is not a cycle. Thus, we add a maximum of $p-2$ extra edges.
    If the extra edge was not between any two adjacent vertices of the cycle, then clearly, it must have been a chord between two non-adjacent vertices of the cycle (since all vertices are part of the cycle), and therefore, we again have a 2-connected component of size $p$ which is not a simple cycle. This makes the graph universally solvable (\cref{prop:2-conn-solvability}). Thus, we can achieve universal solvability for a general graph $G(V, E)$ and $p=|V|$ with a budget $\beta = p-2$ for edge additions in the worst case.
    \item For the case $p < |V|$, again consider the structure of $G$ before we augmented $G$ to form the $p$-cycle. 
    %$G$ must have at least 3 edges (before the addition of the extra edges), with two of them being $(w(=v_1), x(=v_0))$ and $(w, y(=v_2))$. Therefore, %
    Since $G$ is assumed to be connected, there must be at least one edge in $G$ connecting a vertex that is part of the $p$-cycle to a vertex not in the cycle. Let $\{v_j, u\}$ be such an edge, where $v_j$ is part of the $p$-cycle formed by extra edge additions. We now state the following observation, which can be easily proven.

    \begin{observation}
        For $p$ robots, consider a connected graph $H$, which has a subgraph $H'$ of $p+1$ vertices with the following structure:
        \begin{itemize}
            \item $v_1, v_2, \ldots, v_{p-1}, v_{p}, v_{p+1} (=v_1)$ is a cycle.
            \item $v_0$ is a degree-1 vertex in $H'$ with $v_1$ as its neighbour. We call $v_0$ the `waiting vertex' in $H'$.
        \end{itemize}
        Then $H$ is universally solvable for $p$ robots.
        % Given a cycle of length $c$, with an unoccupied pendant vertex attached to one of the vertices of the cycle, then any pair of configurations  of $p \le c$ robots on the cycle are mutually reachable.
    \end{observation}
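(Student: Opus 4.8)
The plan is to reduce, via the accumulation machinery of \cref{sec:accumulation}, to reasoning purely about permutations of the $p$ robots on the $p$-cycle, and then to show that the \emph{waiting vertex} $v_0$ lets us generate the full symmetric group $\mathfrak{S}_p$ on that cycle. Since $H$ is connected, \cref{lem:arbitrary-acc} guarantees that every configuration of $p$ robots on $H$ is reachable from some configuration occupying exactly the vertex set $\{v_1, \ldots, v_p\}$ of the cycle. Hence it suffices to prove that any two configurations occupying the cycle are mutually reachable: if $S \sim S'$, $T \sim T'$ with $S',T'$ on the cycle, and $S' \sim T'$, then $S \sim T$. Each configuration occupying the cycle leaves $v_0$ empty and corresponds to a permutation $\pi \in \mathfrak{S}_p$ placing robot $\pi(i)$ at $v_i$; in the spirit of \cref{obs:uni-solvable} it is then enough to realise every such $\pi$ from the identity arrangement, using only moves inside $H'$ (a subgraph of $H$).

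Next I would exhibit two families of moves whose net effects, viewed as permutations of the robots on the cycle with $v_0$ empty before and after, generate $\mathfrak{S}_p$. The first is the \emph{rotation} $g_2$: since all cycle vertices are occupied, a single simple rotation move along the cycle is valid and cyclically shifts every robot by one, realising the $p$-cycle $(1\,2\,\cdots\,p)$. The second is a \emph{park-and-loop} maneuver $g_1$: slide the robot at $v_1$ into the empty $v_0$ (a simple path move), push the resulting hole once all the way around the cycle back to $v_1$ (a sequence of single-edge path moves), and finally slide the parked robot from $v_0$ back to $v_1$. Tracing this, the robot originally at $v_1$ returns to $v_1$ while the remaining $p-1$ robots are cyclically permuted among $v_2, \ldots, v_p$; thus $g_1$ is a $(p-1)$-cycle fixing $v_1$. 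Both maneuvers begin and end with the hole at $v_0$, so their effects compose as permutations, and since each is a sequence of valid moves, the associated configurations are genuinely reachable from one another (via \cref{obs:reachability_perm_composition} for transport between arbitrary cycle arrangements).

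The key computation is that the composition $g_2 \circ g_1$ collapses to the single adjacent transposition swapping the robots at $v_1$ and $v_2$, fixing all other positions. Granting this, $\mathfrak{S}_p$ is generated by the transposition $(1\,2)$ together with the $p$-cycle $(1\,2\,\cdots\,p)$, a standard fact, so every permutation of the robots on the cycle is realisable by valid moves returning $v_0$ to empty. Combined with the accumulation reduction of the first paragraph, all configurations occupying the cycle are mutually reachable, and therefore $H$ is universally solvable for $p$ robots.

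The main obstacle I anticipate is the careful bookkeeping in the second and third steps: verifying that the park-and-loop maneuver is exactly a $(p-1)$-cycle fixing precisely $v_1$ (with the hole truly returning to $v_0$ and every other robot back in place), and then that composing it with the rotation leaves only $v_1$ and $v_2$ transposed. One must also dispatch the degenerate case $p \le 2$ separately, since the closed walk $v_1, \ldots, v_p, v_{p+1}=v_1$ is a genuine simple cycle only for $p \ge 3$; for $p = 2$ universal solvability is immediate. Everything else is routine, resting on the accumulation lemmas and the elementary fact that an adjacent transposition and a full cycle generate the symmetric group.
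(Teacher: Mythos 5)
The paper states this observation without proof (it is introduced with ``which can be easily proven''), so there is no in-paper argument to compare yours against; your proposal supplies a complete and correct proof, and it is the natural one given the paper's machinery. The reduction via \cref{lem:arbitrary-acc} to configurations supported on the cycle vertices is exactly right, and the two generators do what you claim: the simple rotation realises the $p$-cycle $(1\,2\,\cdots\,p)$, while the park-and-loop maneuver (park the robot at $v_1$ on $v_0$, drive the hole once around the $p$-cycle by single-edge path moves, then unpark) returns the parked robot to $v_1$ and cyclically shifts the remaining $p-1$ robots by one position in the direction the hole travels. Taking the rotation in the opposite sense, the composition is precisely the transposition of the robots at $v_1$ and $v_2$; an adjacent transposition together with the full $p$-cycle generates $\mathfrak{S}_p$, and realisability of a permutation from the identity transfers to arbitrary base configurations by \cref{obs:reachability_perm_composition}, so universal solvability follows via \cref{obs:uni-solvable}. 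One correction to your closing remark: for $p=2$ the claim is not ``immediate'' and can in fact fail (take $H=H'$ to be the path $v_0 v_1 v_2$ with two robots, which can never be swapped); the observation implicitly requires $p\ge 3$ so that $v_1,\ldots,v_p,v_{p+1}=v_1$ is a genuine simple cycle, which is consistent with how the paper applies it when forming a $p$-cycle by edge additions.
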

    % and we have already argued in {\color{blue} \cref{sec:optimised_deterministic}} \todo{Pranav: Tanishq, have u discussed this in the optimized algorithm ? Also, Tanishq have u defined 2-connected and 2-edge connected comps ? }that any pair of configurations of $p$ robots on a cycle with a pendant are mutually reachable. 

     Therefore, the edge additions give us a $p$-cycle with a `waiting vertex' $u$ attached to $v_j$ appearing as a subgraph in $G$. Therefore, the entire augmented graph $G$ now becomes universally solvable for $p$ robots.
\end{itemize}
This proves the upper bound $\beta \le p-2$ to achieve universal solvability for $p$ robots for a 1-edge connected graph $G$.

\subparagraph*{Lower-bound.}
Now, we provide an asymptotically matching lower bound of $\Theta(p)$, even for the restricted class of 1-edge connected graphs. Consider the case $p=|V|$. Consider the star graph $G(V,E)$ with $|V|-1$ leaves. Note that for $p=|V|$, a robot can move only if it is located on a cycle. Since there are $|V|-1$ leaves, for all robots in these leaves to even be able to move, we need to at least make each leaf, a part of some cycle. This would require at least $(|V|-1)/2 = \Theta(|V|) = \Theta(p)$ additional edges. Moreover, for any tree with $\Theta(|V|) = \Theta(p)$ many leaves, at least $\Theta(|V|) = \Theta(p)$ additional edges would be required to make the graph universally solvable. We get a lower bound of $\beta \ge (|V|-1)/2$.

\begin{theorem}\label{thm:edge-aug-lb}
    There is an infinite family of graphs, such that for some graph $G(V,E)$ in it, for $p=|V|$ robots, universal solvability requires an edge budget of $\beta \ge \frac{|V|-1}{2} = \frac{p-1}{2}$.
\end{theorem}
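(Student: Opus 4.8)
The plan is to exhibit an explicit infinite family of graphs witnessing the claimed lower bound, and the star graph described just before the theorem statement is the natural candidate. Concretely, for each integer $n \ge 4$ I would take $G = K_{1,n-1}$, the star on $|V| = n$ vertices with one central vertex $c$ and $n-1$ leaves, and set $p = |V| = n$. This is a connected, $1$-edge connected graph, and the family $\{K_{1,n-1} : n \ge 4\}$ is visibly infinite, so the ``infinite family'' requirement is met by a single parametrised construction.

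The core of the argument is a counting obstruction based on the fully-packed regime $p = |V|$. First I would recall that when $p = |V|$ there are no empty vertices, so no simple path move is ever available; the only non-trivial valid moves are simple rotation moves along cycles all of whose vertices are occupied (\cref{obs:invariant} and the definitions in \cref{sec:prelims}). Hence a robot can change its position at all only if it lies on a cycle of the (augmented) graph. In the star $K_{1,n-1}$ there are no cycles whatsoever, so in the original graph no robot can move and $G$ is trivially not universally solvable. The key quantitative point is that every leaf must be made to lie on some cycle in the augmented graph $G'$ before its robot can ever move, and therefore before universal solvability is even possible. I would formalise this as: if some leaf $\ell$ still has degree $1$ in $G'$, then the robot at $\ell$ is frozen, so not every configuration is reachable, contradicting universal solvability. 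Thus every added edge raises degrees, and I would bound how many degree-$1$ vertices a fixed number of added edges can eliminate.

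The counting step is the crux. Each newly added edge has two endpoints, so adding $\beta$ edges increases the sum of vertex degrees by exactly $2\beta$ and can raise the degree of at most $2\beta$ distinct vertices above their original value. The $n-1$ leaves of the star each have degree $1$ and must each gain at least one incident new edge (an edge to the centre already exists, and a leaf gains a second neighbour only through an added edge, which is what places it on a cycle). Since each added edge can supply a new incident edge to at most $2$ leaves, we need $2\beta \ge n-1$, giving $\beta \ge (n-1)/2 = (|V|-1)/2 = (p-1)/2$. This yields exactly the stated bound $\beta \ge \frac{|V|-1}{2} = \frac{p-1}{2}$.

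I expect the main obstacle to be the precise justification of the implication ``every leaf must lie on a cycle,'' i.e.\ ruling out universal solvability whenever some leaf remains pendant after augmentation. The clean way to see this is that a pendant vertex (degree $1$) in the fully-packed case forces its robot to stay fixed across all valid moves, since neither a simple path move (no empty vertex exists) nor a simple rotation move (a degree-$1$ vertex lies on no cycle) can ever displace it; hence configurations differing in that robot's position are mutually unreachable, so $G'$ is not universally solvable. Once this is established, the double-counting of edge endpoints against the $n-1$ leaves needing a second incident edge is routine and delivers the bound. I would also remark, as the paragraph preceding the theorem already notes, that the same reasoning extends verbatim to any tree with $\Theta(|V|)$ leaves, which is why the lower bound is genuinely $\Theta(p)$ and asymptotically matches the $\beta \le p-2$ upper bound proved earlier.
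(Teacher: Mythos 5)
Your proposal is correct and follows essentially the same route as the paper: the star $K_{1,|V|-1}$ with $p=|V|$, the observation that in the fully-packed regime only rotation moves along occupied cycles are possible so every leaf must acquire a second incident edge, and the double-counting of edge endpoints yielding $2\beta \ge |V|-1$. Your write-up is in fact slightly more careful than the paper's, since you explicitly justify why a pendant vertex freezes its robot, a step the paper leaves implicit.
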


% \subparagraph*{}
Thus, we have both an upper-bound and a lower-bound for $\beta$ for 1-edge connected graphs, thus, showing that a constant budget for edge additions does not suffice to achieve universal solvability for such graphs.

% \subparagraph*{}
The above graph classes: 2-connected, 2-edge connected but not 2-connected, and 1-edge connected exhaustively cover all graphs, thus making our combinatorial analysis complete.

%\subparagraph*{} \todo{Pranav: A kind of hint at future work here. But need to refine these statements.} \todo{SK: this paragraph has to be removed - may be added to the conclusion, but I have already added some stuff there.}
%Another related question that can be explored in the future is to also introduce a vertex addition budget $\alpha$, and then determining if universal solvability can be achieved for a given unsolvable instance $(G, p)$ by atmost $\alpha$ vertex and $\beta$ additions. Similarly, our analysis above was restricted only to the case when the graph $G$ is connected. It would be interesting to extend this study on the graph augmentation problem to disconnected graphs and we keep it for future work.

% subsection on vertex additions
\subsection{Augmenting a set of connected vertex and edges}

Addition of vertices along with addition of edges to an existing graph $G(V,E)$ in order to make it universally solvable is also an interesting variant of the augmentation problem. Looking back at the instances showing lower bound of $\beta$ in 1-connected graphs, where a $p-1$ star required $\beta = \Theta(p) = \Theta(|V|)$ additional edges to make it universally solvable, we observe that adding two more leaves to the central vertex of the star graph makes it universally solvable (it is easy to verify that a star graph with $p+1$ leaves is universally solvable for $p$ robots). Therefore, instead of a linear number of edge additions ($\beta = \Theta(|V|)$), addition of vertices along with edges allow us to make the graph universally solvable by addition of just $2$ edges and $2$ vertices. This motivates us to look into a variation of the augmentation problem that takes into consideration the addition of both vertices and edges.

In this variation, we consider the possibility to add $\alpha$ many vertices and $\beta$ many edges to the graph.
%To enable easy reconfiguration among the $\alpha$ newly added vertices, we assume that they form a cycle $C_\alpha$ with an additional chord.

%In order to have easy reconfiguration among the $\alpha$ newly added vertices we can assume that these $\alpha$ vertices are all connected to each other (forming a clique).}
%Note that \cref{prop:2-conn-solvability} implies that instead of a clique, any 2-connected graph which is not a cycle (for instance a cycle with a chord with $\Theta(\alpha)$ edges) suffice. 
The $\beta$ edges can be added between the vertices of the input graph or these may connect a vertex from the input graph to one of the added $\alpha$ vertices or between the newly added $\alpha$ vertices.

We formally state this problem and then discuss cases of lower and upper bounds of $\alpha$ and $\beta$ over all graph classes.

\defproblem{\textsc{Graph Vertex and Edge Augmentation for Universal Solvability (VEAUS)}}{A connected graph $G(V,E)$, the number of robots $p$ (such that $(G, p)$ is not universally solvable), budget $\alpha$ for vertex addition, budget $\beta$ for edge addition}{Decide if the graph $G$ can be made universally solvable for the given $p$ by adding at most $\alpha$ vertices, and adding at most $\beta$ new edges to $G$ or between $G$ and the $\alpha$ vertices or between the newly added $\alpha$ vertices.}

\subparagraph*{Upper bounds.}Note that the upper bounds for {\sc Graph Edge Augmentation for Universal Solvability} hold true for {\sc Graph Vertex and Edge Augmentation for Universal Solvability} as well with $\alpha \ge 0$, since we can effectively keep these newly introduced vertices unused. We rewrite these for completeness:
\begin{itemize}
    \item For 2-connected graphs which are not cycles: The graph is already universally solvable.
    \item For cycles: $\beta \ge 1$ is sufficient for any $\alpha \ge 0$.
    \item For 2-edge connected but not 2-connected graphs: $\beta \ge 1$ is sufficient for any $\alpha \ge 0$.
    \item For 1-connected graphs: $\beta \ge p-2$ is sufficient for any $\alpha \ge 0$.
\end{itemize}

\subparagraph*{Lower bounds.} As discussed, the lower bound instance for {\sc Graph Edge Augmentation for Universal Solvability} allows solvability for $\alpha = 2$, $\beta = 2$. We show a stronger lower bound for a different set of graphs. 

\begin{definition}
Define the graph $Z_{\alpha, \beta}$ for some $\alpha$ and $\beta$ as :
\begin{itemize}
    \item $Z$ contains a vertex $v$, which we denote as the `center'.
    \item $(2\beta+1)$-many paths of length $(\alpha+1)$ are attached to the vertex.
\end{itemize}
\end{definition}

\begin{figure} [ht] 
    \centering
    \includegraphics[scale=0.65]{./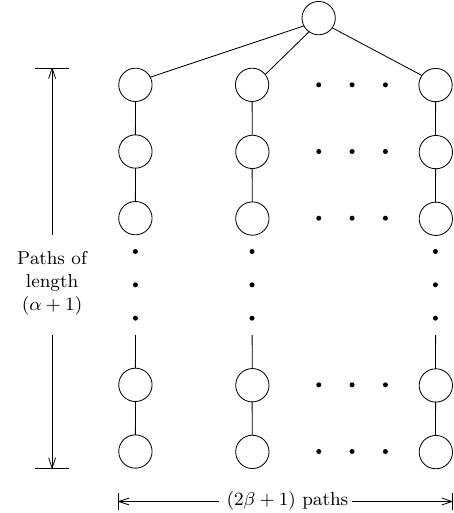}
    \caption{Schematic diagram of the graph $Z_{\alpha, \beta}$}\label{fig:beta_paths}
\end{figure}

We depict the graph $Z_{\alpha, \beta}$ in \cref{fig:beta_paths}. Therefore, the number of vertices $n$ of $Z_{\alpha,\beta}$ is equal to $1 + (2\beta + 1) \cdot (\alpha + 1)$. We will show that $Z_{\alpha, \beta}$ for $n$ robots is a NO-instance of {\sc Graph Vertex and Edge Augmentation for Universal Solvability} for vertex budget $\alpha$, and edge budget $\beta$. 

\begin{theorem}
    $Z_{\alpha, \beta}$ is a NO-instance of {\sc Graph Vertex and Edge Augmentation for Universal Solvability} for vertex budget $\alpha$, and edge budget $\beta$ and $1 + (2\beta + 1) \cdot (\alpha + 1)$ robots.
\end{theorem}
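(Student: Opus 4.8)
The plan is to argue by contradiction: assume that adding at most $\alpha$ vertices and at most $\beta$ edges turns $Z_{\alpha,\beta}$ into a graph $G'$ that is universally solvable for the $n = 1 + (2\beta+1)(\alpha+1)$ robots. I would first isolate a rigid piece of $G'$ that no augmentation can touch. Since $Z_{\alpha,\beta}$ has $2\beta+1$ arms (pendant paths of length $\alpha+1$ hanging off the center $v$; see \cref{fig:beta_paths}) and we add only $\beta$ edges, the newly added edges contribute at most $2\beta$ endpoints lying on arm-vertices. By pigeonhole, at least one arm $A = (a_1, a_2, \ldots, a_{\alpha+1})$ receives no endpoint of any added edge, and hence has no added vertex attached to it either (such an attachment would itself be an added edge with an endpoint on $A$). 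Consequently $A$ survives in $G'$ exactly as before: its only connection to the rest of $G'$ is the edge $a_1v$, so $v$ remains a cut vertex and $A$ is a \emph{pendant corridor} of length $L := \alpha+1$. This step is robust to how the budget is spent, since an edge with both endpoints in one arm, or with an endpoint at $v$, only touches fewer arms.

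Next I would count holes. The augmented graph has at most $n + \alpha$ vertices and exactly $n$ robots, so the number of empty vertices (holes) is at most $\alpha = L - 1 < L$, and these holes may lie anywhere. The heart of the proof is then the following claim, which I would isolate as a lemma: \emph{if a connected graph has a pendant corridor of length $L$ attached at a cut vertex and the total number of holes is at most $L-1$, then it is not universally solvable.} Applying this to $A$ in $G'$ contradicts the assumed universal solvability and finishes the theorem.

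The main obstacle is establishing this corridor lemma, and I would prove it through an order-preservation invariant rather than a raw counting bound, since a naive hole count is not enough (robots can flow into the corridor from $v$ behind a robot that is leaving). The key observation is that the corridor is a path, so two robots inside it can never pass each other there; any reversal of their relative order must be realized by routing one of them out through the mouth $a_1$ and back. I would show that exchanging two robots that sit in the corridor, while returning every other robot to its place, forces at some intermediate configuration at least $L$ empty vertices to exist — impossible with only $L-1$ holes — and therefore the transposition of the two deepest corridor robots is unreachable. Formally, I would associate to each configuration the left-to-right sequence of robots occupying $A$ and argue that every valid move preserves this sequence up to pushing or popping at the mouth $a_1$ (a rotation cannot use the degree-$1$ leaf $a_L$, and the only moves affecting $A$ slide robots along $A$ or across $a_1v$), so a full order reversal is impossible. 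The base case $L=1$ (a single leaf with no hole, where the leaf robot is frozen) and the tight case $L=2$, $h=1$ serve as sanity checks and illustrate why one extra hole is needed per unit of corridor depth. This mirrors, at the level of a single corridor, the two-equivalence-class phenomenon exploited in \cref{lem:odd-cactus}.
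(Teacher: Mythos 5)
Your proposal is correct, and its skeleton coincides with the paper's: the same pigeonhole over the $2\beta+1$ arms to find a pendant path untouched by the $\beta$ added edges (and hence by any of the $\alpha$ added vertices), and the same count of at most $\alpha = L-1$ holes in the augmented graph. Where you diverge is in the final obstruction. The paper picks the configuration pair (robot $1$ at the leaf of the untouched arm) versus (robot $1$ at the center) and observes that, since no robot can ever get between robot $1$ and the leaf, the first time robot $1$ reaches the center the entire arm --- all $\alpha+1$ vertices --- must be vacant, contradicting the hole count in one line. Your corridor lemma instead targets the transposition of the two deepest corridor robots, which needs the order-preservation invariant plus a case split (either the leaf robot exits the corridor, forcing $L$ simultaneous vacancies, or it never exits and the relative order can never reverse), as well as a separate base case for $\alpha=0$ where the corridor is a single frozen leaf. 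Both arguments rest on the same fact --- robots cannot pass one another on a pendant path, so the deepest robot cannot escape without the whole arm being empty --- but the paper's choice of witness configuration lets it skip the transposition machinery entirely. Your version is sound and would go through; it simply proves a slightly stronger, more laborious statement than the theorem requires.
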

\begin{proof}
    Our proof exploits the fact that $Z_{\alpha, \beta}$ has large number of long paths. We denote by $n = 1 + (2\beta + 1) \cdot (\alpha + 1)$, the number of vertices of $Z_{\alpha, \beta}$

    Let the paths of length $\alpha + 1$ be named as $P_1, P_2, \ldots, P_{2\beta + 1}$. By pigeonhole principle, there must exist at least one such $P_i$ for which none among the $\beta$ newly added edges are incident on any vertex of $P_i$.

    We now look into the leave vertex along the path $P_i$, call this $u$. Consider a configuration $S$ with robot $1$ located on $u$, and consider another configuration $T$ with robot $1$ located on the center $v$. We will show that $T$ is not reachable from $S$.

    We define depth of a vertex as the distance from the center $v$. Note that the first time instance when robot $1$ is at a depth $h \in \{0,1,2,\ldots,\alpha+1\}$, the $P_i$ must have at least $(\alpha + 1 - h)$-many vacant vertices: all vertices between the current location of robot $1$ and the leaf vertex $u$. Therefore, the first time robot $1$ reaches the center (i.e., $h = 0$), there must be $\alpha + 1$ many empty vertices in $P_i$ (entire $P_i$ should be vacant). However this is a contradiction, because the augmented graph has at most $\alpha + n$ vertices and exactly $n$ robots, making it impossible to have more than $\alpha$ vacant vertices. This completes the proof.
\end{proof}

This gives us a lower bound of $\alpha = \Omega \left(\dfrac{|V|}{\beta}\right)$ over all graphs $G(V,E)$ to be a YES-instance for any number $p \le |V|$ of robots with vertex budget $\alpha$ and edge budget $\beta$. We note down a particularly interesting consequence of this, for $\beta = \Theta(\sqrt {|V|})$.

\begin{corollary}\label{cor:root-n-aug-lb}
    There is an infinite family of graphs, such that for some graph $G(V,E)$ in it, for $p=|V|$ robots, and edge budget $\beta = \Theta(\sqrt{|V|})$, universal solvability requires a vertex budget of $\alpha = \Omega(\sqrt{|V|})$.
\end{corollary}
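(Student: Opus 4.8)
The plan is to instantiate the family $Z_{\alpha,\beta}$ from the preceding theorem with balanced parameters. Specifically, for each integer $k \ge 1$ I would take $G = Z_{k,k}$, so that the number of vertices is
\[
|V| = n = 1 + (2k+1)(k+1) = 2k^2 + 3k + 2 = \Theta(k^2),
\]
and therefore $k = \Theta(\sqrt{|V|})$. Letting $k$ range over the positive integers yields the desired infinite family. With this choice, setting the edge budget to $\beta = k$ gives exactly $\beta = \Theta(\sqrt{|V|})$, which is the regime named in the corollary. The number of robots is $p = n = |V|$, matching the hypothesis; note that $Z_{k,k}$ is a tree, so in the fully packed case $p = |V|$ no robot can move (there is neither an empty vertex for a path move nor a cycle for a rotation), confirming that $(G, p)$ is a legitimate (non-universally-solvable) input to {\sc VEAUS}.

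Next I would upgrade the conclusion of the preceding theorem slightly. As stated it shows $Z_{\alpha,\beta}$ is a NO-instance for the matched budgets $(\alpha,\beta)$, but its proof in fact establishes the stronger claim that $Z_{a,b}$ remains a NO-instance for every vertex budget $\alpha' \le a$ and every edge budget $\beta' \le b$. Indeed, the pigeonhole step only requires that the number of paths touched by the added edges, at most $2\beta' \le 2b < 2b+1$, be strictly less than the total number of paths, so an untouched path $P_i$ still exists; and the emptying argument only requires that the number of available vacancies, at most $\alpha' \le a$, be strictly less than the $a+1$ vacancies needed to clear $P_i$. Both inequalities hold whenever $\beta' \le b$ and $\alpha' \le a$, so the contradiction goes through unchanged.

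Applying this strengthened statement to $G = Z_{k,k}$ with edge budget $\beta = k$, I would conclude that every vertex budget $\alpha' \le k$ fails to make $G$ universally solvable. Hence achieving universal solvability with edge budget $\beta = k = \Theta(\sqrt{|V|})$ forces $\alpha \ge k+1 = \Omega(\sqrt{|V|})$, which is precisely the corollary; equivalently, this is the specialization of the general bound $\alpha = \Omega(|V|/\beta)$ at $\beta = \Theta(\sqrt{|V|})$.

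The main obstacle I anticipate is exactly this gap between the theorem as literally stated (matched budgets only) and what the corollary needs (a lower bound on $\alpha$ for a fixed edge budget $\beta$, ranging over all admissible $\alpha$). Resolving it requires the observation that the theorem's pigeonhole-plus-emptying argument is monotone in both budgets, so the NO-instance conclusion persists for all smaller budgets; without this remark one would only rule out the single pair $(\alpha,\beta) = (k,k)$ rather than every vertex budget up to $k$. A secondary point to handle carefully is the $\Theta$ versus $\Omega$ bookkeeping: one must verify that the chosen $\beta = k$ is simultaneously $O(\sqrt{|V|})$ (so the pigeonhole hypothesis $\beta \le k$ is met) and $\Omega(\sqrt{|V|})$ (so it genuinely lies in the claimed regime), both of which follow from $n = \Theta(k^2)$.
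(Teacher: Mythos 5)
Your proof is correct and matches the paper's (implicit) derivation: the paper obtains the corollary by instantiating $Z_{\alpha,\beta}$ with balanced parameters $\alpha=\beta=\Theta(\sqrt{|V|})$ exactly as you do. The ``upgrade'' you worry about is in fact already built into the statement, since \textsc{VEAUS} is defined with budgets of \emph{at most} $\alpha$ vertices and \emph{at most} $\beta$ edges, so a NO-instance for budget $(a,b)$ is automatically a NO-instance for all smaller budgets; your monotonicity remark is a harmless reconfirmation of this.
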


\section{Conclusion}\label{sec:conclusion}
In this paper, we studied the problem of $\USOLR$ that asks whether a graph $G(V,E)$ is \emph{universally solvable} for $p$ robots. We derive that equivalence classes of mutually reachable configurations are of the same size. This leads to a randomized algorithm for the $\USOLR$ problem, running in time $\OO(|V| + |E|)$. We provide a derandomization of the above algorithm that runs in time $\OO(p \cdot(|V| + |E|))$, and further optimize it to run in time $\OO(p \cdot |E|)$ for sparse graphs with $|E| \le p \cdot |V|$, and in time $\OO(|E|)$ for other graphs. 

Lastly, we consider the {\sc EAUS} problem, where given a connected graph $G$ that is not universally solvable for $p$ robots, the question is to determine, if for a given $\beta$, at most $\beta$ edges can be added to $G$ to make it universally solvable for $p$ robots. Here, we analyse the problem based on the connectedness of the input graph. In particular, we provide an upper bound of $p-2$ as well as a lower bound of $\Theta (p)$ on $\beta$ for general graphs. Similarly, for the {\sc VEAUS} problem, we find examples of graphs that require large enough vertices and edges to be added in order to make the graph universally solvable.

An interesting problem for future work in the context of the two augmentation problems {\sc EAUS} and {\sc VEAUS} is to study these two problems on disconnected graphs.

%----------------------------------------------
\newpage
\bibliography{main}

\newpage
\appendix
\section{Missing details of Sections~\ref{sec:accumulation}-\ref{sec:optimised_deterministic}}
\subsection{Pseudocode of~\cref{alg:acc-alg}}

{\begin{algorithm}[H]
    % \caption{\textsf{Acc} ~~~ \textbf{Input:} $G(V,E)$, a configuration $S$ of $p$ robots}\label{alg:acc-alg}
    \caption{\hfill \textbf{Input:} $G(V,E)$, a configuration $S$ of $p$ robots}\label{alg:acc-alg}
    \begin{algorithmic}[1]
        \State{$(v_1, \ldots, v_n) \gets$ BFS ordering from running \textsf{BFS} on $G$}
        \State{$\mathfrak T \gets$ BFS tree from running \textsf{BFS} on $G$}
        \State{$V_p \gets \{v_1, \ldots, v_p\}$}
        \While{$S(R) \ne V_p$}\label{line:acc-loop}
            \State{$a \gets \min\{i \in [p] \mid v_i \notin S(R)\}$}
            \State{$b \gets \min\{i \in [n] \setminus [p] \mid v_i \in S(R)\}$}
            \State{$P \gets $ unique path from $v_b$ to $v_a$ in $\mathfrak T$}
            \State{$S' \gets $ configuration obtained by pushing $S$ along $P$}
            \State{$S \gets S'$}
        \EndWhile
        \State{\Return $S'$}
    \end{algorithmic}
\end{algorithm}
}

\subsection{Proof of~\cref{lem:acc-resp-perm}}

\begin{proof}
    Notice that \cref{alg:acc-alg} is a deterministic algorithm and the execution is only dependent on the set of vacant and occupied vertices (and not the exact robots occupying the vertices). Therefore, at each iteration of \cref{alg:acc-alg}, we have $S(i) = S'(\pi(i))$. This is an invariant throughout the run of the \cref{alg:acc-alg} and this completes the proof.
\end{proof}

\subsection{Pseudocode of \cref{alg:random-alg}}

\begin{algorithm}[H]
    % \caption{\textsf{$\mathcal A$} ~~~ \textbf{Input:} $G(V,E)$,  $p = $ number of robots}\label{alg:random-alg}
    \caption{\hfill \textbf{Input:} $G(V,E)$,  $p = $ number of robots}\label{alg:random-alg}
    \begin{algorithmic}[1]
        \State{$n_{cc} \xleftarrow{}$ Number of connected components of $G$ (computed using DFS)}
        \If{$n_{cc} > 1$}
            \State{\Return NO}
        \EndIf
        \State{Define the Identity Configuration $S_I$ as $S_I(i) = v_i$ for all $i \in [p]$}
        \State{Sample uniform random configuration $S_R$ from the set of all configurations}
        % \State{Check if $S_R$ is reachable from $S_I$ in $G$ by invoking $\mathcal A$}
        % \State{If reachable, return YES, otherwise return NO}
        \State{\Return output of Algorithm $\mathcal A$ on input $(G(V, E), p, (S_I, S_R))$} \Comment{\cref{prop:feas-check-algo}}
    \end{algorithmic}
\end{algorithm}

\subsection{Proof of \cref{thm:deter-alg}}
\begin{proof}
Given an instance $(G(V, E),p)$ for $\USOLR$, we state the deterministic algorithm, denoted as \cref{alg:deter-alg}, as follows. \cref{alg:deter-alg} takes as input an instance $(G(V, E)\text{, } p)$, and first finds the connected components of the underlying graph $G$ using a Depth-First Search. If there are $2$ or more connected components in $G$, then \cref{alg:deter-alg} returns that it is a NO-instance and terminates. \cref{alg:deter-alg} then checks the reachability of $S_{\pi^*_t}$ from $S_I$ by invoking as a subroutine, Algorithm $\mathcal A$ (\cref{prop:feas-check-algo}) for the $\FRMP$ instance $(G(V, E) \text{, } p \text{, } (S_I, S_{\pi^*_t}))$ for all $t \in [p-1]$. If $\mathcal A$ outputs a NO for any of these $p-1$ instances (i.e., if there exists $t \in [p-1] \text{ such that }S_{\pi^*_t}$ is not reachable from $S_I$), then \cref{alg:deter-alg} outputs that $G$ is not universally solvable for $p$ robots (i.e., $(G(V, E),p)$ is a NO-instance), else it outputs that $G$ is universally solvable for $p$ robots (i.e., $(G(V, E),p)$ is a YES-instance).

We formally describe \cref{alg:deter-alg} below.

\begin{algorithm}[H]
    % \caption{\textsf{$\mathcal D$} ~~~ \textbf{Input:} $G(V,E)$,  $p = $ number of robots}\label{alg:deter-alg}
    \caption{\hfill \textbf{Input:} $G(V,E)$,  $p = $ number of robots}\label{alg:deter-alg}
    \begin{algorithmic}[1]
        \State{$n_{cc} \xleftarrow{}$ Number of connected components of $G$ (computed using DFS)}
        \If{$n_{cc} > 1$}
            \State{return NO}
        \Else
            \State{Define the Identity Configuration $S_I$ as $S_I(i) = v_i$ for all $i \in [p]$}
            \For{$t \gets 1 \text{ to } p-1$} \\
                \State{
                    Define $\pi^*_t$ such that $\pi^*_t(i) = 
                        \begin{cases}
                          i & \text{if $i \neq t$ and $i \neq t+1$ } \\
                          t + 1 & \text{if $i = t$}\\
                          t & \text{if $i = t+1$}
                        \end{cases}
                    $\\
                }
                \State{$temp \gets$ output of Algorithm $\mathcal A$ on input $(G(V, E), p, (S_I, S_{\pi^*_t}))$} \Comment{\cref{prop:feas-check-algo}}
                
                \If{$temp = \text{NO}$}
                    \State{return NO}
                \EndIf
            \EndFor
            \State{return YES}
        \EndIf
    \end{algorithmic}
\end{algorithm}

% \subparagraph*{Time Complexity.} 
The correctness of \cref{alg:deter-alg} is immediate from \cref{lem:naive-det-condition}. 
We now analyse the time complexity of \cref{alg:deter-alg}. \cref{alg:deter-alg} takes as input a graph $G (V, E)$ and the number of robots $p$, and first performs a Depth-First Search to find the number of connected components of $G$, which takes at most linear time, $\mathcal O(|V| + |E|)$.  
% If $G$ is connected, then it further runs $p-1$ reachability checks for $(S_I, S_{\pi^*_t})$ for $t \in [p-1]$ by invoking algorithm $\mathcal A$, each of which runs in time $\mathcal O(|V| + |E|)$ (\cref{prop:feas-check-algo}). 
Further, each of the $p-1$ reachability checks for $(S_I, S_{\pi^*_t})$ for $t \in [p-1]$ by invoking algorithm $\mathcal A$, run in time $\mathcal O(|V| + |E|)$ (\cref{prop:feas-check-algo}). Thus, the worst-case time complexity of \cref{alg:deter-alg} is $\mathcal O( (|V| + |E|) + (p-1)\cdot (|V| + |E|)) = \mathcal O(p\cdot(|V| + |E|))$. 
% {\color{red}Let $n = |V|$ and $m = |E|$. Now, if we consider the graph to be connected (as disconnected graphs are trivially not universally solvable), we have $m \geq n-1$, therefore, the run time of \cref{alg:deter-alg} can be expressed as $\mathcal O(p \cdot (n + m)) = \mathcal O(p.m)$.}\todo{Pranav: Discuss this with Anubhav if this last part can be omitted; Anubhav: I think this is unnecessary}

This completes the proof of the theorem.
\end{proof}

\subsection{Proof of~\cref{lem:bounded-edges}}
\begin{proof}
    Since the graph $G$ is connected, it can be decomposed into a tree of \emph{biconnected components} (maximal biconnected subgraphs) called the \emph{block-cut tree}~\cite{10.1145/362248.362272} of the graph. The biconnected components are attached to each other at shared vertices that are cut vertices.
    %The reason we consider the biconnected components of $G$ is that we derive the bound  $|E| < p \cdot |V|$ based on the condition that sizes of the 2-connected components of $G$ are less than $p$, but it would leave single edges between a pair of cut vertices in $G$ unaccounted for (as a single edge between a pair of vertices is not usually considered a 2-connected components, but is considered a biconnected component). Therefore, we consider the set of biconnected components of $G$ rather than just 2-connected components, as it would also include single edges joining two cut vertices along with the 2-connected components of $G$. 
    Let $B_g = \{a_1, a_2, ..., a_{b_1}, a_{b_1 + 1}, ..., a_b\}$ be the set of biconnected components of $G$ such that $a_1, a_2, ..., a_{b_1}$ are simple cycles, while $a_{b_1+1}, a_{b_1+2} ..., a_{b}$ are biconnected components that are not cycles. %Note that the set $B_g$ has at least one element since $|V| \ge 2$ (as $p \ge 2)$. \todo{Anubhav: Pranav, why is this last line required?} 
    % \todo{Anubhav: Pranav, $a_1$, $a_2$, are integers right? how can they be \textsl{biconnected components}? Changed this.}
    % Let $B_g$ be the set of the $b$ biconnected components of $G$, having sizes $a_1, a_2, \ldots, a_b$. Assume  $a_1, a_2, \ldots, a_{b_1}$ correspond to simple cycles, while $a_{b_1+1}, a_{b_1+2} \ldots a_{b}$ correspond to biconnected components that are not cycles. Note that the set $B_g$ has at least one element since $|V| \ge 2$ (as $p \ge 2$). 
    
    Let $s_i$ denote the number of vertices in $a_i$. Given that $G$ does not have any biconnected component, which is not a cycle, of size at least $p$, we have,
    \begin{equation}\label{eq: 2-cc-sizes}
        s_i < p \text{ for all } i \in \{b_1+1, b_1+2, \ldots, b\}.
    \end{equation}

    % Next, we bound the sum of the sizes of the biconnected components. Consider a graph whose vertices are the biconnected components of $G$, and two vertices are connected if the corresponding biconnected components share a common cut vertex. {\color{red}Observe that this graph would be a tree as biconnected components are maximal biconnected subgraphs, so there cannot be a cycle in this graph \todo{Anubhav: factually incorrect, change graph definition}}. \todo{Pranav: is this explanation sufficient?}We denote this tree as $T_{B_g}$. Note that this is different from the block cut tree as block cut tree has vertices for biconnected components as well as cut vertices. Now, root the tree at any arbitrary vertex. Then, for any vertex (except the root), it is connected to its parent vertex, indicating that they share a cut-vertex. {\color{red}This cut vertex is counted twice when taking the sum of sizes of all biconnected components, once in the child biconnected component and once in the parent one \todo{Anubhav: factually incorrect}}. Thus, each edge in this tree $T_{B_g}$ corresponds to a cut-vertex being counted twice. Since  $T_{B_g}$ has $b$ vertices (one for each biconnected component), it has $b-1$ edges, it corresponds to an extra count of $b-1$ over the total number of vertices in the graph $G$. Therefore, we have the following relation: \tod{Anubhav: (IMPORTANT) Entire paragraph needs to be rewritten}

    Next, we bound the sum of the sizes of the biconnected components. For this, we construct a rooted tree $T$ whose vertices are the biconnected components as follows: choose a biconnected component of $G$ that shares only one cut-vertex with other biconnected components. Let us denote this component as $a^*$. It is easy to prove that there exists at least one such biconnected component that shares only one cut-vertex with other biconnected components, as otherwise the entire graph would be just one biconnected component. Now, we consider the distance of this biconnected component to itself to be 0. For any other biconnected component $a'$, we define its distance from $a^*$ to be the \emph{minimum} number of biconnected components (including $a^*$) that need to be traversed to go from $a^*$ to $a'$. For example, the biconnected components that share the single cut-vertex of $a^*$ are at a distance 1 from $a^*$ since we just need to traverse $a^*$ to reach them. Now, we construct the tree $T$ based on these distances, with the vertex corresponding to the component $a^*$ (with distance $0$) as the root vertex. Then, the vertices corresponding to the distance-1 components are connected as children of the root. Similarly, the distance-2 components are connected as children of the distance-1 component that is adjacent to them, in the shortest path to $a^*$. 
    
    % Now, consider any non-leaf vertex $v$ of $T$. Let its number of children in $T$ be denoted by $n_1$. Then, the $n_1$ biconnected components corresponding to these share a common cut-vertex with the biconnected component corresponding to $v$. Note that this cut-vertex gets counted $n_1+1$ times in the sum of sizes of all biconnected components, hence, it gets counted $n_1$ times extra. This observation can be generalized for all non-leaf vertices of $T$, thus the total number of such repetitions of cut-vertices susm up to the total number of edges in $T$, which is $b-1$. Further, note that only the cut vertices get counted more than once in the sum, while no internal vertex (a vertex not shared between two or more biconnected components) gets counted more than once. Thus, we now have the following relation equation for the sum of sizes of the biconnected components of $G$.
    Now, consider any non-leaf vertex $\nu$ of $T$. Let its number of children in $T$ be denoted by $d$, call these $\nu_1, \nu_2, \ldots, \nu_d$. The biconnected component associated with $\nu_i$, for any $i \in [d]$ shares exactly one cut vertex $u_i$ of $G$ with the biconnected component associated with $\nu$. For any other $j \ne i$, the biconnected component associated with $\nu_j$ shares some cut vertex $u_j$ of $G$ with the biconnected component associated with $\nu$; $u_j$ could be either same as $u_i$, or some other cut vertex in $G$. 
    
    Assume that the biconnected component associated with $\nu$ shares the same cut vertex with each of $\nu_1, \nu_2, \ldots, \nu_{d_1}$, for some $d_1 \le d$. Call this cut vertex $u$. Hence $d_1$ many edges, $\nu\nu_1, \nu\nu_2, \ldots, \nu\nu_{d_1}$, in $T$ are associated with the cut vertex $u$. When taking the sum of the sizes of all biconnected components, $\sum \limits_{i=1}^d s_i$, this cut vertex $u$ gets counted exactly $d_1+1$ times. This is same as the number of edges in $T$ that are associated with the cut vertex $u$, plus $1$. Note that this holds true for all cut vertices $u$ of $G$. For a vertex in $G$ that is not a cut vertex, the summation $\sum \limits_{i=1}^d s_i$ counts it exactly once.

    Let $q$ be the number of cut vertices in the graph $G$. The number of edges in $T$ is $b-1$. Since each edge of the tree $T$ corresponds to exactly one cut vertex in $G$, the total number of times all cut vertices in $G$ are counted in the summation $\sum \limits_{i=1}^d s_i$ is $q + (b-1)$. Moreover, since vertices in $G$ that are not cut vertices are counted exactly once in the summation $\sum \limits_{i=1}^d s_i$, these are counted $(|V| - q)$ times in total. Adding these counts we get,
    
    % \begin{equation}
    %     \sum_{i=1}^{b} s_i = |V| + b - 1.
    % \end{equation}.
    \begin{equation*}
        \sum_{i=1}^{b} s_i = |V| + b - 1.
    \end{equation*}
    
    Further, to obtain a bound on the number of biconnected components $b$, consider constructing the graph $G$ incrementally. We begin with a single vertex and iteratively add the remaining $|V|-1$ vertices along with their incident edges to incrementally create the biconnected components. Initially, with a single vertex, the count of biconnected components is zero. At each such step, adding a new vertex and connecting it to the existing graph can either expand an existing biconnected component or introduce at most one new biconnected component. Therefore, the total number of biconnected components is bounded above by the number of vertices minus one, i.e., $b \leq |V| - 1$. 
    Thus, we have,
    \begin{equation}\label{eq: cc-sizes-sum}
        \sum_{i=1}^{b} s_i = |V| + b - 1 \leq |V| + (|V| - 1) - 1 = 2|V| - 2.
    \end{equation}

    Now, we bound the number of edges $|E|$ in $G$. Each edge appears in exactly one biconnected component. Since the components $a_i$ for $i \in \{1, 2, ..., b_1\}$ are cycles, they have $s_i$ edges each. For the components  $a_i$ for $i \in \{b_1+1, b_1+2, ..., b\}$, each has at most $\binom{s_i}{2}$ edges. Thus,
    \begin{align*}
    % \begin{split}
        |E| & \leq \sum_{i=1}^{b_1}s_i + \sum_{i=b_1+1}^{b}\binom{s_i}{2} & \\
        & < \frac{p}{2} \cdot \left(\sum_{i=1}^{b_1}s_i \right) + \sum_{i=b_1+1}^{b} \frac{s_i^2}{2} & \text{(as $p \ge 2$)} \\
        & < \frac{p}{2} \cdot \left(\sum_{i=1}^{b_1}s_i \right) + \frac{p}{2} \cdot \left(\sum_{i=b_1+1}^{b} s_i \right) & \text{(using \cref{eq: 2-cc-sizes})} \\
        & = \frac{p}{2} \cdot \left(\sum_{i=1}^{b}s_i \right) & \\
        & \le \frac{p}{2} \cdot (2|V|-2) & \text{(using \cref{eq: cc-sizes-sum})}\\
        & < p \cdot |V| &
    % \end{split}
    \end{align*}

    This completes the proof of the lemma.
\end{proof}

\subsection{Pseudocode of \cref{alg:opt-alg}}

\begin{algorithm}[H]
    \caption{\textsf{} \hfill \textbf{Input:} $G(V,E)$, $p$ = number of robots}
    \label{alg:opt-alg}
    \begin{algorithmic}[1]
        \State $n_{cc} \gets$ Number of connected components in $G$ (computed using DFS)
        \If{$n_{cc} > 1$}
            \State \Return NO
        \ElsIf{$|E| \ge p \cdot |V|$} \Comment{Must be a YES-instance, by \cref{obs:no-instances-sparse}}
            \State{\Return YES}
        \Else
            \State{\Return output of \cref{alg:deter-alg} run on $G$ and $p$}
        \EndIf
    \end{algorithmic}
\end{algorithm}

\end{document}